\DeclareMathOperator*{\argmin}{argmin}
\def\defterm#1{\textbf{#1}}
\def\set#1{\bs{#1}}
\def\bs#1{\boldsymbol{#1}}
\newcommand{\Xvariable}{X}
\newcommand{\Yvariable}{Y}
\newcommand{\Zvariable}{Z}
\newcommand{\dset}{D}
\newcommand{\test}{T}
\newcommand{\statistic}[1]{\phi(#1)}
\newcommand{\cstat}{Q}
\newcommand{\xvalue}{x}
\newcommand{\yvalue}{y}
\newcommand{\zvalue}{z}
\definecolor{nav}{RGB}{0,0,128}
\newcommand*{\indep}{%
	\mathbin{%
		\mathpalette{\@indep}{}%
	}%
}
\newcommand*{\nindep}{%
	\mathbin{
		\mathpalette{\@indep}{/}
	}
}
\newcommand*{\@indep}[2]{%
	\sbox0{$#1\perp\m@th$}%        
	\sbox2{$#1=$}%                 
	\sbox4{$#1\vcenter{}$}%       
	\rlap{\copy0}%                
	\dimen@=\dimexpr\ht2-\ht4-.2pt\relax
	\kern\dimen@
	\ifx\\#2\\%
	\else
	\hbox to \wd2{\hss$#1#2\m@th$\hss}%
	\kern-\wd2 %
	\fi
	\kern\dimen@
	\copy0 %                      
}
\newcommand{\SC}{\textsc{SC}\xspace}
\newcommand{\system}{\textsc{CODED}\xspace}
\newcommand{\SCs}{\textsc{SC}s\xspace}
\newcommand{\IC}{\mathcal{I}}
\newcommand{\DC}{\mathcal{D}}
\newcommand{\prob}{\mathbb{P}}
\newcommand{\varset}{\set{V}}
\newcommand{\na}[1]{{{\textcolor{nav}{\{\{Nathan: \bf #1\}\}}}\xspace}}
\newcommand{\oliver}[1]{{{\textcolor{blue}{\{\{Oliver: \bf #1\}\}}}\xspace}}
\newcommand{\graphoid}{\mathcal{G}}
\newcommand{\niepert}{\mathcal{A}}
\newtheorem{Definition}{Definition}
\newtheorem{Lemma}{Lemma}
\newtheorem{Theorem}{Theorem}
\newtheorem{Proposition}{Proposition}
\newtheorem{Conjecture}{Conjecture}
\newtheorem{Problem}{Problem}
\newtheorem{example}{Example}
\renewcommand\footnotetextcopyrightpermission[1]{} % removes footnote with conference information in first column
\begin{document}

% ****************** TITLE ****************************************

\title{Detecting Data Errors with Statistical Constraints}
% \numberofauthors{1}
%  \author{
%   \alignauthor Jing Nathan Yan$^{\dagger}$, Oliver Schulte$^{\ddagger}$, Jiannan Wang$^{\ddagger}$, Reynold Cheng$^{\dagger}$ \\
%  \affaddr{$^{\ddagger}$Simon Fraser University \quad \quad \quad \quad \quad \quad $^{\dagger}$The University of Hong Kong}\\
%   \email{$^{\ddagger}$\{jnwang, oschulte\}@sfu.ca \quad \quad $^{\dagger}$\{jyan,ckcheng\}@cs.hku.hk}
%   }
%   \maketitle
% \author{Jing Nathan Yan, Oliver Schulte, Jiannan Wang, Reynold Cheng 
% % \affaddr{ $^{\ddagger}$Simon Fraser University \quad \quad \quad \quad \quad \quad $^{\dagger}$The University of Hong Kong} \\
% % \affaddr{ $^{\ddagger}$\{jnwang, oschulte\}@sfu.ca \quad \quad $^{\dagger}$\{jyan,ckcheng\}@cs.hku.hk}
% }

% \author{Jing Nathan Yan}
% \affiliation{Simon Fraser University}
% \email{jya149@sfu.ca}

% \author{Jiannan Wang}
% \affiliation{Simon Fraser University}
% \email{jnwang@sfu.ca}

\author{Jing Nathan Yan}
\authornote{Work done while visiting Simon Fraser University}
\affiliation{The University of Hong Kong}
\email{jyan@cs.hku.hk}

\author{Oliver Schulte}
\affiliation{Simon Fraser University}
\email{oschulte@sfu.ca}

\author{Jiannan Wang}
\affiliation{Simon Fraser University}
\email{jnwang@sfu.ca}

\author{Reynold Cheng}
\affiliation{The University of Hong Kong}
\email{ckcheng@cs.hku.hk}

\begin{abstract} 
A powerful approach to detecting erroneous data is to check which potentially dirty data records are incompatible with a user's domain knowledge. Previous approaches
allow the user to specify domain knowledge in the form of logical constraints (e.g., functional dependency and denial constraints). We extend the constraint-based approach by introducing a novel class of statistical constraints (\SCs). An \SC treats each column as a random variable, and enforces an independence or dependence relationship between two (or a few) random variables. Statistical constraints are expressive, allowing the user to specify a wide range of domain knowledge, beyond traditional integrity constraints. Furthermore, they work harmoniously with downstream statistical modeling. We develop \system, an S\underline{C}-\underline{O}riented \underline{D}ata \underline{E}rror \underline{D}etection system that supports three key tasks: (1) Checking whether an \SC is violated or not on a given dataset, (2) Identify the top-$k$ records that contribute the most to the violation of an \SC, and (3) Checking whether a set of input \SCs have conflicts or not. We %explore %how to address each task and 
present effective solutions for each task. Experiments on synthetic and real-world data illustrate how \SCs apply to error detection, and provide evidence that \system performs better than state-of-the-art approaches. 

\end{abstract}
\maketitle
%abstract
% \thispagestyle{plain}
\section{Introduction}\label{sec:intro}

Error detection, the discovery of erroneous values from a database, has been a long-standing problem~\cite{abedjan2016detecting}.
%It has been extensively studied in the past but is still far from being solved~\cite{abedjan2016detecting}. 
Data errors can yield wrong decision making and biased machine-learning models. In 2016, IBM estimated that poor data quality costs the U.S. economy around \$3 trillion per year~\cite{dirty-data-cost}. As more companies %and more companies 
assign data a central place in their business, %detecting 
the impact of data errors continues to grow.
%is becoming an increasingly more important problem. 

%In the database community, error detection is mainly solved through a constraint-based approach~\cite{DBLP:journals/ftdb/IlyasC15}. Given a data, a user first specifies a constraint to describe what a database should be like and then check whether the database satisfies the constraint or not. If not, it finds the data values that violate the constraint and then return them to the user. 

%If the database violates the constraint, it means that the  to describe what a database should be like, and then detect what data values violate the constraint. For example, consider 

%The database and statistics communities have both studied this problem but they adopt different methodologies. 
%\iffalse

%\fi

Constraint-based error detection is one of the most widely used approaches~\cite{DBLP:journals/ftdb/IlyasC15}. 
A user represents domain knowledge by specifying a constraint that describes what the data should look like; the system detects which parts of the data violate the constraint. Take functional dependencies (FDs) as an example for a hospital table 
%. Consider a restaurant table 
with \textsf{\small Name}, \textsf{\small Address}, \textsf{\small Zipcode}, and \textsf{\small City} attributes. Suppose a user specifies a FD:  $\textsf{\small Zipcode}  \rightarrow \textsf{\small City}$, which means that if two records have the same $\textsf{\small Zipcode}$ value, then they must have the same value in the \textsf{\small City} Column. By comparing every pair of records in the table, if two records have the same entry in $\textsf{\small Zipcode}$ but different entries in $\textsf{\small City}$, an error that violates the constraint is detected. %, thus an error is detected.    

%In the database community, error detection is mostly solved by detecting the violations of integrity constraints (ICs)~\cite{DBLP:journals/ftdb/IlyasC15}. 

% An interesting observation %finding
% is that 

% \na{Add some discussions about AFD here}

\sloppy
Existing approaches typically use integrity constraints (ICs), such as functional dependencies~(FDs)~\cite{bohannon2005cost}, conditional functional dependencies~(CFDs)~\cite{bohannon2007conditional}, and denial constraints~(DCs)~\cite{chu2013discovering}, to express user's domain knowledge.  They model each row as a real-world entity and compare two (or a few) entities to detect violations. In the above example, each row represents a hospital, 
%restaurant, 
and an FD violation is detected by comparing two hospitals. 
%restaurants. %This is (partly) because that the database community takes an entity-relationship (ER) model to view a database, which models a database as a collection of entities. 

\fussy

%In the database community, error detection is mainly solved through a constraint-based approach~\cite{DBLP:journals/ftdb/IlyasC15}. They use a constraint to describe what data should be like, and then detect which parts of the data violate the constraint. Existing constraint-based approaches such as functional dependencies~(FDs)~\cite{DBLP:conf/sigmod/BohannonFFR05}, conditional functional dependencies~(CFDs)~\cite{bohannon2007conditional}, and denial constraints~(DCs)~\cite{DBLP:conf/icde/ChuIP13} are all \emph{row-oriented}. They model each row as an entity and each column as an entity's attribute. An error is detected by looking at two (or a few) entities only. 

%As we can see here, each error is detected by looking at two restaurants only.

%In the statistics community, people adopt a different way to model data. They model each column as a random variable, and then assume that there are (conditional) independent/dependent relationships between random variables. z

\begin{table}[t]%\vspace{-1em}
\centering \small\fontsize{8pt}{8pt} \selectfont
\caption{A comparison between ICs and SCs.\label{long}} \vspace{-1em}
{\renewcommand{\arraystretch}{1.5}%
\begin{tabular}{ |@{\;}c@{\;}|@{\;}l@{\;}|@{\;}l@{\;}| } \hline 
  & {\bf Data Model} & {\bf Violation Definition}\\ \hline
 \hline 
 {\bf IC} & {\fontfamily{ptm}\selectfont Row as an entity} & {\fontfamily{ptm}\selectfont Between two (a few) entities}\\  \hline
 {\bf SC} & {\fontfamily{ptm}\selectfont Column as a random variable} & {\fontfamily{ptm}\selectfont Between two (a few) random variables} \\ 
 \hline
\end{tabular}}\vspace{-1.5em}
\end{table}

In this paper, we introduce a novel class of constraints, named \emph{Statistical Constraints} (\SCs). %Unlike ICs, 
\SCs model each column as a {\em random variable}, and compare two (a few) random variables to detect violations. An \SC specifies an \emph{independence} or \emph{dependence} relationships between random variables: 
%. For example, a user may specify that 
\emph{X and Y are dependent (or independent) given Z}, where X, Y, and Z is either a single random variable or a set of variables. Intuitively, this \SC means that knowing X will (or will not) reveal any information about Y given Z. \SCs have been studied extensively in machine learning, statistics, and AI~\cite{Niepert2013,pearl2}. The main %contributions of our 
purpose of this paper is to i) introduce \SCs and associated techniques to data cleaning and ii) study how to apply \SCs to error detection. 

In Figure~\ref{fig:car-error}(a), suppose that a user specifies $\SC_1 = $ \emph{``\textsf{RID} and \textsf{Price} are independent given \textsf{Model}''}, which means that for each car model, knowing the random row id of a car should not reveal any information about the car's price. However, on this example dataset, for  \textsf{Model = ``BMW X1''}, \textsf{RID} is sorted by \textsf{Price}. That is, if we know a BMW X1 car has a smaller \textsf{RID}, it will have a lower price. \textsf{RID} and \textsf{Price} are highly dependent 
%(rather than independent) 
given \textsf{Model = ``BMW X1''}, which violates $\SC_1$. This kind of sorting error may cause issues in machine learning. For example, a sorting error was found in KDD-Cup 2008, which dealt with cancer detection from mammography data. A team found this error in the training set and utilized the dependence relationship between the ``Patient ID'' and the class label to win the competition~\cite{rosset2010medical}. In reality, however, the ``Patient ID'' should not be used to predict whether a patient has cancer or not.

%Suppose the data is integrated from two sources, ``bmw.com'' and ``toyota.com''. However, in ``bmw.com'', the data is sorted by \textsf{\small Price}. When merging them together, \textsf{\small ID} and \textsf{\small Price} become highly dependent given \textsf{\small Model = ``BMW X1''}\footnote{\scriptsize In fact, this kind of error has happened in a Kaggle competition. A team found this error in the training set and utilized the dependence relationship between ID and class label to win the competition~\cite{error-kaggle}. }, which violates $\SC_1$. 

In Figure~\ref{fig:car-error}(b), suppose that a user specifies $\SC_2 = $\emph{``\textsf{Model} and \textsf{Color} are independent''}, which means that knowing the model of a car should not reveal any information about the car's color. However, on this example dataset, each %when a car's model is 
\textsf{``Toyota Prius''} is assigned the color ``White'' (highlighted in Figure~\ref{fig:car-error}(b)). Therefore, \textsf{Model} and \textsf{Color} are highly dependent %(rather than independent), 
which violates $\SC_2$. This kind of error is common in practice since real-world data often has missing values (e.g., the colors of all Toyota cars are missing) and a data collector may impute the missing values with a default value (e.g., ``White''). If a data user is not aware of the imputation process, data analysis may lead to faulty conclusions.
%task on the dataset with this kind of error.  

{\em Advantages} of \SCs include the following. (1) {\em Interpretability.} (In)dependencies are easily interpreted by the user as causal or statistical ir(relevance) among attributes.
%, without the need for training in statistics.%As Turing award winner Judea Pearl has stated: ``The mind has evolved to use conditional independence as the most stable way to represent knowledge about the external world." 
Although %the probabilistic semantics of an 
\SCs have a probabilistic semantics, 
%refers to a quantitative probability distribution, 
specifying an (in)dependence constraint does not require the user to specify or even consider numeric values. Many tools for visualizing dependencies are available~\cite{Hall2009}. The large field of graphical models is based on the insight that (in)dependence constraints can be represented in terms of purely qualitative graphical relations among variables~\cite{pearl2}. (2) {\em Detectability.} The field of statistical hypothesis testing has developed many methods
%, built up over a century, 
for deciding whether a given data set violates an %independence 
\SC~\cite{wasserman2013all}. These methods provide parameters for controlling false positive and false negative error rates. (3) {\em Expressive power.}  \SCs are fundamentally different from ICs and complement them by allowing a user to express additional knowledge about a domain. 
%Integrity constraints are powerful but have limitations in the relationships among attributes that they allow a user to express. 
Specifically, \SCs allow a user to express not only that some attributes are relevant to others but also {\em ir}relevance relationships. 
%For example, a user may have the domain knowledge that a set of columns are dependent or independent, but it may be hard to express such relationships using ICs. 
For example, consider a Car table with \textsf{Model}, \textsf{Fuel Efficiency}, \textsf{Price}, and \textsf{Color} columns (see Figure~\ref{fig:running-example}). A user can directly specify an dependence \SC: \emph{\textsf{Fuel-Efficiency}  and \textsf{Price} are dependent}, or an independence \SC: \emph{\textsf{Model}  and \textsf{Color} are independent}, but is not clear how to use ICs to specify these relationships. Section~\ref{sec:scvsic} discusses the expressive power of \SCs formally.

\SCs are expected to hold only approximately, %to a degree, 
which allows for exceptions and increases their applicability. Thus they provide a strong foundation for defining approximately satisfied constraints.
%in a principled way. 
Our experiments compare \SCs with approximate FDs. Because of their statistical basis, \SCs work harmoniously with downstream statistical modeling
%a user may have a dirty dataset but want to use it to build a reliable statistical model 
(e.g., building a regression model). \SC-based error detection attempts to highlight unreasonable relationships among columns; statistical modeling seeks to learn correct relationships among columns. Suppose that a statistical model learns from dirty data that \emph{\textsf{\small Fuel-Efficiency} and \textsf{\small Price} are independent}. If a user suspects that there are errors that cause the model to learn such an unreasonable relationship, she can %immediately  there may exist data errors
specify an \SC ``\emph{\textsf{\small Fuel-Efficiency} and \textsf{\small Price} are dependent}'' to detect the errors.  

\begin{figure}[t]
   \centering
   \includegraphics[width=1\linewidth]{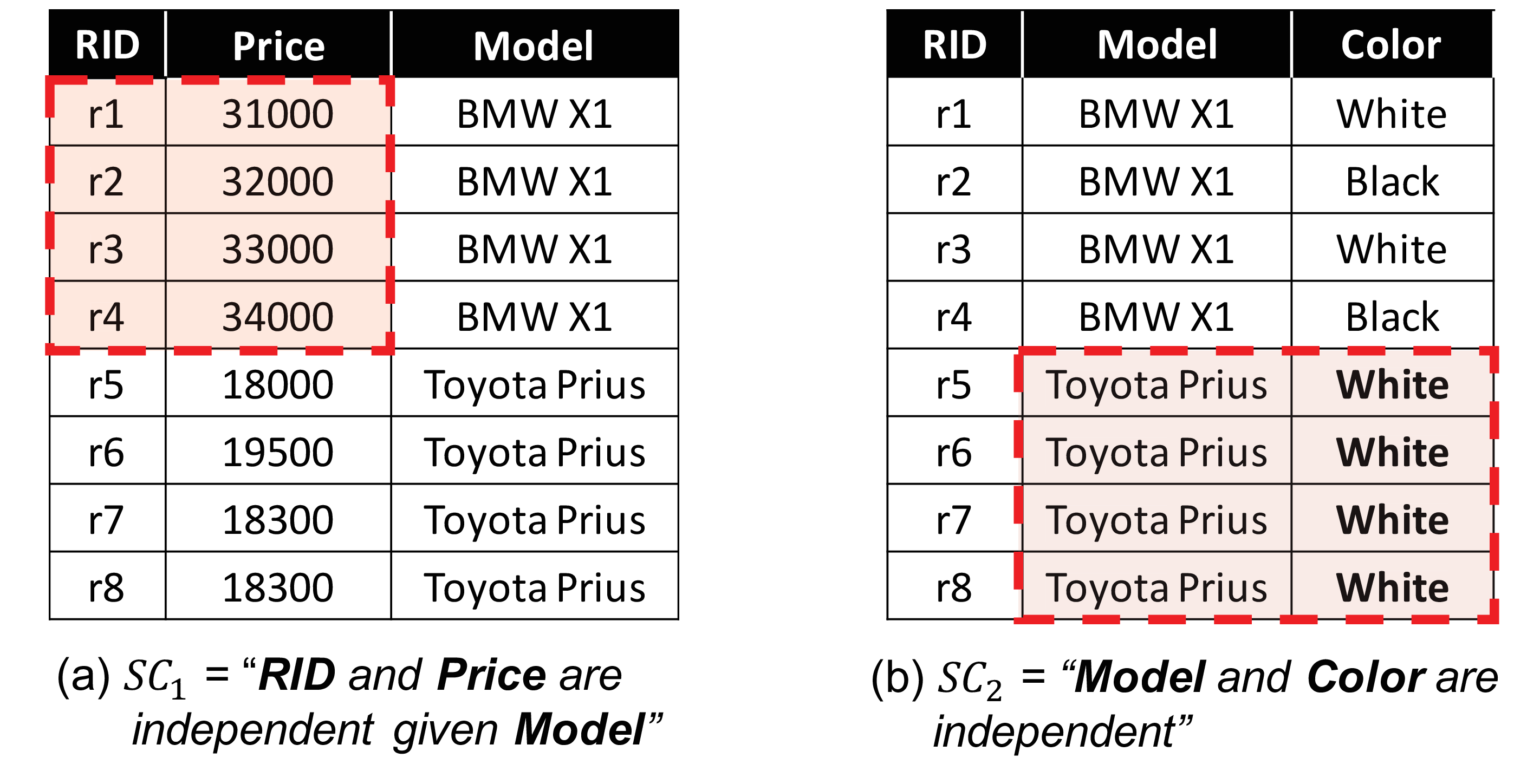} \vspace{-2em}
   \caption{Examples of data errors detected by \SCs}\label{fig:car-error}
\end{figure}

%Like row-oriented constraints, where various constraints such as CFDs and DCs are proposed for error detection, we believe that there should exist a large variety of column-oriented constraints suitable as well. 

%In this paper, we consider four kinds of \SCs: (1) $A$ and $B$ are dependent; (2) $A$ and $B$ are independent; (3) $A$ and $B$ are dependent given $C$; (4) $A$ and $B$ are independent given $C$, where $A$, $B$, and $C$ are three columns. There are certainly some other kinds of \SCs that are more complex (e.g., $A$ and $BC$ are independent given $DE$).  We choose simple types because they are easy to be interpreted and provided, and they are useful to detect errors in many scenarios (e.g., see the examples in Figure~\ref{fig:car-error}). The expressive power of these \SCs is sufficient to evaluate the viability and effectiveness of our statistical approach. Furthermore, 
%We find that 
%even restricted \SCs give rise to challenging research problems, as follows.
 % we have already faced a number of very challenging problems.  %There are certainly some other kinds of \SCs that are more complex (e.g., $A$ and $BC$ are independent given $DE$). We will study how to extend our work to them in the future work.
 
 \vspace{.25em}

{\em Research Challenges.} We revisit a number of important research problems for constraint-based error detection in the context of \SCs. %In this paper, we identify a number of challenging research problems about \SCs, and then propose both effective and efficient solutions to them in order to make \SC-based error detection work well in practice.

\vspace{.25em}

%{\noindent \bf Error Detection.} Given a (dirty) dataset and an \SC, we need to check whether the dataset violates the \SC or not. The key challenge is that an \SC does not hold absolutely. We leverage Hypothesis Testing to solve the problem. We show how well-established statistical metrics can be used to quantify the degree to which an \SC holds or fails in a given dataset.

{\noindent \bf \SC Violation Detection.} Given a (dirty) dataset and an \SC, we need to check whether the dataset violates the \SC.  %or not. 
The main novel challenge is that an \SC does not hold absolutely on a dataset, but only to a degree. 
% For example, consider a Car dataset and an $\SC = $\emph{``\textsf{Model} and \textsf{Color} are independent''}. This independence relationship only holds on the dataset at a certain probability (e.g., 95\%), and we need to compute the probability. 
We leverage Hypothesis Testing and  well-established statistical metrics to quantify the degree to which an \SC is compatible with a dataset.

%Since we do not have access to the population but only a sample, we need to use the sample to determine whether \textsf{Model} and \textsf{Color} are independent or not w.r.t. the population. We leverage Hypothesis Testing to solve the problem. We show how well-established statistical metrics can be used to quantify the degree to which an \SC holds or fails in a given dataset.

%For example, consider an $\SC = $\emph{``\textsf{Model} and \textsf{Color} are independent''}. Suppose that we detect that ``\textsf{Model} and \textsf{Color} are independent with probability 50\%''. If this number is smaller than  probability     

\vspace{.25em}

{\noindent \bf Error Drill Down.} To explain an \SC violation, we further study how to drill down into erroneous rows rather than just return erroneous columns. We propose two effective top-$k$ algorithms, aiming to return the top-$k$ most likely erroneous rows. For example, suppose a user specifies an \SC: \emph{\textsf{\small Model} and \textsf{\small Color} should be independent}, but the dataset violates the independence relationship. The user may want to manually examine a small number of rows from the data in order to reason about the violation. Given a user-specified threshold $k$ (e.g., 100), the top-$k$ algorithms can identify the top-100 rows which have the biggest impact on the violation. %, and return them to the user. 

\vspace{.25em}

{\noindent \bf Consistency Checking.} Given a set of \SCs, the consistency-checking problem is to determine whether they have conflicts. This problem has been studied in the AI literature~\cite{pearl2,studeny1990conditional}, but it is still not known whether it is decidable~\cite{Niepert2013}. We prove that the consistency-checking algorithm based on Graphoid Axioms~\cite{pearl2} is pseudo-polynomial in $n$ and $\ell$, where $n$ is the number of input \SCs and $\ell$ is the largest number of variables that occurs in any of the input \SCs. The proof gives a theoretical insight %analysis 
into the efficiency of the consistency-checking algorithm.

We develop \system, a S\underline{C}-\underline{O}riented \underline{D}ata \underline{E}rror \underline{D}etection system, which implements \SC violation detection, error drill down, and consistency checking. 
%We evaluate different components % and compare them with baselines. 
Extensive experiments on synthetic and real-world datasets demonstrate the advantages of \system over state-of-the-art baselines. To summarize, our contributions are:

%The experiments on real-world data show that \system outperforms previous approaches for the situations where ICs fall short. 
%
 \vspace{-.25em}
%
%
%\vspace{-.5em}
%
\begin{itemize}[leftmargin=*]%\setlength\itemsep{-.1em}
    \item We study the use of Statistical Constraints (\SCs) in error detection, and identify situations in which \SCs complements ICs. 
    \item We study how to check whether a dataset violates an \SC, %or not, 
    and apply Hypothesis Testing to quantify the degree to which an \SC holds or fails.
    \item We study how to drill down into erroneous rows w.r.t. an \SC violation, and propose two top-$k$ algorithms to solve it. 
    \item We study the consistency-checking problem and provide a theoretical justification for the efficiency of the algorithm based on Graphoid Axioms.
   \item We conduct extensive experiments on real-world datasets. The results demonstrate the superiority of our approaches over the state-of-the-art baseline approaches. 
\end{itemize}

\vspace{-.25em}

The remainder of this paper is organized as follows. We review related work in Section~\ref{sec:rw}, and formally define \SCs in Section~\ref{sec:sc}. Section~\ref{sec:sys-arch} presents the \system system architecture. We study the \SC violation detection problem in Section~\ref{sec:detection}, the error-drill-down problem in Section~\ref{sec:drill-down}, and the consistency-checking problem in Section~\ref{cck}.  Section~\ref{sec:exp} reports our experimental findings. We conclude in Section~\ref{con}.

%. Section~\ref{sec:detection} presents how to detect \SC violations. Section~\ref{sec:drill-down} discusses how to drill down into erroneous records.

% \oliver{very nice overall. My main question is whether the \SCs are really so different from some ICs. For example, multi-valued dependencies are like probabilistic independencies, and functional dependencies can be represented as a combination of independence and dependence constraints. Presenting the \SCs as generalizing deterministic ICs may make them less alien and also explains the relationship between \SCs and ICs more precisely IMO. I understand though that that in the intro we don't want to get into too many complexities.}

%Study the fundamental properties of Statistical Constraints, including consistency, implication for data error detection;

%Drill down into data errors rather than return erroneous columns;

%Extensive experiments on both synthesized and real-world dataset.

%\na {Please see the google docs for more details of the outline of the Introduction }
  % introduction
\section{Related Work} \label{sec:rw}
% \todo[inline]{be nice to PC member's work Wenfei and Suciu}
%\linespread{0.95}
% \oliver{I still like the idea of having a tree diagram to structure the different tasks and approaches}

% \na{To Nathan: add KDD new paper about missing value error, add Dan's paper about visualization, add Yeye's paper about column error detection}

\begin{figure}[t]
   \centering
   \includegraphics[width=0.75\linewidth]{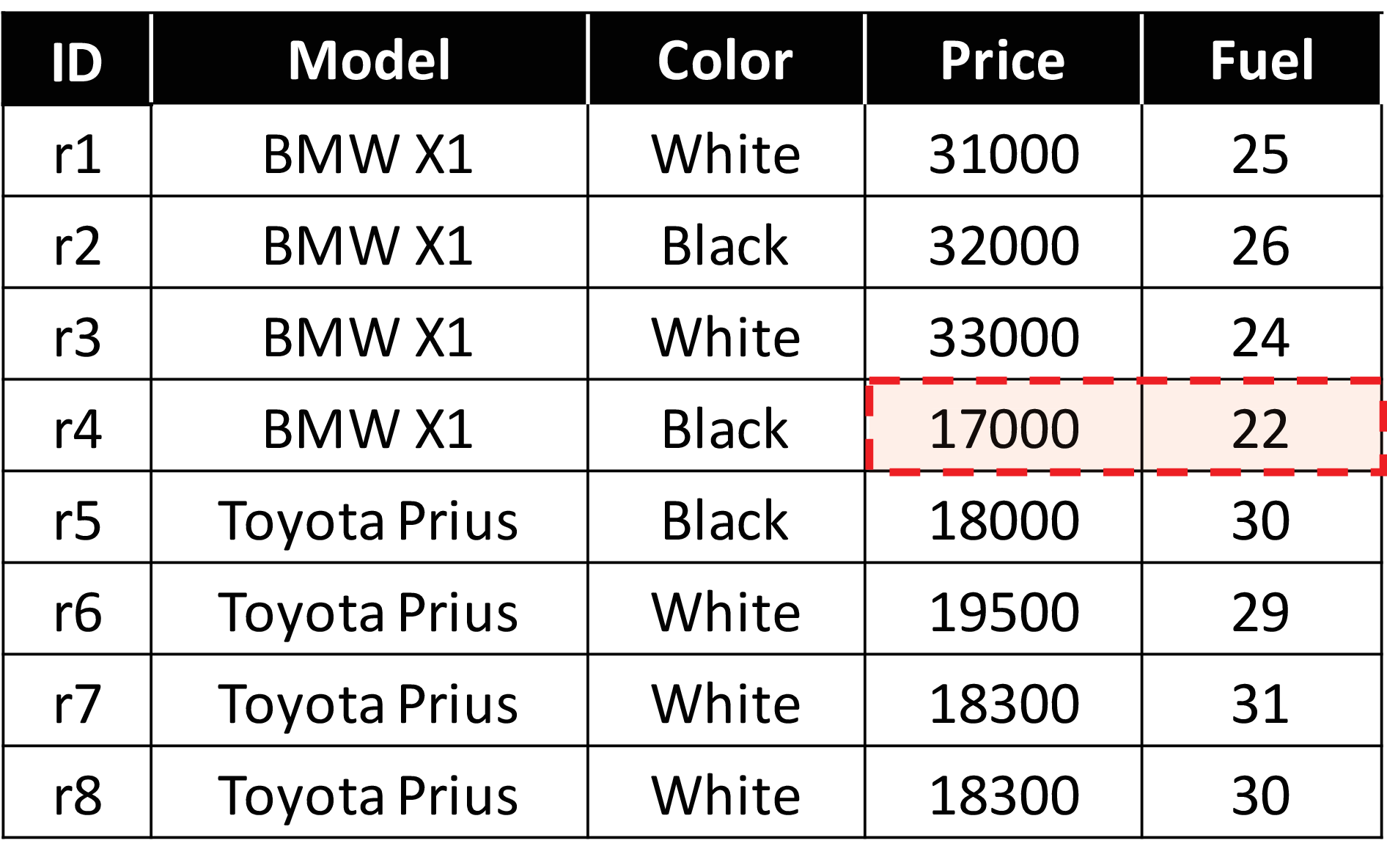} \vspace{-1em} 
   \begin{align*}
       & \SC_1 = \textsf{Model} \indep \textsf{Color} & & \SC_2 = \textsf{Fuel} \indep \textsf{Model}\\
       & \SC_3 = \textsf{Price} \indep \textsf{Fuel} \mid \textsf{Model}  &  & \SC_4 = \textsf{Model} \nindep \textsf{Fuel}~|~\textsf{Price}
   \end{align*}\vspace{-2em}
   \caption{ A Running Example ($\SC_2$ could be removed. The highlighted cells are the data errors detected by \system w.r.t. $\SC_4$.). }\label{fig:running-example}\vspace{-1em}
\end{figure}

{\noindent \bf Constraint-based Error Detection.} A recent survey paper classifies error-detection approaches into four categories: constraint-based, pattern-based, outlier detection, and de-duplication~\cite{abedjan2016detecting}. Our paper belongs to the first category, where a user specifies a constraint, and the system detects erroneous values that violate the constraint. Various forms of ICs are proposed for error detection~\cite{DBLP:journals/ftdb/IlyasC15}. %Their basic idea is to  treat data as a set of entities and use logic-based reasoning to detect errors. 
In this paper, we introduce a new class of statistical constraints (\SCs) and study how to apply \SCs to error detection. 
%Unlike ICs, \SCs treat data as a set of random variable and use statistical reasoning to detect errors. 
\SCs are more expressive than ICs in certain situations, and work harmoniously with downstream statistical modeling. A more detailed comparison between ICs and \SCs can be found in Section~\ref{sec:scvsic}.  

\SCs have been studied in the statistic and AI literature~\cite{dawid1979conditional,Pearl2000}. Existing studies assume that data is clean and explore how to infer \SCs from the data. The derived \SCs can be used for statistical modeling and casual inference. Ilyas et al. show that \SCs (involving pairs of columns) are effective in improving query optimization~\cite{ilyas2004cords}. Salimi et al. leverage (conditional) independence relationships among attributes to resolve bias in OLAP queries~\cite{DBLP:conf/sigmod/SalimiGS18}. However, none of these works has studied how to apply \SCs to detect data errors. 

%At a high level, ICs treat each row as an entity and use logic-based reasoning to detect errors; SCs treat each column as a random variable and use statistical reasoning to detect errors. Therefore, \SCs are more expressive than ICs in certain situations, and work harmoniously with downstream statistical modelling. Please note that   

%for a survey). Moreover, some of the ideas in SCs have been explored in the context of ICs. For example, the multi-valued dependencies~\cite{fagin1977multivalued} can be used to express set independence relationships; approximate and soft FDs~\cite{ilyas2004cords,huhtala1998efficient,huhtala1999tane,papenbrock2015functional, mandros2017discovering} can be applied to relax the hard constraint of FDs and make them tolerate exceptions.\oliver{still need to discuss the row-orientation idea} In comparison, ICs treat each row as an entity and use logic-based reasoning to detect errors; SCs treat each column as a random variable and use statistical reasoning to detect errors. 

%\vspace{.25em}

%\noindent {\bf Outlier Detection.} Outlier detection often enforces a column-oriented constraint on a single column (e.g., any datapoint that is more than 3 standard deviation is an outlier)~\cite{hellerstein2008quantitative}. There are also some outlier detection approaches that leverage the relationships between columns to detect outliers~\cite{riahi2015model,mariet2016outlier,DBLP:conf/kdd/DasS07}. However, none of them allows the user to specify a set of SCs explicitly and then guides the user to detect the errors based on the SCs.  

\vspace{.5em}

{\noindent \bf Other Error Detection Approaches.} There are other kinds of error detection approaches. Some works study how to leverage regular expressions or external resources to detect wrong data types or inconsistent data formats in a single column~\cite{DBLP:conf/vldb/RamanH01,yeye18auto,yeye18sync}. Unlike them, we detect the violation of the statistical relationship among multiple columns. Outlier detection typically leverages the data distribution of a single column to detect errors (e.g., any datapoint that is more than 3 standard deviation is an outlier)~\cite{hellerstein2008quantitative}. Some works also study how to leverage the relationships between multiple columns to detect outliers~\cite{das2008anomaly,riahi2015model,mariet2016outlier,DBLP:conf/kdd/DasS07}. However, none of them allows a user to specify a set of SCs explicitly and then guides the user to detect the errors based on the SCs. KATARA~\cite{chu2015katara} relies on the semantic relationships between columns (e.g., column A is the \emph{capital} of column B) to detect errors. While this approach is powerful, it requires the availability of knowledge bases and crowdsourcing. %Instead, our approach does not need these external resources. 

\vspace{.25em}

%{\noindent \bf Statistical Constraints.} \SCs have been studied in the statistic and AI literature for a long time~\cite{dawid1979conditional,Pearl2000}. Existing studies assume that data is clean and explore how to infer \SCs from the data. The derived \SCs are served for statistical modeling and casual inference. In database systems, Ilyas et al. have shown that \SCs can be used to improve query optimization~\cite{ilyas2004cords}. However, none of these work has studied how to apply \SCs to error detection. %In our paper, we consider a new application of \SCs, where data is dirty and we seek to use \SCs to detect data errors. 

{\noindent \bf Data Repairing.} Data cleaning has two important tasks: error detection and data repairing. Our paper focuses  on the former task. Next, we review some related work about data repairing~\cite{chu2016data}. Data repairing studies the problem of correcting erroneous values. We classify existing works into two categories. 

(1) The first category of works combines data repairing with error detection. Given a set of input constraints, they study how to find the minimum change to the data to satisfy the input constraints. Most works assume that the input constraints are ICs~\cite{chu2013holistic,DBLP:conf/sigmod/BohannonFFR05,DBLP:conf/icdt/KolahiL09}. Prokoshyna et al. consider both ICs and statistical distortion~\cite{DBLP:journals/pvldb/ProkoshynaSCMS15}. Note that statistical distortion~\cite{dasu2012statistical} is a different notion from \SCs. It is defined as the difference between two data distributions (the original data and the repaired data).

(2) The second category treats error detection as a black box and focus on data repairing~\cite{mayfield2010eracer,yakout2013don,DBLP:journals/pvldb/RekatsinasCIR17}. ERACER~\cite{mayfield2010eracer} takes as input a dataset with missing attribute values, and utilizes belief propagation and relational dependency networks to infer the missing values.  SCAREd~\cite{yakout2013don} takes as input a dataset with a subset of rows identified as dirty and leverages maximal likelihood for data repairing. HoloClean~\cite{DBLP:journals/pvldb/RekatsinasCIR17} assumes that erroneous values have been marked by users and then automatically generates a probabilistic program to correct the erroneous values. These approaches are orthogonal to our work. For example, a user can first run our approach to detect erroneous values and then use one of the solutions to repair those values.

\vspace{.5em}

{\noindent \bf Analysis-aware Data Cleaning.} There are some studies on how to clean data for downstream statistical and SQL analysis. SampleClean is a framework that enables fast and reliable query processing over dirty data~\cite{krishnan2015sampleclean}. ActiveClean provides a progressive data-cleaning framework for statistical modeling~\cite{krishnan2016activeclean}. QOCO~\cite{DBLP:conf/sigmod/BergmanMNT15} uses an oracle to clean data that have largest impact on SQL queries.

\vspace{.5em}

{\noindent \bf Error Explanation.} Error explanation, which aims to provide intuitive explanations to the user about the errors, is another hot topic in data cleaning~\cite{wu2013scorpion,wang2017qfix,DBLP:conf/sigmod/RoyS14,DBLP:conf/sigmod/WangDM15}. For example, Scorpion~\cite{wu2013scorpion} uses sensitivity analysis to identify a set of data records that contribute the most to an outlier. Our error-drill-down component adopts a similar idea, but the difference is that we identify the top-$k$ data records that most influence the violation of an \SC rather than an outlier. 

%related work
\vspace{-.5em}
\section{Statistical Constraints}\label{sec:sc}
In this section, we first present a formal definition of \SCs, then discuss how to facilitate users to discover \SCs from data, finally discuss several advantages of using \SCs in practice. %  provide some background definitions, and a formal definition of statistical constraints. We describe how \system leverages \SCs for error-detection.

\subsection{Definitions}\label{scd}

A \emph{variable} $\Xvariable$ is an attribute or feature that can be assigned a value $\xvalue$ from a fixed domain; we write $\Xvariable = \xvalue$ to denote an assignment of a value to a variable. We use boldface vector notation for finite sets of objects. So for example $\set{\Xvariable} = \set{\xvalue} \equiv (\Xvariable_1 = \xvalue_1,\Xvariable_2 = \xvalue_2, \ldots, \Xvariable_n = \xvalue_n)$ denotes the \emph{joint assignment} where variable $\Xvariable_i$ is assigned value $\xvalue_i$, for each $i=1,\ldots,n$. In relational terms, a variable corresponds to an attribute or column, and a joint assignment to a tuple or row. 

% \oliver{should we have translation table here so they don't miss this point?}

% \na{To Oliver: what do you mean by translation table here? Can you give me one example, so I can draw one?}

A \emph{random variable} requires a distribution $P(\Xvariable = \xvalue)$ that assigns a probability to each domain value in the domain of $\Xvariable$. A \emph{joint distribution} $P(\set{\Xvariable} = \set{\xvalue})$ assigns a probability to each joint assignment. %
%a vector of values, one for each variable in the list $\set{\Xvariable}$. %
%
% A data table or relation $\dset$ defines an \emph{data distribution} $P_{\dset}$, where $P_{\dset}(\set{\Xvariable} = \set{\xvalue})$ is the number of rows where the $\set{\Xvariable}$ columns have values $\set{\xvalue}$, divided by the total number of rows in the table. We abbreviate the probability assignment $P(\set{\Xvariable} = \set{\xvalue}) \equiv P(\set{\xvalue})$ when the relevant set of variables is clear from context. 
%  \oliver{do we actually use that?}
%
Given a joint distribution for a set of variables $\set{\Xvariable}$, the \emph{marginal distribution} over a subset $\set{\Yvariable}$ is defined by $P(\set{\Yvariable}=\set{\yvalue}) \equiv \sum_{\set{\zvalue}} P(\set{\Yvariable}=\set{\yvalue},\set{\Zvariable}=\set{\zvalue}).$ Here $\set{\Zvariable} = \set{\Xvariable}- \set{\Yvariable}$ contains the set of variables in $\set{\Xvariable}$ but not in $\set{\Yvariable}$, and the comma notation $\set{\Yvariable}=\set{\yvalue},\set{\Zvariable}=\set{\zvalue}$ denotes the conjunction of two joint assignments. %In this paper, the summation applies to variables with discrete domains, and should be replaced by integrals for variables with continuous domains. 
The \emph{conditional probability} of an assignment $\set{\Xvariable} = \set{\xvalue}$ given another assignment $P(\set{\Yvariable})=\set{\yvalue}$ is defined as
$P(\set{\Xvariable} = \set{\xvalue}|\set{\Yvariable}=\set{\yvalue}) \equiv P(\set{\Xvariable} = \set{\xvalue},\set{\Yvariable}=\set{\yvalue})/P(\set{\Yvariable}=\set{\yvalue}).$

A key notion of this paper is the concept of {\em conditional independence among sets of variables.} Intuitively, a set of variables $\set{\Xvariable}$ is independent of another set $\set{\Yvariable}$ given a third conditioning set $\set{\Zvariable}$ if knowing the values of the variables in $\set{\Yvariable}$ adds no information about the values of the variables in $\set{\Xvariable}$, beyond what can be inferred from the values in the set $\set{\Zvariable}$. Formally, for three disjoint sets $\set{\Xvariable}, \set{\Yvariable}, \set{\Zvariable}$ and assume  $P(\set{\Zvariable} = \set{\zvalue}) > 0$ for all values~$\set{\zvalue}$  we define %
\begin{eqnarray*}
\set{\Xvariable} \indep \set{\Yvariable} | \set{\Zvariable} & \equiv & \\ P(\set{\Xvariable}=\xvalue,\set{\Yvariable}=\yvalue|\set{\Zvariable} = \zvalue) & = \\ P(\set{\Xvariable}=\xvalue|\set{\Zvariable}=\zvalue) \times P(\set{\Yvariable}=\yvalue|\set{\Zvariable} = \zvalue) & \mbox{for all } \xvalue, \yvalue, \zvalue.
\end{eqnarray*}
%\vspace{-2em}

% \oliver{should we define independence in terms of conditional probabilities?}

% \na{To Oliver: I think we need to define them accordingly.}

% \oliver{not sure what you mean but let's hold off for now}

%\oliver{should we discuss context-sensitive independence here?}

%\na{To Oliver: Maybe we don't have to talk about it here? context-sensitive independence is for the explanation of errors. }

%\oliver{ok let's hold off for now and see if we need it later}

%We call $\set{\Xvariable} \indep \set{\Yvariable} | \set{\Zvariable}$ a \emph{(conditional) independence statement}. The negation of a conditional independence statement is a \emph{(conditional) dependence statement}, written $\set{\Xvariable} \nindep \set{\Yvariable} | \set{\Zvariable}$. Thus conditional dependence holds if for {\em some} values $\set{\xvalue},\set{\yvalue},\set{\zvalue}$, we have $P(\set{\Yvariable} = \set{\yvalue}|\set{\Zvariable} = \zvalue)>0$ and $P(\set{\Xvariable}=\xvalue|\set{\Yvariable}=\yvalue,\set{\Zvariable} = \zvalue) \neq P(\set{\Xvariable}=\xvalue|\set{\Zvariable} = \zvalue).$ A set of Statistical Constraints (\SCs) comprises a set of independence statements and dependence statements. 

We call $\set{\Xvariable} \indep \set{\Yvariable} | \set{\Zvariable}$ a \emph{(conditional) independence statement}. A \emph{(conditional) dependence statement}, written $\set{\Xvariable} \nindep \set{\Yvariable} | \set{\Zvariable}$, holds if for {\em some} values $\set{\xvalue},\set{\yvalue},\set{\zvalue}$, we have $P(\set{\Yvariable} = \set{\yvalue},\set{\Zvariable} = \zvalue)>0$ and $P(\set{\Xvariable}=\xvalue|\set{\Yvariable}=\yvalue,\set{\Zvariable} = \zvalue) \neq P(\set{\Xvariable}=\xvalue|\set{\Zvariable} = \zvalue).$ A set of Statistical Constraints (\SCs) comprises a set of independence statements and dependence statements.

% \na{To Oliver, do we need to talk a few more things about conditional independent for continuous variables?}
%  $\textsf{Model} \indep \textsf{Color}$
 
%   $\textsf{Model} \nindep \textsf{Color}$

\begin{Definition}[Statistical Constraints] Fix a set of variables $\varset = \{V_1,\ldots,V_n\}$. A finite set of statistical constraints $\Sigma = \IC \cup \DC$ comprises
	\begin{enumerate}
		\item a finite set of independence \SCs, $\IC = \{\phi_1,\ldots,\phi_p\}$, where each $\phi_i$ is of the form $\set{\Xvariable} \indep \set{\Yvariable} | \set{\Zvariable}$, and
		\item a finite set of dependence \SCs, $\DC = \{\lambda_1,\ldots,\lambda_q\}$, where each $\lambda_i$ is of the form $\set{\Xvariable} \nindep \set{\Yvariable} | \set{\Zvariable}$ 
	\end{enumerate}
\end{Definition}

% \sloppy

Figure~\ref{fig:running-example} shows four \SCs: $\Sigma = \{\SC_1, \SC_2, \SC_3, \SC_4\}$, where $\IC = \{\SC_1, \SC_2, \SC_3\}$ contains three independence \SCs and $\DC = \{\SC_4\}$ contains one dependence \SC. For instance, $\SC_3 = \textsf{Price} \indep \textsf{Fuel} \mid \textsf{Model}$ means that $\textsf{Price}$ is independent of $\textsf{Fuel}$ given $\textsf{Model}$; $\SC_4 = \textsf{Price} \nindep \textsf{Fuel}$ means that $\textsf{Price}$ is dependent on $\textsf{Fuel}$. It is useful to distinguish the different types of \SC statements shown in Table~\ref{table:ci-types}.

% \fussy

%\paragraph{Types of CI Statements} 

\begin{table}[t]
\caption{Different types of independence \SCs ~\cite{Niepert2013}. 
% Our terminology follows~\cite{Niepert2013}. 
% The same concepts apply to dependency statements.
}
\vspace{-1em}
\begin{center}
\resizebox{0.5\textwidth}{!}{
\begin{tabular}{|c|c|c|}
\hline
    \textbf{Type} & \textbf{Definition} & \textbf{Example}\\\hline
\hline
\textbf{Elementary} & 
\begin{tabular}{c}
$\Xvariable \indep \Yvariable|\set{\Zvariable}$ \\
   \mbox{ where } $\Xvariable,\Yvariable$ are single variables
\end{tabular} & $\textsf{Price} \indep \textsf{Fuel} \mid \textsf{Model} $ \\\hline
\textbf{Marginal} & $\Xvariable \indep \Yvariable$ & $\textsf{Price} \indep \textsf{Fuel}$ \\\hline
\textbf{Saturated}& 
\begin{tabular}{c}
$\Xvariable \indep \Yvariable|\set{\Zvariable}$ \\
 \mbox{ where } $\Xvariable,\Yvariable,\set{\Zvariable}$ comprise all variables
\end{tabular}
%} 
& $\textsf{Price} \indep \textsf{Fuel} \mid \textsf{Model},\textsf{Color} $ \\\hline
\end{tabular}
}
\end{center}
\label{table:ci-types} \vspace{-1em}
\end{table}%

\subsection{Statistical vs. Integrity Constraints} \label{sec:scvsic} 
% Despite differences in tradition and terminology, integrity and statistical constraints often share underlying intuitions. 
Integrity constraints enforce deterministic Boolean conditions on {\em sets} (relations), whereas statistical constraints impose (in)equalities on {\em distributions}, defined by the {\em cardinalities} of the relevant sets or relations. Both logical and probabilistic dependencies represent {\em inferential relevance}: reasoning from values in one set of columns to values in another. The next proposition compares the logical strength of integrity and statistical constraints that assert dependencies. We say that one constraint implies another if any data table $\dset$ that satisfies the former also satisfies the latter. A table $\dset$ satisfies a probabilistic constraint if the data distribution $P_{\dset}$ does. 
%The typical pattern is that an $\IC$ is more restrictive than a corresponding \SC. 
%For example, a functional constraint $\set{X} \rightarrow Y$ entails an independence constraint $\set{X}  \indep Y | \set{X} - \{Y\}$.
% where $\set{Z}$ is the set of variables (columns) outside of $\set{X}$ and $Y$. 
%
We write  $\Rightarrow$ to denote that one constraint implies another. 

\begin{Proposition} Let $\rightarrow$ denote functional dependence and $\rightarrow \rightarrow$ denote multi-valued dependence. 
\begin{align}
   \set{X} \rightarrow \Yvariable \Rightarrow \set{X} \rightarrow \rightarrow \Yvariable \Rightarrow \set{X} \nindep \Yvariable \label{eq:dependence} \\
   \set{X} \rightarrow Y \Rightarrow (\set{X} \cup \Yvariable)^{C}  \indep Y | \set{X} \Rightarrow \set{X} \rightarrow\rightarrow \Yvariable \label{eq:independence}
\end{align}
where the first equation 
assumes that not every $\set{\xvalue}$-tuple is related to exactly the same set of $\yvalue$-values.
\end{Proposition}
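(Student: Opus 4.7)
The plan is to establish the two chains separately, working with the empirical distribution $P_\dset$ induced by uniform sampling over the rows of $\dset$. Let $\set{W} = (\set{X} \cup \{Y\})^{C}$ denote the complementary attributes, so each row corresponds to a triple $(\set{x}, y, \set{w})$, and ``satisfaction of a probabilistic constraint'' is evaluated in $P_\dset$ as per the remark preceding the proposition.

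For chain (\ref{eq:dependence}), I would first show $\set{X} \to Y \Rightarrow \set{X} \to\to Y$. Under the FD, for each $\set{x}$ in the active domain there is exactly one $y = f(\set{x})$ appearing in the relation; since the MVD $\set{X} \to\to Y$ requires that the set of $y$-values co-occurring with $\set{x}$ not depend on the accompanying $\set{w}$-value, and a singleton is trivially independent of $\set{w}$, the MVD holds. For the second step, $\set{X} \to\to Y \Rightarrow \set{X} \nindep Y$, I would use the stated non-triviality assumption to produce $\set{x}_1 \neq \set{x}_2$ whose associated $y$-sets differ, exhibit some $y^{*}$ belonging to one but not the other, and conclude that $P_\dset(Y = y^{*} \mid \set{X} = \set{x}_1) \neq P_\dset(Y = y^{*} \mid \set{X} = \set{x}_2)$. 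At least one of these two conditional probabilities must then differ from the marginal $P_\dset(Y = y^{*})$, which by the paper's definition of dependence yields $\set{X} \nindep Y$.

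For chain (\ref{eq:independence}), I would first observe that the FD $\set{X} \to Y$ makes $P_\dset(Y \mid \set{X} = \set{x})$ a point mass at $f(\set{x})$. For any $\set{w}$ with $P_\dset(\set{X} = \set{x}, \set{W} = \set{w}) > 0$, the further-conditioned distribution $P_\dset(Y \mid \set{X} = \set{x}, \set{W} = \set{w})$ is also a point mass at $f(\set{x})$ and therefore agrees with $P_\dset(Y \mid \set{X} = \set{x})$, establishing $\set{W} \indep Y \mid \set{X}$ in the sense of the paper. For the final implication I would appeal to the classical equivalence between MVDs and conditional independence statements on the empirical distribution: on a relation over $\set{X} \cup \{Y\} \cup \set{W}$, the MVD $\set{X} \to\to Y$ holds iff $Y \indep \set{W} \mid \set{X}$ holds in $P_\dset$. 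Only the ``$\Leftarrow$'' direction is needed here.

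The main obstacle is this last equivalence, which is standard but requires a careful translation between a purely relational notion (the tuple set, for each fixed $\set{x}$, factorizes as the join of its projections onto $Y$ and $\set{W}$) and a probabilistic notion (the joint conditional factorizes as a product of marginals). The cleanest argument would note that $Y \indep \set{W} \mid \set{X}$ forces the support of $P_\dset(Y, \set{W} \mid \set{X} = \set{x})$ to be a Cartesian product $Y_{\set{x}} \times \set{W}_{\set{x}}$, which is precisely the set-level content of $\set{X} \to\to Y$; tuple multiplicities affect the empirical probabilities but not the relational MVD, so the implication goes through regardless of how duplicates are treated.
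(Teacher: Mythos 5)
Your proposal is correct and follows essentially the same route as the paper's proof: both chains are established on the empirical distribution $P_{\dset}$, with the FD giving a point-mass conditional for the saturated independence, and the link between the saturated independence and the MVD resting on the observation that a product conditional distribution has product support (the paper states this contrapositively, you state it directly, but the content is identical). The only cosmetic differences are that you argue FD $\Rightarrow$ MVD directly rather than citing the standard result, and you add the (harmless, slightly more careful) step that two differing conditionals cannot both equal the marginal.
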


\begin{proof}
Equation~\eqref{eq:dependence}: It is well-known that an FD implies an MVD~\cite{fagin1977multivalued}. Consider an MVD $\set{X} \rightarrow \rightarrow \Yvariable$. Since not every every $\set{\xvalue}$-tuple is related to exactly the same set of $\yvalue$-values, there exists $\set{\xvalue_1}, \set{\xvalue_2},\yvalue$ such that $\set{\xvalue_1}$ is related to $\yvalue$ but $\set{\xvalue_2}$ is not related to $\yvalue$. Therefore $P_{\dset}(\yvalue|\set{\xvalue_1})>0 = P_{\dset}(\yvalue|\set{\xvalue_2})$. Therefore $\set{\Xvariable} \nindep \set{\Yvariable} | \set{\Zvariable}$ as defined in Section~\ref{sec:sc}. 
Equation~\eqref{eq:independence}: Consider an FD  $\set{X} \rightarrow Y$. Let $\set{Z} \equiv (\set{X} \cup \Yvariable)^{C}$. The FD implies that for every $\set{\xvalue}$ tuple, there is a unique value $\yvalue \equiv f(\xvalue)$ value such that for every $\set{\zvalue}$-tuple, the tuple $\langle \set{\xvalue}, \set{\zvalue}\rangle$ is related to $f(\set{\xvalue})$. Therefore $P_{\dset}(f(\set{\xvalue})|\set{\xvalue}) = 1 = P_{\dset}(f(\set{\xvalue})|\set{\xvalue},\set{\zvalue})$. So the conditional distribution $P_{\dset}(\yvalue'|\set{\xvalue}\set{\zvalue})$ is the same for every $\set{\zvalue}$-tuple for the variables in $\set{X} \cup \Yvariable)^{C}$. Since this holds for each $\set{\xvalue}$-tuple,  we have $(\set{X} \cup \Yvariable)^{C}  \indep Y | \set{X}$.

For the second implication, we show its contrapositive: if there is no multi-valued dependency $\set{X} \rightarrow\rightarrow \Yvariable$, then $(\set{X} \cup \Yvariable)^{C}  \nindep Y | \set{X}$. If there is no multi-valued dependency, $\set{X} \rightarrow\rightarrow \Yvariable$ in the data table $\dset$, then there exist tuples $\set{\xvalue}, \set{\zvalue_1}, \set{\zvalue_2}$ such that the tuples $\langle \set{\xvalue}, \set{\zvalue_1}\rangle$ and $\langle \set{\xvalue}, \set{\zvalue_2}\rangle$ are related to different sets of $\Yvariable$-values. Without loss of generality, suppose that $\yvalue_1$ is related to $\langle \set{\xvalue}, \set{\zvalue_1}\rangle$ but not to $\langle \set{\xvalue}, \set{\zvalue_2}\rangle$. Then $P_{\dset}(\yvalue_1|\set{\xvalue}\set{\zvalue_1}) > 0 = P_{\dset}(\yvalue_1|\set{\xvalue}\set{\zvalue_2})$. So the conditional distribution $P_{\dset}(\yvalue_1|\set{\xvalue}\set{\zvalue})$ is not the same for all $\set{\zvalue}$, which implies that $(\set{X} \cup \Yvariable)^{C}  \nindep Y | \set{X}$. 
% \oliver{probably this formal proof should be in an appendix}
\end{proof}

% For example if a table $\dset$ satisfies an FD $\set{X} \rightarrow \Yvariable$, then $P_{\dset}$ also satisfies the probabilistic dependence $\set{X} \nindep \Yvariable$, except for the trivial case where the $\Yvariable$-column contains only a single value. 
% % We write  $\Rightarrow$ if one  for implication, we have
% % \[\set{X} \rightarrow \Yvariable \Rightarrow \set{X} \nindep \Yvariable \mbox{ unless $\Yvariable$ is constant}\]
% Similarly, a multi-valued dependency (MVD) $\set{X} \rightarrow\rightarrow \Yvariable$ implies the probabilistic dependence $\set{X} \nindep \Yvariable$ unless every $\set{\xvalue}$-tuple is related to exactly the same set of $\yvalue$-values. 
%The probabilistic dependencies capture the inferential relevance of IC dependencies, meaning that $\yvalue$ values can be inferred from $\xvalue$ values. For this reason, 
The deterministic aspect of an FD corresponds to the saturated independence constraint~\eqref{eq:independence}, which says that the variables $\set{Z}$ determine a unique {\em conditional distribution} for $Y$ regardless of the other attributes. Since a distribution determines a unique support set---comprising the values with positive probability---the independence constraint $(\set{X} \cup \Yvariable)^{C}  \indep Y | \set{X}$ entails a {\em multi-valued dependency} between columns $X$ and $Y$~\cite{fagin1977multivalued}.

% specifies that the referential relationship among a subset of columns holds regardless of the values in other columns. 
% For instance, an FD $\set{X} \rightarrow Y$ implies a saturated independence constraint $(\set{X} \cup \Yvariable)^{C}  \indep Y | \set{X}$ where $(\set{X} \cup \Yvariable)^{C}$ is the set of all variables other than $\Yvariable,\set{X}$.
% %: every database instance that satisfied the functional dependence also satisfies the probabilistic independence. 
% Whereas a functional constraint says that the variables $\set{Z}$ determine a unique {\em value} for $Y$ regardless of the other attributes, the i
% independence constraint 
% says that the variables $\set{Z}$ determine a unique {\em conditional distribution} for $Y$ regardless of the other attributes. Since a distribution determines a unique support set---comprising the values with positive probability---the independence constraint $(\set{X} \cup \Yvariable)^{C}  \indep Y | \set{X}$ entails a {\em multi-valued dependency} between columns $X$ and $Y$~\cite{fagin1977multivalued}. 

 To illustrate equation \eqref{eq:dependence} in our running example (Figure~\ref{fig:car-error}), suppose that all cars of the same model have the same price. This entails an FD $\textsf{Model} \rightarrow \textsf{Price}$. If some model offers a range of prices, the FD is violated, but the MVD $\textsf{Model} \rightarrow \rightarrow \textsf{Price}$ may hold if for each model, there is a unique set of prices offered. Unless this set of prices is the same for all models, a probabilistic dependence $\textsf{Model} \nindep \textsf{Price}$ holds in the dataset. In the case of an FD, we expect the probabilistic dependence to be very strong. 
 
 For equation~\eqref{eq:independence} , the FD $\textsf{Model} \rightarrow \textsf{Color}$ does not hold because different car models may come in different colors; still the independence 
 $\textsf{Model} \indep \textsf{Color}$ holds if each model offers the same distribution 
 %numbers 
 of colors. The independence implies the multi-valued dependency because if the distribution of offered colors is the same, then so is the set of offered colors.

 Because they assert inferential {\em ir}relevance, marginal  
 
 \noindent {\em in}dependence constraints cannot be captured by FDs or MVDs. The same applies to more complex logical dependencies, such as Denial Constraints: they correspond to probabilistic marginal dependencies and saturated independencies, but not marginal independencies.
 
 As our example illustrates, \SCs can {\em express new restrictions} and hence new domain knowledge. We emphasize that our example also illustrates that \SCs as a class are not strictly more expressive than ICs: for instance an FD is more informative than the associated probabilistic dependence. 
 %if the user knows that an FD should hold, she should specify the FD rather than a less informative \SC. 
 The additional expressive power of \SCs stems from two sources. 1) \SCs can be negated so the user can express both dependencies and (non-saturated) independencies. 2) The traditional application of \SCs is to sample data, where we expect \SCs to hold only approximately with exceptions. As will be discussed in Section~\ref{sec:detection}, this is accomplished by defining {\em degrees of dependence} (e.g. correlations in (0,1]). 
%  Statistical theory provides a strong body of results for defining approximately satisfied constraints in a principled way. 
 Thus a user can specify a valid constraint even if she does not expect it to hold exactly. For example, she may not expect that for each model, the proportion of available colors is exactly the same. 

\subsection{Discovering Input \SCs}\label{sec:sc-discovery}

\begin{figure}[t]
   \centering
   \includegraphics[width=1\linewidth]{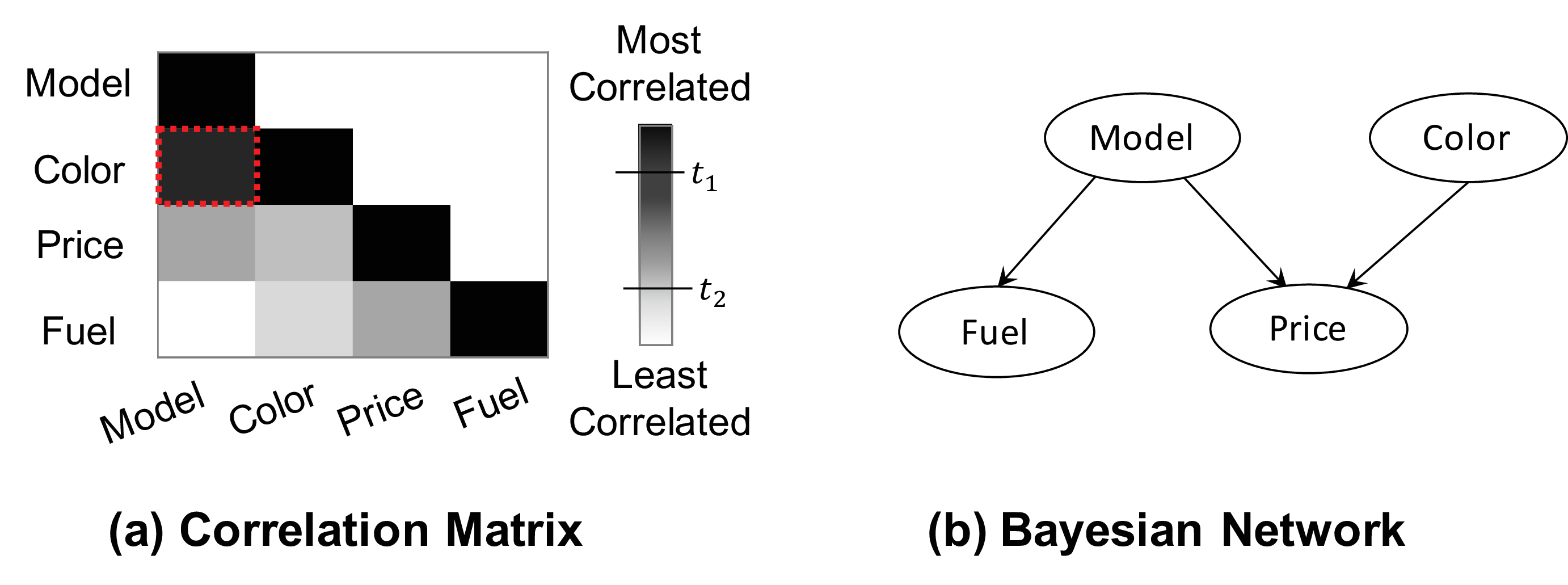} \vspace{-2em}
   \caption{Illustration of the use of correlation matrix (a), computed from the data, to facilitate the user to discover \SCs. The (dirty) data, show a false correlation between Model and Color. The user's domain knowledge allows her to record an independence \SC: $\textsf{Model} \indep \textsf{Color}$, also represented in (b). 
   %\system supports a drill-down analysis to pinpoint the dirty data points violating this \SC.
%   \na{To Oliver: Hi, Oliver, can you help like remove some of the description here to the context? It might be helpful to reduce some space here}}\oliver{I see your point but captions are a bad place to save space - reviewers will read captions but not necessarily text
}\label{fig:sc-discovery}
\end{figure}

\system requires a set of \SCs as input. An expert may simply list known associations. A famous example from data mining is that in Tesco stores, purchases of diapers correlated with purchases of beer. In addition, 
%Because of their importance in representing domain knowledge, 
the fields of artificial intelligence, graphical modelling, and machine learning provide many resources that can be leveraged to elicit \SCs from a user. We briefly describe some of the main ideas. 

\vspace{0.25em}

{\noindent \bf{Building a Model Structure}}. 
%A comprehensive approach to eliciting statistical constraints 
A traditional and effective approach to building expert systems with graphical models~\cite{Cowell2006} is to ask the domain expert to express his or her domain knowledge about statistical relationships by specifying the structure of a graphical model, as shown in Figure~\ref{fig:sc-discovery}(b). This structure is a graph whose nodes are random variables; it represents statistical constraints in terms of graphical concepts such as the existence of a path between two nodes.  
%This is a traditional and effective approach to building expert systems with graphical models . 
One reason why graphical representations are meaningful to domain experts is that they visualize {\em causal patterns}~\cite{Pearl2000,Spirtes2000,wasserman2013all,salimi2018hypdb}. For instance, Figure~\ref{fig:sc-discovery}(b) shows that the car model causally influences its fuel consumption.

\vspace{0.25em}

{\noindent \bf{Mixed-Initiative Data Analysis.}} A mixed-initiative approach uses automated ML methods to find (in)dependencies in the data. This is similar to work on automatic IC discovery~\cite{chiang2011unified,chiang2008discovering, wyss2001fastfds,lopes2000efficient,chu2013discovering}. Since the data may be dirty, the results of the automated methods need to be checked by the domain expert. If the expert inspects a suggested independence and rejects it based on domain knowledge, we can add the dependence that represents its negation as a valid constraint; similarly a rejected dependence statement provides an independence constraint. %
For example, a correlation matrix shows the pairwise Pearson correlations among a set of variables (see Figure~\ref{fig:sc-discovery}(a)). Computing a correlation matrix from dirty data may wrongly show a high observed correlation between $\textsf{Model}$ and $\textsf{Color}$. Suggesting this correlation to the user is likely to elicit the \SC $\textsf{Model} \indep \textsf{Color}$. 

%\vspace{0.25em}

Table~\ref{table:discover-scs} shows several common data analysis techniques for discovering \SCs. Details on these techniques can be found in standard textbooks on data mining~\cite{Hall2009}. {\em Marginal (in)dependencies} can be visualized in scatter plots (continuous variables), paired histograms (categorical variables), and boxplots (mixed variables). The strength of an association can be quantified as well, which facilitates an algorithmic search for associations based on a strength treshold~\cite{ilyas2004cords}.  Finding {\em elementary saturated (in)dependencies} %from (clean) data 
is a key machine learning task known as {\em feature selection}~\cite{Hall2009}. If a feature $\Xvariable$ is selected as relevant to predicting the value of a target variable $\Yvariable$, this implies a saturated dependence $\Xvariable \nindep \Yvariable|(\Xvariable \cup \Yvariable)^C$. %Similarly, feature irrelevance implies a saturated independence constraint. 
% Conversely, if $\Xvariable$ is not selected, this implies a saturated independence $\Xvariable \nindep \Yvariable|\set{\Zvariable}$.
% For example a linear regression model selects a predictor $\Xvariable$ for independent variable $\Yvariable$ if an only if the model assigns a nonzero weight to $\Xvariable$. 
%Many effective methods and metrics for feature selection are known in machine learning~\cite{Hall2009}, for all data types.

\begin{table}[t]\vspace{-2em}
\caption{Data analysis approaches to discovering \SCs.
% Standard data analysis tools can be used to suggest statistical constraints to the user, depending on the type of constraint. 
%These tools include visualization and quantitative metrics of the strength of an (in)dependency.
} \vspace{-1em}
\begin{center}
\resizebox{0.5\textwidth}{!}{
\begin{tabular}{|c|c|c|}
\hline
& \multicolumn{2}{|c|}{\textbf{Data Analysis}} \\\hline
\SC \textbf{format} & 
\textbf{Visualization} 
& \textbf{Strength Metric} \\\hline \hline
\textbf{Marginals} & \begin{tabular}{c} Histograms \\ Scatterplots \end{tabular} & \begin{tabular}{c} Mutual Information \\ Correlation \end{tabular}\\ \hline
\textbf{Saturated} & Feature Selection & Feature Relevance \\ \hline
\textbf{Model Structure} & Random Variable Graph & Model Score \\\hline
\end{tabular}
}
\end{center}
\label{table:discover-scs}
\end{table}%

\section{CODED Architecture} \label{sec:sys-arch}

Our system supports two use cases: mixed-initiative and fully automatic. The mixed-initiative design interacts with a human-in-the-loop to support three tasks: 1) Specifying a consistent set of statistical constraints 2) Checking the constraints against the data 3) Repairing datapoints that are not compatible with the constraints. In the fully automatic case, the system checks given input constraints against the data and outputs data points that are likely to be dirty.
%, that is, to have caused the data to be incompatible with the constraints. 
% Automatic CODED can be used as a component with systems that integrate error detection with automated data repair, such as Holoclean~\cite{DBLP:journals/pvldb/RekatsinasCIR17}. While automated error repair is an active and exciting area of research, we believe that most current uses of CODED will be mixed-initiative, and therefore devote the most detail to this setting.

Figure~\ref{fig:coded} shows the architecture of our  \SC-based error-detection system. \system takes as input a dirty dataset and a set of \SCs (see Section~\ref{sec:sc-discovery}).
% . The input \SCs can be elicited solely from the user's domain knowledge. Another option is to use data analysis to suggest \SCs to the user for approval, which is likely to help elicit more \SCs faster (Section~\ref{sec:sc-discovery}).
%whose architecture is depicted in . 
The system consists of three key components: \emph{consistency checking}, \emph{hypothesis testing}, and \emph{error drill down}. %The following is 
An overview of each component follows. 
%\footnote{This figure should look as follows: User specifies SCS. SCS are optionally run through a consistency check. Then we go through the SCS one by one. If SC is inconsistent with data, user can a) repair the constraint: remove it and optionally add its negation b) repair the data: drill-down to find the records most likely to be wrong.}

%\oliver{move footnote to future work}

\vspace{.25em}

{\noindent \bf Consistency Checking.}  Checking whether the user-specified constraints are consistent helps the user to avoid specification errors \cite{bohannon2007conditional}. The problem of deciding whether a given set of statistical constraints is consistent  has been studied extensively by AI researchers 
%in AI and graphical model.
~\cite{Niepert2013,pearl2,studeny1990conditional}. We adopt this work, and give improvements for their method in certain conditions (see Section~\ref{cck}). If our method detects that the given \SCs are mutually inconsistent, it helps the user to resolve the conflict. For example, consider the four \SCs in Figure~\ref{fig:running-example}. As will be explained in Section~\ref{cck}, they are actually inconsistent, i.e., $\SC_2$ and $\SC_3$ implies $\textsf{Price} \indep \textsf{Fuel}$, which contradicts $\SC_4$. \system first tells the user that $\{\SC_1, \SC_2, \SC_3, \SC_4\}$ are inconsistent, and then guides the user to resolve the conflict, e.g., removing $\SC_2$. 
%As an alternative to running a consistency checker, misspecified \SCs can be revised during the main system loop, which checks the specified \SCs against the data (see Figure~\ref{fig:coded} and Section~\ref{sec:interact-consistency}.) 

% \vspace{.5em}

\begin{figure}[t]
   \centering
   \includegraphics[width=1\linewidth]{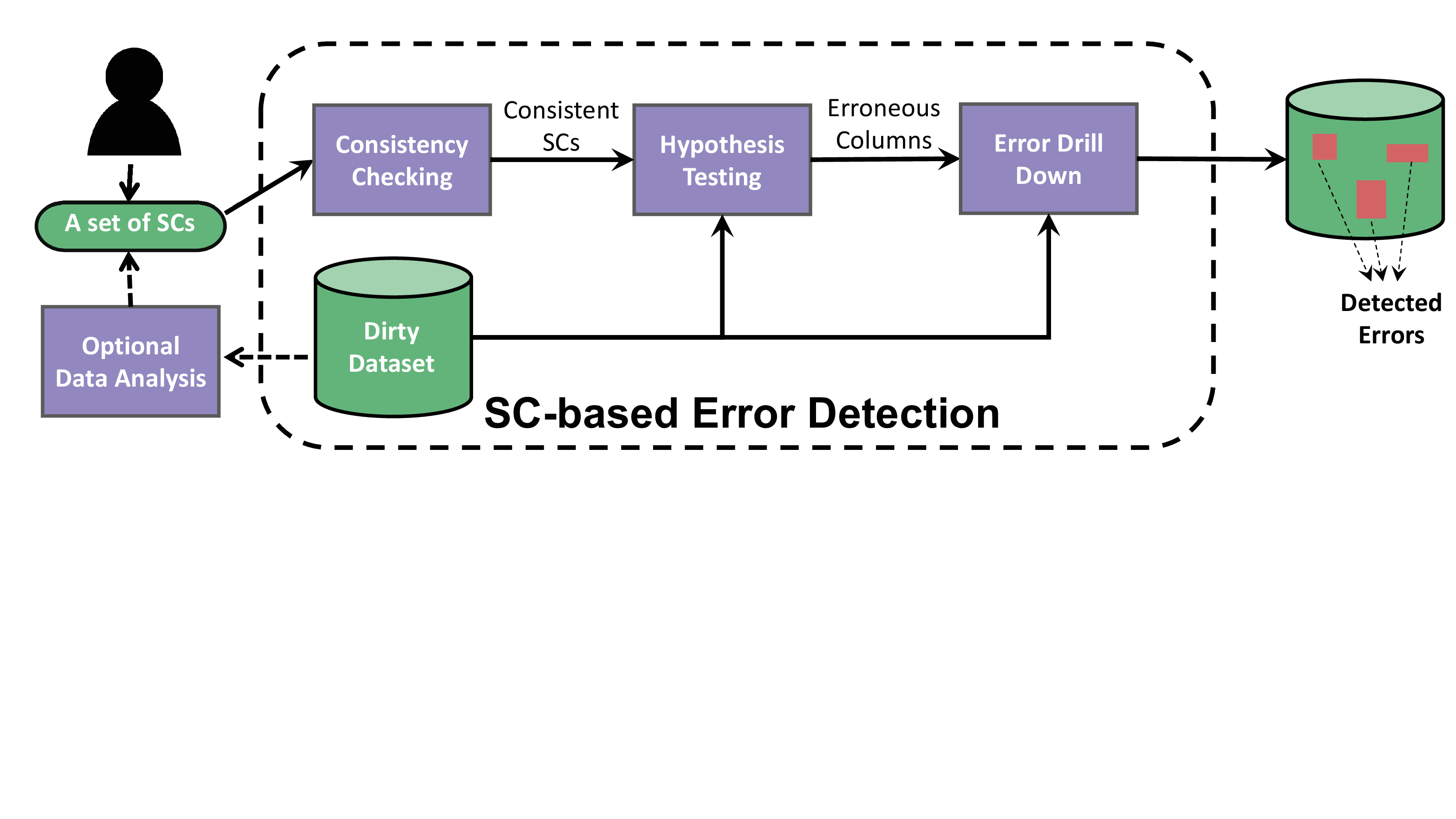} \vspace{-7em}
   \caption{\system Architecture}\label{fig:coded}\vspace{-1.5em}
\end{figure}

\vspace{.25em}

{\noindent \bf Hypothesis Testing.} 
%\paragraph{Hypothesis Testing.} 
%Once a set of consistent \SCs are obtained, the system enters the hypothesis-testing stage. 
In the hypothesis testing stage, the system checks for each \SC, whether the dataset violates the \SC or not. For example, consider $SC_1 = \textsf{Model} \indep \textsf{Color}$ and the dataset in Figure~\ref{fig:running-example}. The user expects that $\textsf{Model}$ and $\textsf{Color}$ should have an independent relationship. However, a $\chi^2$ hypothesis test on the data indicates that $\textsf{Model}$ and $\textsf{Color}$ are highly correlated. The user then has a choice to revise the constraint, or to maintain the constraint and have the system  mark $\textsf{Model}$ and $\textsf{Color}$ as erroneous columns. Section~\ref{sec:detection} below discusses hypothesis testing process.

 \vspace{.25em}

%
%\paragraph{Error Drill Down.} 
{\noindent \bf Error Drill Down.} If a dataset violates an \SC, the user may want to  drill down into individual records so that she can understand why an \SC is violated and repair the datapoints that caused the violation. Therefore \system provides an error-drill-down component. The user chooses a violated \SC and specifies a threshold $k$. This component returns the top-$k$ records that contributes the most to the \SC's violation. Suppose $\SC_2 = \textsf{Price} \nindep \textsf{Fuel}$ is violated. As will be shown in Section~\ref{sec:drill-down}, if the user specifies $k = 1$, the system will return $r_4$, which is a cheap car with high fuel consumption. The user can examine this record and may find out that unlike the other cars, this is a used car. It is worth noting that given $\SC_4$ the user does not need to write a handcrafted rule like \emph{a cheaper car tends to be more fuel-efficient}.
\section{SC Violation Detection}~\label{sec:detection} We study how to detect an \SC violation in this section. We first reduce to evaluating marginal \SCs in Section~\ref{sec:violation-all-types}, and then discuss how to apply hypothesis testing to detect the violation of marginal \SCs in Section~\ref{subsec:ht}. 

%\na{To Oliver: Do we need to set a section for bucketing to support different data types because it is mentioned and used in our experimental study? The bucketing is highlighted, I made a separate subsection}
%After obtaining a set of consistent \SCs, the next step is to verify whether the given data is compatible with the  \SCs provided. 
%In this section, we first discuss how to detect \SC violations for all data types, and then develop efficient error-drill-down algorithms.    
% \na{To Oliver: In this section, we need to add some parts here: 1. how can we deal with complex SCs, 2. how can we support multiple types of data in CODED?}

%After obtaining the consistent \SCs , naturally the next step is to verify whether the data expresses the same \SCs as provided. The intention behind this is pretty similar to the idea of hypothesis testing where the user tends to evaluate whether a dataset satisfies an independence constraint. However, simply adapt the idea of hypothesis testing here may bring some misleading. In this Section, we first propose how to connect \SCs with Hypothesis Testing in Sec.\ref{subsec:ht}, then we further propose how to detect violations among various types of data in and show the essential reasons that why the drill-down framework is needed in Sec.\ref{subsec:detections}.

%\subsection{Detecting \SC Violations}

%Recall from Section~\ref{sec:sc} that we assume all input \SCs are elementary of the form $\Xvariable \indep \Yvariable | \set{\Zvariable}$. 
\subsection{Reduction to Evaluating Marginal \SCs}~\label{sec:violation-all-types}
Evaluating a conditional  \SC against the data can be reduced to evaluating marginal \SCs via the following steps. 1) Reduce testing a complex independence $\set{\Xvariable} \indep \set{\Yvariable} | \set{\Zvariable}$ to testing a set of elementary \SCs $\Xvariable \indep \Yvariable | \set{\Zvariable}$, for each $\Xvariable \in \set{\Xvariable}$ and $\Yvariable \in \set{\Yvariable}$. 2) If all conditioning variables are categorical, there is a finite set of possible combinations $\set{\Zvariable}=\set{\zvalue}$; conditional independence is equivalent to marginal independence $\Xvariable \indep \Yvariable | \set{\Zvariable} = \set{\zvalue}$ for each $\set{\zvalue}$ (Definition~\ref{scd}). If $\set{\Zvariable}$ contains continuous variables, we deploy a state-of-the-art \emph{data bucketing} method in statistics to {\em discretize} the data into categorical types with histogram. Discretization groups a number of continuous values into a small number of bins. CODED is open to all the data bucketing methods. Our experiments deployed  the methods %for determining the bin 
introduced in~\cite{scott2015multivariate}. The reported results use the bucketing methods with the best performance (f-score). 
%Please note in the experimental study, while employing the bucketing methods, we report the results that achieve the best F-score. 
%, that is, connect two variables $\Xvariable$ and $\Yvariable$. 

{\em Justification.} %Simplifying to elementary ISCs (step 1) 
Step 1 can be justified as follows. The decomposition axiom (see Definition~\ref{def:graphoid-axioms}) shows that in any joint probability distribution, a complex SC entails the corresponding set of pairwise elementary SCs, so this direction entails no loss of generality. The converse is the composition principle, that elementary independencies between single variables can be combined into complex independencies between sets of variables. While this is not true for every probability distribution,  
Chickering and Meek \cite{Chickering2002}  show that the composition property holds for a large
class of probability distributions, and provide evidence that real-world datasets satisfy it. Niepert {\em et al.} also give formal conditions under which the elementary and general SCs are equivalent \cite{Niepert2013}. We can therefore expect that testing complex \SCs by decomposing them into elementary \SCs will be adequate for error detection in practice.

To evaluate marginal \SCs $\Xvariable \indep \Yvariable$ for both $\Xvariable$ and $\Yvariable$ either categorical or numerical, we propose default hypothesis testing methods in the remainder of this section. For the mixed case where $\Xvariable$ is numerical and $\Yvariable$ discrete, we can proceed as follows. For simplicity, assume that $\Yvariable$ is binary 0 or 1. Then we have two continuous conditional distributions $P(\Xvariable|\Yvariable=0)$ and $P(\Xvariable|\Yvariable=1)$. The independence between $\Xvariable$ and $\Yvariable$ can then be evaluated by testing whether the two conditional distributions are significantly different~\cite{uther1998tree}. %A standard test for distribution difference is the Komolgorov-Smirnow Test. We leave evaluating the mixed case for future work. 

\vspace{-.5em}

\subsection{Hypothesis Testing}\label{subsec:ht}
A \emph{hypothesis test} $\test$ is a procedure that takes as input a dataset $\dset$ and outputs either 0 (``the hypothesis is rejected") or 1 (``the hypothesis is not rejected"). A statistical hypothesis is rejected when the probability of the data entailed by the hypothesis is below a user-specified threshold. Most hypothesis tests are based on a {\em test statistic} $\statistic{\dset}$, which returns a real number for a dataset. For independence tests, intuitively, the statistic is an aggregate function that summarizes the degree to which the dataset violates the independence hypothesis $\set{\Xvariable} \indep \set{\Yvariable} | \set{\Zvariable}$. The \emph{$p$-value} specifies the probability of observing a value at least as great as the test statistic for the dataset, assuming the independence constraint: $p(\dset) = P(t \geq \statistic{\dset}|\set{\Xvariable} \indep \set{\Yvariable} | \set{\Zvariable}).$ 

CODED can deploy any hypothesis tests that specified by users, while including default hypothesis testing methods. To make the paper self-contained, we briefly review the basics of these hypothesis testing methods. 

\vspace{.5em}

{\noindent \textbf{The $\chi^2$ test}} %$\chi^2$ Testing 
measures the (in)dependence among categorical data~\cite{pearson1900x}. Given an assignment $\set{\Xvariable} = \set{\xvalue}$, each datapoint satisfies the assignment or not. The \emph{observed count} is the number $N_{\dset}(\set{\Xvariable} = \set{\xvalue})$ of datapoints that satisfy it. 
%Thus the observed counts for complete assignments are denoted by $N_{\dset}(\set{\Xvariable} = \set{\xvalue},\set{\Yvariable} = \set{\yvalue})$. 
If the sets $\set{\Xvariable}$ and $\set{\Yvariable}$ were completely independent, the observed counts would equal the product of the marginal counts; these products are called the \emph{expected counts}: $E_{\dset}(\set{\Xvariable} = \set{\xvalue},\set{\Yvariable} = \set{\yvalue}) \equiv N_{\dset}(\set{\Xvariable} = \set{\xvalue}) \times N_{\dset}(\set{\Yvariable} = \set{\yvalue}).$
The $\chi^2$ statistic is calculated as:
\vspace{-.5em}
$$\cstat(\dset) \equiv  \sum_{\set{\xvalue},\set{\yvalue}} \frac{[N_{\dset}(\set{\Xvariable} = \set{\xvalue},\set{\Yvariable} = \set{\yvalue}) - E_{\dset}(\set{\Xvariable} = \set{\xvalue},\set{\Yvariable}= \set{\yvalue})]^2}{E_{\dset}(\set{\Xvariable} = \set{\xvalue},\set{\Yvariable}= \set{\yvalue})}.$$
\vspace{-.5em}

The distribution of the chi-square statistic can be approximated by the $\chi^2$ distribution. That is, we have $p({\dset}) \approx \int_{Q(\dset)}^{+\infty} \chi_{k}^2(t) dt$ where $k = (r_{X}-1) \times (r_{Y}-1) $ and $r_X$ resp. $r_Y$ is the number of possible assignments for $\set{X}$ resp. $\set{Y}$. We use the $\chi^2$ statistic for testing independence among discrete (categorical).
\vspace{.5em}

{\noindent \textbf{Rank Correlations} for numeric variables are based on a straightforward intuition: Each variable defines an ordering over data points, and if the variables are associated, the $\Xvariable$-ordering should carry information about the $\Yvariable$-ordering. 
% In Fig.\ref{fig:car-error}'s example, if the most expensive car is also the most fuel-efficient, the second-most expensive car is the second-most fuel-efficient, and the cheapest car is the least fuel-efficient, then there is a very positive relationship between price and fuel efficiency. A rank correlation test, therefore, considers how similar the $\Xvariable$-ranking is to the $\Yvariable$-ranking.
% , and the not magnitude of the differences between different data points. The problem of detecting a probabilistic association is therefore reduced to deciding the similarity of two rankings. 
A number of similarity metrics for rankings have been proposed; In this paper, we choose the the most common non-parametric metric Kendall's $\tau$ as the default method, defined as follows. 
Consider $n$ datapoints with two features $\dset= (\xvalue_1,\yvalue_1), \ldots,$
$(\xvalue_n,\yvalue_n)$. For two datapoints $(\xvalue_i,\yvalue_i)$ and $(\xvalue_j,\yvalue_j)$ with $i \neq j$, if $\xvalue_i > \xvalue_j$ and $\yvalue_i > \yvalue_j$, or $\xvalue_i < \xvalue_j$ and $\yvalue_i < \yvalue_j$, then the two variables agree on the ordering of $i$ and $j$ and the pair $(i,j)$ is \emph{concordant}. The number of concordant pairs is denoted as $n_c(\dset)$. Else if $\xvalue_i > \xvalue_j$ and $\yvalue_i < \yvalue_j$, or $\xvalue_i < \xvalue_j$ and $\yvalue_i > \yvalue_j$, then the two variables disagree on the ordering of $i$ and $j$ and the pair $(i,j)$ is \emph{discordant}. The number of discordant pairs is denoted as $n_d(\dset)$. Pairs neither concordant  nor discordant are called {\em tied}, and their number is denoted as $n_t(\dset)$.
%; otherwise, we call the pairs is \emph{tied}, denoted as $n_t(\dset)$. 
The $\tau$ statistic is then computed as 
\vspace{-.5em}
$$\tau(\dset) \equiv \frac{n_c(\dset) - n_d(\dset)}{\binom{n}{2}}.$$
\vspace{-.5em}

% \begin{enumerate}
%     \item $\xvalue_i > \xvalue_j$ and $\yvalue_i > \yvalue_j$, or $\xvalue_i < \xvalue_j$ and $\yvalue_i < \yvalue_j$. In this case the two variables agree on the ordering of $i$ and $j$ and the pair $(i,j)$ is \defterm{concordant}. 
%     The number of concordant pairs is denoted as $n_c(\dset)$.
%     \item $\xvalue_i > \xvalue_j$ and $\yvalue_i < \yvalue_j$, or $\xvalue_i < \xvalue_j$ and $\yvalue_i > \yvalue_j$. In this case the two variables disagree on the ordering of $i$ and $j$ and the pair $(i,j)$ is \defterm{discordant}. 
%     The number of discordant pairs is denoted as $n_d(\dset)$.
%     % \item $\xvalue_i = \xvalue_j$ (the pair is tied in $X$) or $\yvalue_i = \yvalue_j$ (the pair is tied in $Y$). The number of pairs tied in $\Xvariable$ resp. $\Yvariable$ is denoted by $n_X(\dset)$ resp. $n_Y(\dset)$.
% \end{enumerate}% The $\tau$ statistic is then computed as $\tau(\dset) \equiv \frac{n_c(\dset) - n_d(\dset)}{\binom{n}{2}}.$

To compute $p$-values, statistical applications use a rescaled denominator, still a function of only $n$, such that the rescaled statistic has an approximately Gaussian distribution. 

% If a dataset contains tied datapoints, a modified $\tau_b$ statistic can be used. Essentially, $\tau_b$ computes the difference in the number of  concordant and discordant pairs without ties, and rescales the denominator to achieve an approximately Gaussian distribution. In our experiments below, we use $\tau_b$ for datasets with ties, but still refer to $\tau$ for simplicity. Table~\ref{tab:support-data-type} summarizes the hypothesis testing method and the data types supported by them. 

\vspace{.5em}

\noindent \textbf{Comparison}. There are many statistics methods \cite{wasserman2013all} that can be applied to verify the independence. We explain why we chose Kendall's $\tau$ as a default method chosen over other popular options (e.g., Pearson's coefficient $\rho$ and Spearman's $\rho_{s}$). 1) The default method in CODED should be compatible with many data characteristicss, so the fewer underlying data assumptions, the better. In statistical terminology, this means we want to use a \emph{non-parametric} hypothesis test, which does not that the dependence can be characterized by a fixed set of parameters known a priori before data observation. 
Pearson's $\rho$ is a \emph{parametric} method that measures the degree to which $\Xvariable$ and $\Yvariable$ are linearly related. The disadvantage of the $\rho$ statistic is that it is reliable only under certain assumptions, including that the relationship between $\Xvariable$ and $\Yvariable$ is approximately linear. The computation complexity of Spearman's $\rho_{s}$ is relatively smaller than that of Kendall's $\tau$. However, comparison studies \cite{knight1966computer,xu2013comparative,howell2009statistical,croux2010influence,fredricks2007relationship} have found that Kendall's $\tau$ is generally more robust in avoiding false positives, which makes it preferred for data error detection.
\section{Error Drill Down}\label{sec:drill-down}

In this section, we study how to drill down into individual records that contribute the most to  the violation of an \SC. We first propose a general framework in Section~\ref{subsec:framework} and then propose two efficient top-k algorithms for categorical data and numerical data, respectively, in Section~\ref{subsec:top-k-algo}.

%\na{To Oliver: Can you help add some discussions about the sensitivity analysis? To show the difference between the idea of that and ours? - done under related work}

%discuss how to employ the the framework with default hypothesis testing method for two different data types in ~\ref{subsec:categorical} and \ref{subsec:numerical}. Finally we discuss how to implement the framework efficiently for different hypothesis testing methods in Section~\ref{subsec:impl}. 

% We first propose a general framework to solve the problem in Section~\ref{subsec:framework}, and then discuss how to implement the framework efficiently for different hypothesis testing methods in Section~\ref{subsec:impl}. 

% \begin{figure}[t]
%   \centering
%   \includegraphics[scale=0.27]{fig/table_data_type.pdf}
%   \caption{Supported data types and default hypothesis testing methods}\label{fig:data_type}
% \end{figure}

% After detecting the violations between columns, the next step, which is more important and fundamental to improving data quality, is to identify the error tuples.
% In the following, leveraging hypothesis testing approaches to drilling down into erroneous tuples and returning them for categorical data (Sec.~\ref{cat}) and numerical data (Sec.~\ref{num}) respectively will be introduced. 

\subsection{Error-Drill-Down Framework}\label{subsec:framework}
%OS: would be nice to have a diagram
There are two reasons that motivate the drill-down task. Firstly, an \SC violation helps a user to detect which columns have errors, but it does not tell the user which values in the columns may contain the errors. For example, suppose a user specifies an \SC: $\textsf{\small Model} \indep \textsf{\small Color}$ on a car dataset. After a hypothesis test, she finds that $\textsf{\small Model}$ and $\textsf{\small Color}$ violate the independent relationship on the dataset. At this point, she knows that there may be some errors in the $\textsf{\small Model}$ and $\textsf{\small Color}$ columns. We want to help her to locate the errors in the columns so that she can figure out why the violation happens and then fix the errors.  

Secondly, sometimes a dataset has errors, but the errors are not of sufficient magnitude to violate a (in)dependence   relationship among columns. We want to leverage \SCs to detect the errors even in this situation. For example, consider an \SC: $\textsf{\small Fuel} \nindep \textsf{\small Price}$. Suppose that after a hypothesis test, the test result indicates that that $\textsf{\small Fuel}$ and $\textsf{\small Price}$ are dependent (i.e., no violation) but the dependent relationship is weaker than what she expects. We want to help her investigate why the correlation is weaker than expected.

To this end, we develop an interactive error-drill-down framework. A user specifies an \SC and a hypothesis testing method (e.g., $\tau$ test). The framework first applies the hypothesis test to the data, and checks whether \SC is violated. If yes, it will return $k$ records whose column values are most likely to cause the violation. If no, a user can check whether the returned statistic (e.g., the $\tau$ statistic) is larger or smaller than her expectation. Suppose the $\tau$ statistic is smaller than what she expects. Then, she can use the framework to identify $k$ records whose column values are most likely to cause the unexpected result.

Central to this framework is the \emph{top-k error detection problem}. Let $D$ denote a dataset, and $S$ denote a hypothesis testing statistic (e.g., the $\tau$ statistic or the $\chi^2$ statistic). Without loss of generality, we consider only the case where the statistic is larger than a user's expectation. In this situation, the goal is to find $k$ records from $D$ such that if they were removed, the test statistic would decrease the most. Definition~\ref{def:top-k} formally defines the problem.

\sloppy

\begin{Definition}[Top-$k$ Individual Error Detection]\label{def:top-k} Given a dataset $D$, an \SC, a hypothesis testing statistic $S$, and a threshold $k$, we aim to identify a set of $k$ records, denoted by $\Delta D$, such that $S(D-\Delta D)$ is mimized, i.e., 
\[ \argmin_{\Delta D \subseteq D} \,\, {S(D-\Delta D)}  \quad \textsf{s.t.} \quad |\Delta D| = k \]
\end{Definition}\vspace{-.5em}

% \fussy

A naive solution is to enumerate all $\binom{|D|}{k}$ possibilities, and then return the best result. This is prohibitively expensive. Even with a modest data size of 10,000 records, assuming a reasonable k = 10, this approach would require enumerating $\binom{10000}{10} = 2.7\times 10^{33}$ possibilities. Therefore, we adopt greedy algorithms to reduce the cost. We propose two greedy algorithms, \emph{$K$ strategy} and \emph{$K^c$ strategy}. Intuitively, the $K$ strategy seeks to directly identify the best $k$ records; the $K^c$ strategy seeks to remove the worst $n-k$ records and then return the remaining $k$ records as a result. 

\vspace{.25em}

\noindent \textbf{\emph{$K$} Strategy.} The algorithm first selects the \emph{best} record $d^*$ from $D$ such that if it was removed, the statistic can decrease the most. The algorithm removes $d^*$ from $D$, and then repeats the above process to select the best record from  $D - \{d^*\}$. After $k$ iterations, the top-$k$ records are identified.

\vspace{.25em}

\noindent \textbf{\emph{$K^c$} Strategy.} The algorithm first selects the \emph{worst} record $d'$ from $D$ such that if it was removed, the statistic can \emph{increase} the most. The algorithm removes $d'$ from $D$, and then repeats the above process to select the worse record from  $D - \{d'\}$. After $n-k$ iterations, where $n = |D|$, the remaining $k$ records are returned.

\vspace{.25em}

{\noindent \bf Remark.}
The \emph{$K$} strategy is more efficient than the \emph{$K^c$} strategy because the former only needs to select $k$ records but the latter needs to check $n-k$ records. In terms of effectiveness, the \emph{$K$} strategy often leads to a better objective value (i.e., smaller $S(D-\Delta D)$ because it directly optimizes for that value. The \emph{$K^c$} strategy is particularly useful in identifying a set of $k$ records that are highly correlated with each other, thus it is more suitable to detect errors for the violation of an independence SC.

\subsection{Top-k Error Detection Algorithms}\label{subsec:top-k-algo}

We describe top-$k$ error detection methods for the two statistics we examine in this paper, $\chi^2$ and $\tau$. We discuss how to implement the $K$ strategy for them efficiently. The same implementation can be extended to the $K^c$ strategy trivially.
% \oliver{future work: further hypothesis testing methods}

% \na{New idea: we can use the way to describe how we do the error detection: estimate the error impact from the errors. We assume that errors will have more dominating bad influence over the entire dataset than the usual ones. ref slides.}

% As mentioned in Section~\ref{sec:detection}, there are a large number of hypothesis testing methods proposed in the statistics literature. In this subsection, we dive into two of them: $\chi^2$ test and (Kendall's) $\tau$ test, and 
% %Table~\ref{tab:support-data-type} illustrates their supported data type.
% We defer an extensive discussion of the other hypothesis testing methods to future work.  %However, it is worth attention that CODED can easily extend to support other data types with the chosen hypothesis test methods by users. 

%In this paper, we focus showing how we employ Kendall's $\tau$ test, G-test and $\chi^2$ testing to locate erroneous records. However, CODED can support a wide range of hypothesis testing methods.

\subsubsection{Categorical Data}

\begin{figure}[t]
   	\centering
   	\includegraphics[width=0.55\linewidth]{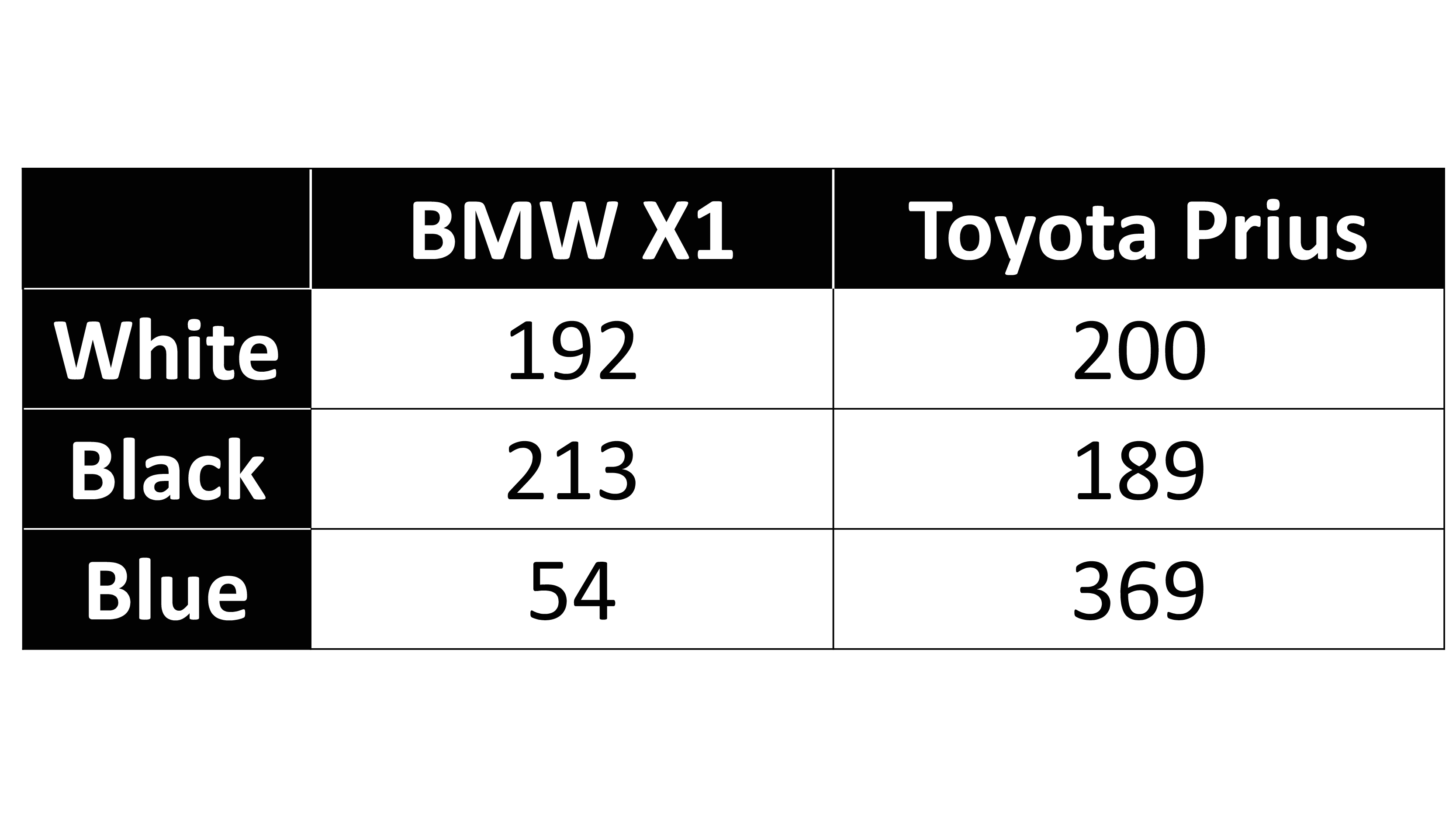}
   	\vspace{-2em}
   	\caption{Group counts of \textsf{Model} and \textsf{Color}}\label{fig:car_cat}
   	\vspace{-1em}
\end{figure}

We use $\chi^2$ test as the default method to detect errors for categorical data. We propose an optimization technique to reduce the time of selecting the ``best'' record at each iteration. The key observation is that if two records have the same values on the tested columns, there is no difference to choose either one of them. Therefore, we can group the records based on the tested columns, and only need to spend time in deciding which group (rather than which record) should be selected at each iteration. 

For example, consider an \SC: $\textsf{\small Model} \indep \textsf{\small Color}$ and a dataset similar to Figure~\ref{fig:running-example} but with more records.
% and the dataset in Figure~\ref{fig:running-example} 
We first group the records based on the tested columns (i.e., $\textsf{\small Model}$ and $\textsf{\small Color}$). Suppose there are 2 car models and 3 colors. Then, there will be 6 groups in total. Figure~\ref{fig:car_cat} illustrates the 6 groups, where the number in each cell represents the total number of the records that belong to that group. Each cell has a $\chi^2$ value  $z$. At each iteration, the $K$ strategy only needs to determine which group should be selected and then randomly picks one record from that group based on the statistic $z$. This optimization technique significantly reduces the computational cost since the number of groups could be orders of magnitude smaller than the total number of records.

\subsubsection{Numerical Data}
We discuss how we employ the framework for numerical data with the $\tau$ test. We formally define the problem as follows.

\begin{Definition}[$\tau$-test-based Error Detection] Given a dataset $D$, an \SC, and $k$, the $\tau$-test-based error detection problem tries to find a subset $\Delta D$ of records from $D$ that contribute the most to the violation of the \SC, i.e.,
\[ \argmin_{\Delta D \subseteq D} \,\, \frac{n_c(\dset - \Delta D) - n_d(\dset- \Delta D)}{\binom{|\dset| - k}{2}}  \quad \textsf{s.t.} \quad |\Delta D| = k \]
\end{Definition}

For simplicity, we denote the count of concordant pairs, discordant pairs, and tied pairs in $\dset - \Delta D$ as $n_c$, $n_d$, and $n_t$, respectively. We omit the denominator of the objective function since it is a constant function of $n$ only. 
% For the nominator, $n_c-n_d$, since $n_c + n_d + n_t = \binom{|\dset| - k}{2}$, it can be written as $n_c-n_d = 2n_c+n_t-\binom{|\dset| - k}{2}$. To minimize it, since $\binom{|\dset| - k}{2}$ is constant, we need to minimize $2n_c+n_t$, which is only related to the count of concordant pairs $n_c$ and tied pairs $n_t$. This can be seen as that each pair is assigned a weight: 2 for concordant, 1 for tied, and 0 for discordant, and the objective is to minimize the total weight of all pairs.
% OS: this is nice but not strictly necessary and a bit complicated. Could put back in if page limit allows
%
The $K$ strategy works as follows. At each iteration, it calculates the \emph{benefit} of each record: Given a record $r$, find all the pairs that contain the record, and then calculate the sum of the weights of these pairs, denoted by $\textsf{benefit}(r)$. 
%At the initialization stage, we calculate the benefit of each record; at the iteration stage, 
Then, we select the record with the biggest benefit. Once a record is selected, we need to update the benefit of all the remaining records. We use a priority queue to maintain the top-$k$ records. Algorithm~\ref{alg:tau} shows the pseudo-code. Example~\ref{exa:tau-algo} illustrates the algorithm. 

%Therefore, we introduce the relation mapping list which is hash-able and can quickly query as well as update the $\tau$ value of each record in the dataset. Meanwhile, to reduce the computationally cost, we maintain a priority heap which will store at most $K$ records and take the relation as input to update at each interaction automatically. Psesue-do code are illustrated in Algorithm~\ref{alg:bf_tau}.

\begin{algorithm}[t]
    \caption{\small $\tau$-test-based error detection algorithm}\label{alg:tau}
    \KwIn{An \SC, Dataset $D$, $k$}
    \KwOut{$k$ records}
    $Q \leftarrow \emptyset$; // {\scriptsize Priority Queue} \\
    $R \leftarrow \emptyset$; // {\scriptsize Returned List} \\
    $W \leftarrow \emptyset$; // {\scriptsize A hash table that maps a pair to its weight} \\
    $\textsf{benefit}(r) = 0$ for $r \in D$ \;
    
     // Initialization \\
    \For{$r_i \in D$}
    {
    \For{$r_{i+1} \in D$}
    {
     \If{$r_i, r_j$ \emph{are concordant}}
     {
       $\textsf{benefit}(r_i)$ += 2; ~~ $\textsf{benefit}(r_{i+1})$ += 2; ~~ $W(r_i,r_{i+1}) = 2$;

     }
     \ElseIf{$r_i, r_j$ \emph{are tied}}
     {
       $\textsf{benefit}(r_i)$ += 1;~~ $\textsf{benefit}(r_{i+1})$ += 1; ~~$W(r_i,r_{i+1}) = 1$;

     }
     $Q.push(\langle r_i, \textsf{benefit}(r_i) \rangle)$; 
    }
    }
    
    // Iteration \\
    \For{$i = 1$ to $k$}{
        Add $Q.top()$ to $R$\;
        Update $Q;$ // {\scriptsize Update the benefit of each record $r \in Q$ that has a non-zero weight with $r_i$ ($W(r,r_{i})=2~\textrm{or}~1$)} \\
    
    %// Removing \\
    %$rs = Q.top()$ \;
    %$D.remove(rs)$ \;
    %$Relation.remove(rs)$ \;
    %$R.add(rs)$ \;
    %$k + = 1$
    }
   {\bf return} R\;
\end{algorithm}
\setlength{\textfloatsep}{.5em}% Remove \textfloatsep

%Then as stated in Section~\ref{subsec:framework}, we deployed two strategies $K$ strategy and $K^c$ strategy to develop the algorithm for $\tau$-test-based Error Detection. In context of this $\tau$ testing method, $K$ strategy implies greedily evaluate the influence of each data point. More specifically, regarding to the $\tau$ value and the provided \SC, we can calculate the corresponding $\tau$ value for each point and sort them. Notably, removing one data point will dynamically change the entire $\tau$ value, therefore after removing one data point(record) it is necessary to calculate or update the current $\tau$ value again and sort the records based on new value. 

% \begin{equation*}
% \begin{aligned}
% & n_c + n_n + n_t = T \\
% & \text{to max:~} ~ n_c-n_n  \\
% & \text{replace~}~ n_c~ \text{~by~} ~n_t \\
% & n_c-n_n = n_c-(T-n_t-n_c) \\
% & n_c-n_n = 2n_c+n_t-T \\
% & \text{where}~T= \frac{N\cdot(|N|-1)}{2}
% \end{aligned}
% \end{equation*}

% \na{TO JN: I will see whether I can prove this NP-Hardness. If so, I gonna put the lemma here} \jn{It's not a high priority. Do it in the end if there's extra time.}

\begin{example}\label{exa:tau-algo}
Following the running example in Figure~\ref{fig:running-example}, we apply the $tau$ testing method to detect data errors w.r.t. $\SC_1 =\textsf{Price} \nindep \textsf{Fuel}$, and let $k=1$.

\noindent\underline{Step 1: Initialization.} We calculate the benefit of each record on the entire dataset, and obtain the priority queue $Q=\{r_4:0, r_5:9, r_7:9, r_1:10, r_2:10, r_8:10, r_3:12\}$. Intuitively, the benefit of a record (e.g., $\textsf{benefit}(r_5) = 9$) means that if $r_5$ was removed, the objective value would be decreased by 9.    %calculate the correlation value. We convert the $\tau$ value into a non-negative $\tau$ as mentioned above. As $\textsf{Price} \nindep \textsf{Fuel}$, the expected $\tau$ value of each record should be larger. Hence we can have the priority heap $Q=\{r_4:0, r_5:9, r_7:9, r_1:10, r_2:10 r_8:10, r_3:12\}$. 

% list $L=\{r_1:10, r_2:10, r_3:12, r_4:0, r_5:9, r_6:12, r_7:9, r_8:10\}$. 

\noindent\underline{Step 2: Removing and Updating.} Return the top element of $Q$, in this case $r_4$. Then remove $r_4$, and update the priority queue. Since $r_4$ shares a zero weight with every other record, the benefit of each record keeps unchanged. The updated queue will be $Q=\{r_5:9, r_7:9, r_1:10, r_2:10, r_8:10, r_3:12\}$. 

\noindent\underline{Step 3: Repeating Step 2.} We repeat Step 2 until $k$ records are returned. Here, since $k = 1$, only $r_4$ will be returned. 

\end{example}

\noindent \textbf{Efficiency Analysis.} The main computational bottleneck of the $K$ strategy is the initialization phase. Consider a dataset with two columns: $D = \langle x_1, y_1 \rangle, \cdots, \langle x_n, y_n\rangle$. We need to initialize the benefit of each record. For a single record, this requires comparing with the other $n-1$ records, leading to a time complexity of $\mathcal{O}(n)$. Therefore the naive implementation of the initialization phase needs $\mathcal{O}(n^2)$ time. This does not scale to large datasets (e.g. 1M records). 
% For example, suppose there are 1 million records. Then, it requires $\mathcal{O}(10^{12})$ time, which is highly expensive. 

%We find that 
The time complexity of the initialization phase can be reduced to $\mathcal{O}(n \log{n})$ with a segment tree. A segment tree is a tree data structure, where each node stores information about a segment. It allows for inserting a segment and querying a segment with both $\mathcal{O}(\log{n})$ time. To apply this idea, we first sort $D$ by column $X$ and then scan the records in $D$ based on this new order. For each record $\langle x_i, y_i \rangle$, we can get the number of concordant pairs of the record by querying the segment of $(-\infty, y_i)$, and the number of discordant pairs by querying the segment of $(y_i, +\infty)$. Once the two numbers are obtained, we insert a segment $[y_i, y_i]$ (representing a single point) to the tree. We need $\mathcal{O}(\log{n})$ time to process each record, thus the total time complexity is $\mathcal{O}(n \log{n})$. With this optimization, error drill-down can handle a dataset with millions of records. Further details are in Appendix~\ref{app:efficient-k}.

\section{Consistency Checking} \label{cck}

%We further study two properties of Statistical Constraints, and one fundamental problem associated with Statistical Constraints.
In this section, we first define the consistency checking problem, then discuss different approaches for different types of constraints.
%based on the well-known Graphoid Axioms~\cite{pearl2}, and then propose a consistent-checking algorithm. 
%Unfortunately, the algorithm has an exponential time complexity. We 
Intuitively, if a set of \SCs $\Sigma$ is consistent, it means that $\Sigma$ has no conflict. That is, there exists a joint distribution that satisfies $\Sigma$. 
%Definition~\ref{def:consistency} presents a formal definition.

\begin{Definition}[Consistency]\label{def:consistency}
Given a set $\Sigma$ of \SCs, let $\set{V}$ denote the set of random variables that appear in $\Sigma$. The consistency problem is to determine whether there exists a joint distribution $P(\set{V})$ that satisfies every \SC in $\Sigma$. 
\end{Definition}

\begin{figure}[t]%\vspace{-1.5em}
   \centering 
   \includegraphics[width=0.9\linewidth]{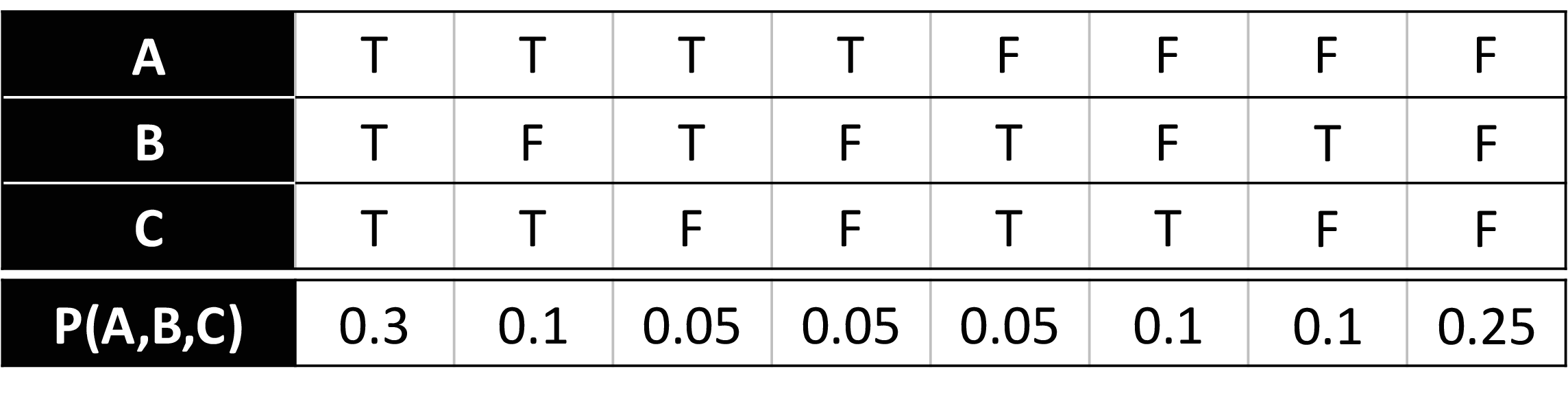} 
   \vspace{-1em}
   \caption{An example of a joint distribution that satisfies $\Sigma = \{A \indep B,\; A \nindep B| C\}$}\label{fig:joint}%\vspace{-.5em}
\end{figure}

For example, consider $\Sigma = \{A \indep B,\; A \nindep B| C\}$. Since there are three random variables in $\Sigma$, we have $\set{V} = \{A, B, C\}$. As shown in Figure~\ref{fig:joint}, we can construct a joint distribution $P(A, B, C)$ that satisfies $\Sigma$, thus $\Sigma$ is consistent. If we are given $\Sigma' = \{A \indep B,\; A \nindep B\}$, it is impossible to construct a joint distribution that satisfies $\Sigma'$, thus $\Sigma'$ is inconsistent. 

% Assuming that the user has in mind a consistent set of constraints, one or more in a set of inconsistent constraints must contradict the user's domain knowledge. 
% A consistency checking method is useful for eliciting constraints from a user 
% %the interactive setting 
% because it helps catch misspecifications~\cite{bohannon2007conditional}. 
% %In the automatic setting, if 
% If the constraints are not consistent, %when applying it to error detection, 
% checking the constraints against the data will lead to contradictory error-detection results, where some constraint is violated regardless of what the data are. 
%Being able to check whether a given set $\Sigma$ of \SCs is consistent or not is important because 

For complete models and elementary saturated constraints (Section~\ref{sec:sc-discovery}), consistency checking is easy. (1)  
%An attractive feature of representing the user's 
\SCs represented in a {\em complete model structure} is {\em guaranteed to be consistent} because parametrizing the elicited model defines a probability distribution %$P(\set{V})$ 
that satisfies the constraints. (2) For \SCs statements that are saturated and elementary,
%meaning that they connect exactly two variables $\Xvariable$ and $\Yvariable$, 
Heckerman et al. show that a set of \SCs is consistent if and only if it satisfies the symmetry axiom described in Section~\ref{def:graphoid-axioms} \cite{Heckerman2000}, which is easy to check. We therefore focus our discussion on general \SCs.

\subsection{Consistency Checking By Inference} \label{subsec:consistency}

Determining the consistency of general \SC statements is a very difficult problem; it is not even known whether it is decidable \cite{Niepert2013}. Nonetheless,  AI researchers have developed a number of heuristic approaches~\cite{Niepert2013,pearl2,studeny1990conditional}.
A principled approach is to use an \emph{inference framework} for deriving a contradiction~\cite{pearl2}, as illustrated in Algorithm~\ref{alg:check}.  A similar inference framework  is used to check the consistency of a set of functional dependencies \cite{bohannon2007conditional}. 
For example, consider $\Sigma = \{A \indep B,\; A \nindep B| C\}$. We have $\mathcal{I} = \{A \indep B\}$  and $\mathcal{D} = \{A \nindep B| C\}$. We can see that $\mathcal{D}$ does not imply  $\lnot \lambda = A \indep B| C$, thus $\Sigma$ is consistent. 

Once an inconsistency is derived by implication, {\em the derivation can help the user to resolve the inconsistency.}  Specifically, suppose that the consistency-checking algorithm finds a conflict: $\mathcal{I}$ implies $A \indep B$ but $A \nindep B$ is in $\Sigma$. We ask the user to check which one is correct. If the former is correct, 
%then $A \nindep B$ should not be in $\Sigma$, and 
we remove $A \nindep B$ from $\Sigma$. Otherwise, we show the derivation of $A \indep B$  from $\mathcal{I}$ and then ask the user to remove the incorrect \SCs involved in the derivation. The implication approach requires an inference system for Independence SCs, which we describe next.

\begin{algorithm}[t]
	\caption{\small Consistency-checking By Implication}\label{alg:check}
	\KwIn{A set of \SCs: $\Sigma$}
	\KwOut{Consistent or not (YES or NO)}
	\vspace{0.5em}
	Divide $\Sigma$ into independencies $\mathcal{I}$ and dependencies $\mathcal{D}$\;
	Iteratively apply inference to $\mathcal{I}$ until a fixed point $\mathcal{I}^*$\;
	\For{$\lambda \in \mathcal{D}$}
	{
	    \If{$\lnot \lambda \in \mathcal{I}^*$}
	    {
	        {\bf return} NO\; 
	    }
	}
	{\bf return} YES\;
\end{algorithm}
\setlength{\textfloatsep}{.5em}% Remove \textfloatsep

\vspace{.5em}

{\noindent \bf The Graphoid Axioms for Independence SCs.} A commonly used set of axioms is Geiger and Pearl's system $\graphoid$~\cite{pearl2}.

% Xu et al. develop  for DCs. A natural question is: can we develop a sound and complete inference system for \SCs? In the following, we first introduce some background knowledge about the well-known \emph{Graphoid Axioms}~\cite{pearl2}, and then present our inference system derived from the axioms. 

%{\noindent \bf Graphoid Axioms.} %In the Statistics field, 
%Geiger and Pearl~\cite{pearl2} showed that conditional independence statements satisfy the following  axioms.
%(inference rules).
%known as the \emph{Graphoid Axioms} $\mathcal{A}$. 

\begin{Definition}\label{def:graphoid-axioms}
Let $\set{X}$, $\set{Y}$, $\set{Z}$, and $\set{W}$ denote four disjoint sets of random variables. The \emph{graphoid axioms} $\graphoid$ are the following:
%\oliver{please compress by making a table, see Niepert's paper}

\vspace{.5em}

{\noindent Symmetry}: \small{$\set{X} \indep \set{Y} \mid \set{Z}  \quad \Longleftrightarrow \quad  \set{Y} \indep \set{X} \mid \set{Z}$}

{\noindent Decomposition}: 	\small{$\set{X} \indep \set{YW} \mid \set{Z} ~~\Longrightarrow~~ \set{X} \indep \set{Y} \mid \set{Z} ~~ \& ~~  \set{X} \indep \set{W} \mid \set{Z}$}

{\noindent Weak Union}: \small{$\set{X} \indep \set{YW} \mid \set{Z}  ~~ \Longrightarrow~~  \quad \set{X} \indep \set{Y} \mid \set{ZW}$}

{\noindent Contraction}: \small{$\big (\set{X} \indep \set{Y} \mid \set{Z} \big ) ~~ \& ~~ \big (\set{X} \indep \set{W} \mid \set{YZ} \big ) ~~ \Longrightarrow ~~  \set{X} \indep \set{YW} \mid \set{Z}$}

\iffalse
{\noindent \emph{Symmetry}}: 
\[\set{X} \indep \set{Y} \mid \set{Z}  \quad \Longleftrightarrow \quad  \set{Y} \indep \set{X} \mid \set{Z}\]

{\noindent \emph{Decomposition}}: 	
\[
\set{X} \indep \set{YW} \mid \set{Z}  \quad  \Longrightarrow \quad 
\begin{cases}
\set{X} \indep \set{Y} \mid \set{Z} \\
\set{X} \indep \set{W} \mid \set{Z}
\end{cases} 
\]

{\noindent \emph{Weak Union}}:
\[\set{X} \indep \set{YW} \mid \set{Z}  \quad \Longrightarrow  \quad \set{X} \indep \set{Y} \mid \set{ZW}\]

{\noindent \emph{Contraction}}:
\[\big (\set{X} \indep \set{Y} \mid \set{Z} \big ) ~~ \& ~~ \big (\set{X} \indep \set{W} \mid \set{YZ} \big ) \quad \Longrightarrow \quad  \set{X} \indep \set{YW} \mid \set{Z}\]

\fi
% {\noindent \emph{Intersection}}:
% \[\big (\set{X} \indep \set{Y} \mid \set{ZW}\big )  ~~ \& ~~  \big (\set{X} \indep \set{W} \mid \set{Z}\big ) \quad \Longrightarrow \quad   \set{X} \indep \set{YW} \mid \set{Z}\]
\end{Definition}
%\vspace{-.5em}
\noindent Note that $\set{Z}$ can be the empty set.
%thus the above axioms also hold without $\set{Z}$. 
For example, for the symmetry axiom, we have $\set{X} \indep \set{Y}   \Longrightarrow \set{Y} \indep \set{X}.$ 
%The system $\composition$ results from adding to $\mathcal{A}$ the composition principle:

\vspace{.5em}
% \oliver{please double-check the example} 

\noindent \emph{Example.} Suppose we are given three constraints:
\[\Sigma = \{\SC_1:  A \indep B, ~~~~\SC_2:  A \indep C | B, ~~~~\SC_3: A \nindep C\}.\] 
Applying the Contraction rule to $\SC_1$ and $\SC_2$:
\[(A \indep B) ~~\&~~  (A \indep C | B) \quad  \Longrightarrow \quad A \indep BC.\]
Applying the Decomposition rule to $A \indep BC$:
\[(A \indep BC) ~~\Longrightarrow~~  A \indep B ~~\&~~  A \indep C.\]
Then, we can see that $A \indep C$ is inconsistent with the given $\SC_3$. Therefore, $\Sigma$ is inconsistent.

%\oliver{should switch notation to X,Y,Z to be consistent}

% \todo{do we need a new example?}

% \todo{should switch notation to X,Y,Z to be consistent}

% \begin{example}~\label{exa:implication}
% Given a set of independence \SCs $\mathcal{I} = \{A\indep B, A \indep C | B, B\indep C,  C \indep E | D\}$ over attributes $M=\{A, B, C, D, \\
% E\}$. To compute $\mathcal{I}^*$, we initially set $\mathcal{I}^* = \mathcal{I}$, and then apply the implication algorithm.

% \noindent {\underline{Step 1: Constructing Hash Table}}. $H=\{\{A,B\},\{B,C\}\}$ will be generated according to the marginal \SCs  in  $\mathcal{I}^*$, i.e., $A \indep B$ and $B \indep C $.

% \noindent {\underline{Step 2: Employing Inference System $\mathcal{G}$}}. We check each conditional \SC in $\mathcal{I}^*$. For $\SC_1:A \indep C | B $, we first check whether $\{A,B\}$ or $\{B,C\}$ exists. Obviously, we have both of them in the $H$. Applying the \emph{contraction-weak-union-decomposition} inference rule in $\mathcal{G}$, we can obtain three new \SCs: $A\indep B|C$, $B \indep C |A$, and $A \indep C$. We add them into $\mathcal{I}^*$, and hash $A \indep C$ as $\{A,C\}$ and add it into $H$. 

% \noindent {\underline{Step 3: Repeating}}. Repeat Step 1 and Step 2 till there are no more \SCs that can be inferred. In this example, we eventually arrive at $\mathcal{I}^* = \{A\indep B, A\indep C, B\indep C, A \indep B|C, A \indep C | B, B \indep C |A,  C \indep E | D\}$. 

% \end{example}

\vspace{.5em}

{\noindent \bf Soundness and Completeness.}
%For existing row-oriented constraints, 
Traditional functional dependencies (FDs) have a sound and complete inference system that is known as Armstrong's Axioms~\cite{abiteboul1995foundations}. Wenfei et al. extend Armstrong's Axioms to CFDs and prove that the new inference system is also sound and complete~\cite{bohannon2007conditional}, but deciding the consistency of a set of CFDs is NP-hard.
%\na{do we hae citations here?}. 
Unfortunately, inference for \SCs is a hard problem: a major negative result due to Studeny says that no finite set of axioms is both sound and complete for general \SCs \cite{studeny1990conditional}.  The system $\graphoid$ is sound~\cite{pearl2} (for strictly positive joint probability distributions), but by Studeny's result, not complete. The situation for general \SCs is thus comparable to that for DCs, where we currently have a set of sound (but not complete) inference rules~\cite{chu2013holistic}. Niepert et al. introduce another set of axioms $\niepert$ for \SCs, which is complete but not sound \cite{Niepert2013}. They suggest applying both $\graphoid$ and $\niepert$ to a given set of constraints, which allows the implication algorithm to both falsify some cases of inconsistent \SCs (via $\graphoid$) and validate some cases of consistent SCs (via $\niepert$). We leave evaluating system $\niepert$ for future work. 

% \vspace{.5em}

% {\noindent \bf A General Inference System for \SCs.} 

% \paragraph{Time Complexity.} 

% The implication algorithm needs $\mathcal{O}(|\mathcal{I}^*|)$ iterations, and each iteration needs $\mathcal{O}(|\mathcal{I}^*|)$ time. The total time complexity is $\mathcal{O}(|\mathcal{I}^*|^2) = \mathcal{O}((6|\mathcal{I}|)^2) = \mathcal{O}(|\mathcal{I}|^2)$. For the consistency-checking algorithm, the main bottleneck is spent in running the implication algorithm to compute $\mathcal{I}^*$ (see Line 2 in Algorithm~\ref{alg:check}), thus the total time complexity is $\mathcal{O}(|\mathcal{I}|^2)$.

\vspace{.5em}

{\noindent \bf Time Complexity.} 
%for $\set{V}_k$ Inferences} 
We next show that the inference method based on system $\graphoid$ is efficient if the number of variables that appear in any of the input \SCs is relatively small. 
%More precisely, if $k$ is the maximum number of variables in any input \SC, the complexity is pseudo-polynomial in the parameter $k$.
%\footnote{Even more precisely, the time complexity is  $O(\sum_{m=0}^{k} \binom{m}{n} \times 3^m)$.}
%
To analyze the time complexity of the implication algorithm, we answer two  questions:\\
\indent \emph{Q1. What is the time complexity of generating a new \SC?} \\
\indent \emph{Q2. How many new \SCs can be generated in total?}

%\vspace{0.25em}

%\noindent 
For $Q1$, consider the two inference rules in Definition~\ref{def:graphoid-axioms}. For the rules with a single left-hand \SC, the algorithm can check each derived \SC for whether it can be used to generate a new \SC. The time complexity of this process is $\mathcal{O}(|\mathcal{I}^*|)$. For the rules with two left-hand \SCs, 
%contraction-weak-union-decomposition rule, 
a simple approach is to enumerate every pair of \SCs in $\mathcal{I}^*$ and then check whether the rule can be used or not, requiring $\mathcal{O}(|\mathcal{I}^*|^2)$ time. 
% In fact, we can utilize a hash table to reduce the time complexity to $\mathcal{O}(|\mathcal{I}^*|)$. Specifically, we build a hash table for marginal \SCs (e.g., $A \indep B$), and then for each conditional \SC (e.g.,  $A \indep C | B$), we check whether $A \indep B$ or $C \indep B$ exists in the hash table. 
%This avoids $n^2$ comparisons. 
In total, our algorithm needs $\mathcal{O}(|\mathcal{I}^*|^2)$ time to generate a new \SC at each iteration.      

%\vspace{0.25em}

%\noindent 
For $Q2$, we seek to compute an upper bound for the number of newly generated \SCs. The key observation is that {\em for any inference rule in $\mathcal{G}$, every variable that occurs on the right-hand SC also occurs on the left-hand side.} For example, in the contraction rule, four sets of variables occur on the left-hand side, and the same four sets on the right-hand side. We say $\SC_1$ \emph{covers} $\SC_2$ if the variable set of $\SC_1$ is a superset of the variable set of $\SC_2$. For a fixed set of $m$ variables, there are $O(3^m)$ possible \SCs with these $m$ variable (depending on whether each variable is assigned to the first set, the second set, or the conditioning set). Given the key observation, we can therefore bound the number of $\SCs$ generated by a single $\SC$ as 
$O(\sum_{m=2}^{\ell} \binom{\ell}{m} \times 3^m.)$ (Each \SC must involve at least two variables.) Overall the number of number of $\SCs$ generated can therefore be bounded by  $O(|\mathcal{I}| \sum_{m=2}^{\ell} \binom{m}{\ell} \times 3^m)$, where $\ell$ is the largest number of variables that occurs in any of the input $\SCs$. This shows that {\em the computational cost of the implication algorithm is pseudo-polynomial in the number of input $\SCs$ $\mathcal{I}$ and the parameter $\ell$, meaning polynomial if $\ell$ is small enough to be treated as a constant~\cite{Garey1979}.} 

%consistency checking problem
\section{Experiments}\label{sec:exp}
% \begin{figure}[t]
%   \centering 
%   \includegraphics[scale=0.45]{fig/exp/scs_check/efficiency.eps} 
%   \vspace{-2em}
%   \caption{Time for Consistency Checking of \SCs (``0.5 Times'' means that the ) \label{fig:scs_check}\vspace{-2em}
% \end{figure}

\iffalse
\begin{figure*}[t]
   \centering 
   \includegraphics[scale=0.69]{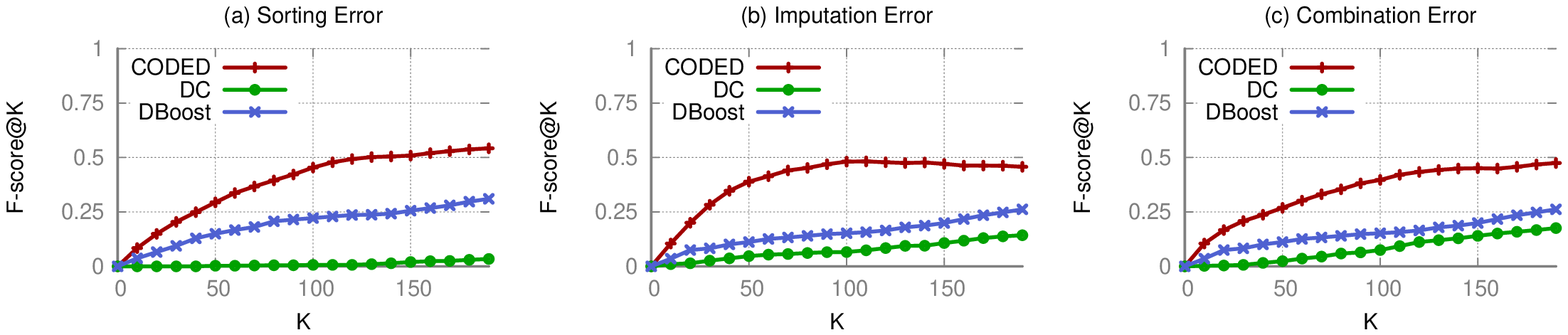} 
   \vspace{-1em}
   \caption{Effectiveness of error detection methods for conditional dependence \SCs}\label{fig:dep_score_condition}\vspace{-1em}
\end{figure*}

\begin{figure*}[t]
   \centering 
   \includegraphics[scale=0.69]{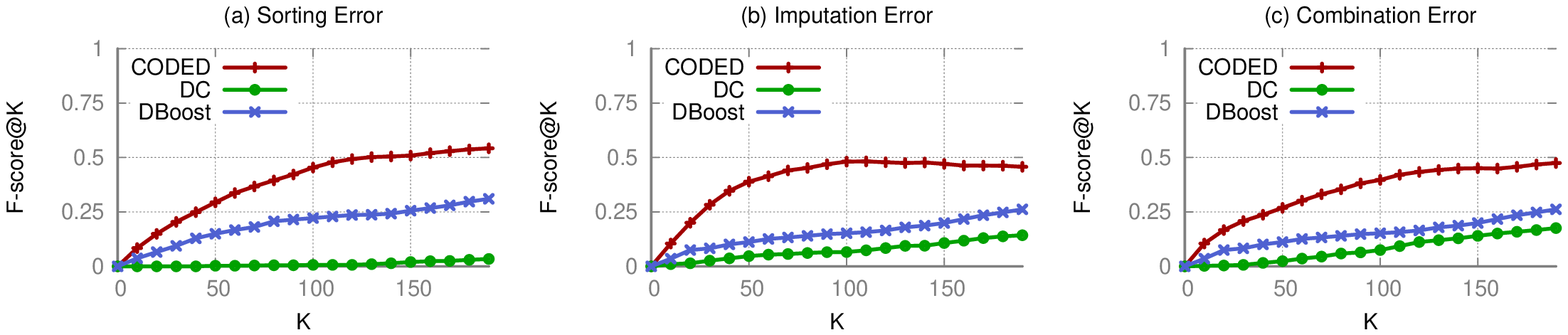} 
   \vspace{-1em}
   \caption{Effectiveness of error detection methods for conditional Independence \SCs}\label{fig:dep_score_condition}\vspace{-1em}
\end{figure*}
\fi

\begin{table*}[h]
 \centering \small
 \caption{Constraints used by CODED and DC 
%  \jn{What about the constraints used by the Car dataset?}
 } \label{tab:con-summaries} \vspace{-1em}
   \begin{tabular}{|l|c|l|}
   \hline
       \textbf{Attributes} & \textbf{CODED} & \textbf{Denial  Constraints} \\ \hline
   \hline
    N\_oxide(N), Distance(D) & $N \nindep D$ & For any two records $r_1$ and $r_2$, if $r_1[N] > r_2[N]$, then $r_1[D] < r_2[D]$\\ \hline
    Rooms(R), Black Index(B) & $R \indep B$ &  $\times$ \\ \hline
%     For any two records, if the N of one is larger, \\
%   \hline
  Tax rate, Black Index, Crime(C) & $T \nindep B~ |~C$ & For any $r_1$ and $r_2$ with $r_1[C]=r_2[C]$, if $r_1[T] > r_2[T]$, then $r_1[B] < r_2[B]$\\ \hline
  N\_oxide, Black Index, Tax rate (T) & $N \indep B ~|~T$ &$\times$  \\ \hline

  Buying Price(BP), Class(Cl) & $BP \nindep Cl$ &For any two records $r_1$ and $r_2$, if $r_1[BP] > r_2[BP]$, then $r_1[Cl] > r_2[Cl]$ \\\hline
  Safety(SA), Doors(DR) & $SA \indep DR$ & $\times$  \\\hline
  Temperatures (T) of Sensor~8 and Sensor~9 & $T_8 \nindep T_9$ &  For any two $r_1$ and $r_2$, if $r_1[T_8] > r_2[T_8]$, then $r_1[T_9] > r_2[T_9]$ \\\hline
  Games(G), Goal Plus-Minus(GPM), Year(Y)& $G \indep GPM ~| ~Y$ & $\times$  \\\hline
   \end{tabular}%
 \label{tbl:constraints}\vspace{-.5em}
\end{table*}%

We evaluate the effectiveness and efficiency of our methods on real-life datasets with both synthetic errors and real errors. Specifically, we examine (1) the computational efficiency of the consistency-checking methods, (2) the effectiveness of our method compared to the state-of-the-art approaches on detecting synthetic and real-life errors, and (3) the scalability of our error-detection method.

%By these experiments, we examined  our algorithms, and discovered limitations  %of our methods for further improvement. 

\subsection{Experiment Setup}
\noindent \textbf{Datasets.} We evaluated our approaches on five real datasets.

{(1) \tt BOSTON\footnote{\scriptsize https://www.cs.toronto.edu/~delve/data/boston/bostonDetail.html}.} The Boston dataset was taken from the Boston Standard Metropolitan Statistical Area (SMSA) in 1970. This dataset was first used in~\cite{Harvard} to study the relationship between clean air quality and household's willing to pay. There are 506 instances, and each instance has 14 attributes. We used $6$ attributes: Distance to CBD area-Distance (D), Nitric Oxides Concentration-N\_oxide (N), Crime Rate-Crime (C), Black index of population(B), Rooms(R) and Tax Rate(T). 
% \jn{List all the attributes used in the experiment (see Table~\ref{tbl:constraints}). Also, please use the same abbreviations as Table~\ref{tbl:constraints}.}

{(2) \tt CAR\footnote{\scriptsize  https://archive.ics.uci.edu/ml/datasets/Car+Evaluation.}.} The Car Evaluation dataset is from UCI Machine Learning repository. This dataset contains seven attributes. We used 4 attributes: Buying price(BP), Car Class (CL), Doors(DR), and Safety level(SA).
% \jn{List all the attributes used in the experiment.}

{(3) \tt HOSP\footnote{\scriptsize http://www.hospitalcompare.hhs.gov}}. The HOSP dataset contains 100K records with 19 attributes. It was used in previous data-cleaning studies\cite{chu2013holistic,DBLP:journals/pvldb/RekatsinasCIR17}. We got the clean and dirty versions of the dataset from~\cite{chu2013holistic}.  %The original dataset is clean In this paper, we use HOSPHOLO, the one used in \cite{chu2013holistic} and HOSPMISS, the one with simulated missing value imputation errors. 

{(4) \tt SENSOR\footnote{\scriptsize http://db.csail.mit.edu/labdata/labdata.html}.} The Sensor dataset collected the sensor reports from the Berkeley/Intel Lab. The dataset has more than 2 million records, containing the humidity and temperature reports from 54 different sensors. %There are more than 2 million records in the dataset. This dataset~[???] has been widely used in data error detection, data repairing, and data curation.

{(5) \tt HOCKEY\footnote{\scriptsize https://github.com/liuyejia/Model\_Trees\_Full\_Dataset documents clean and dirty versions.}.} The Hockey dataset collected the records of each NHL game from 1998-2010. It has more than ten attributes which variously describe player attributes and player performance statistics for a season. 
%the critical information of the game, player performance, and etc.

% (5) {\tt HOCKEY\footnote{\scriptsize https://github.com/liuyejia/Model\_Trees\_Full\_Dataset }}. HOSP, is from US Department of Health &
% Human Services \cite{chu2013holistic}. HOSP has 100K tuples with 19 attributes
% %the critical information of the game, player performance, and etc.

\vspace{.25em}

\noindent \textbf{Simulated Errors.} The purpose of the use of simulated errors is to help us gain a deep understanding of the effectiveness of our approach in various situations. We simulated two types of errors: \emph{sorting error} and \emph{imputation error}. As shown in Section~\ref{sec:intro}, both types of errors have appeared in reality. 

For the sorting error, we selected $\alpha\%$ of column $A$ (randomly or based on column $B$) and sort its values in an ascending order; for the imputation error, we selected $\alpha\%$ of column $A$ (randomly or based on column $B$) and replaced them with the mean value of column A.  $\alpha\%$ is called \emph{error rate}. Note that the sorting error (the imputation error) may either make two columns $A$ and $B$ more independent or less independent based on whether the values are selected randomly or based on column $B$. We used random selection for dependence \SCs, column $B$ for independence \SCs. We also explored the combined impact of the two error types. Our \emph{combination error} consists of 80\% sorting error and 20\% imputation error.

\noindent \textbf{Real-life Errors.} The Sensor and Hockey datasets contain real-life errors. 
To compress the Sensor dataset, we replaced sensor readings by their hourly average. 
%first aggregated the mean value of sensor readings by hours. 
% \oliver{I don't get this. Is it that we compute the mean value and use that to define outliers?}\na{No, it is a way, how we compres the dataset.}
The aggregated dataset has many outliers. Since some outliers are very easy to detect, we considered the following scenario to make the experiment more challenging. Consider two data scientists: Alice and Bob. Alice first removed easy-to-detect outliers (e.g., temperature > 100 \textdegree C)  and replaced them with the mean temperature. When Bob got the dataset from Alice, he did not know that the raw dataset was changed by Alice. Imagine Bob wanted to use the dataset differently (e.g., count the outliers in the raw dataset). In this situation,  Bob  wanted to detect not only the remaining outliers but also the outliers removed by Alice. The ground-truth of the dataset was obtained using the method in~\cite{jeffery2006declarative}. 

%We adopted the method in~\cite{jeffery2006declarative} to define the ground-truth of the outliers. 

Hockey is a public dataset used for hockey data analytics. We originally thought that the dataset had no error. However, our hockey knowledge specified an \SC, (last row in Table~\ref{tbl:constraints}), which we noted was violated in the table. 
%\todo[color=green,inline]{Oliver: I will check this and elaborate} 
It turned out that the dataset provider  filled all the missing values with 0 or mean values. 
%We considered them as data errors since they may lead to incorrect analysis results.  
The correct ground-truth of the dataset was obtained from other reliable hockey websites.

\vspace{.25em}

\noindent \textbf{Error-Detection Approaches.} We compared CODED with three state-of-the-art error detection approaches.

%: Denial Constraints(DC)~\cite{chu2013discovering}, DBoost~\cite{mariet2016outlier}, and Approximate Functional Dependency(AFD) \cite{mandros2017discovering} . Table~\ref{tbl:constraints} summarized the constraints used by CODED and DC. For other datasets, we cannot extract proper FDs for the comparison. Hence, since HOSP naturally has understandable Functional Dependencies, we just compare the effectivenes of CODED and AFD with two HOSP datasets. we take the \textsf{\small Zipcode -> City} and \textsf{\small Zipcode -> State} as the Approximate Functional Dependency. As illustrated in Sec.\ref{sec:scvsic}, we correspondingly converted the AFDs above into two SCs: \textsf{\small Zipcode $\nindep$ City} and \textsf{\small Zipcode $\nindep$ State} for CODED.

{\tt Denial Constraints (DC)}~\cite{chu2013discovering} is an integrity constraint based error detection approach. The original DC approach does not support top-$k$ error detection. We extended it as follows: for each record $r$, count the number of other records that are inconsistent with $r$ given the DC. Then  return the top-$k$ records that involve the most number of violations. Table~\ref{tbl:constraints} summarizes the constraints used by DC. `$\times$` means that we cannot find a DC to represent the same independence relationship as CODED (cf. Section~\ref{sec:scvsic}).

{\tt DBoost}~\cite{mariet2016outlier} is the state-of-the-art outlier detection approach. This is also used by \cite{abedjan2016detecting} to compare different types of error-detection approaches. 
% \oliver{so if it's an outlier detection method, how can you use it to detect errors? By setting "outlier = error"?} 
We used an implementation available online\footnote{\scriptsize https://github.com/cpitclaudel/dBoost}. We applied DBoost with three models: GMM, Gaussian and Histogram. For categorical data, we employed the bin width that achieves the best f-score results. For numeric data, we employed Gaussian and GMM with the mixture parameter n$\_$subpops threshold set at $3, 0.001$, and the statistical epsilon to be~$0$.

{\tt Approximate Functional Dependency(AFD)}~\cite{mandros2017discovering} is an error detection approach based on approximate constraints. %Intuitively, an AFD tolerates the tuples that do not always stick to the coresponding FD. 
To make AFD support top-$k$ error detection, we extended it by 
%counting the violations of AFD for each record and 
returning the top-$k$ records that lead to the most number of violations. We considered two AFDs on the HOSP dataset \textsf{\small Zipcode -> City} and \textsf{\small Zipcode -> State}, and compared with CODED w.r.t. \textsf{\small Zipcode $\nindep$ City} and \textsf{\small Zipcode $\nindep$ State}.

{\tt CODED} is our \SC-based error detection approach. Table~\ref{tbl:constraints} summarized the constraints used by CODED. For the CODED hypothesis testing, we used the $\chi^2$ test for categorical data, and the $\tau$ test for numerical data. We implemented the error-drill-down framework, and adopted the $K$ strategy for dependence \SCs and the $K^c$ strategy for independence \SCs. 

%For the CODED hypothesis testing, we used the $\chi^2$ test for categorical data, and the $\tau$ test for numerical data. We implemented the error-drill-down framework, and adopted the $K$ strategy for dependence \SCs and the $K^c$ strategy for independence \SCs. AFD is an approximate expression of FD, which tolerates the tuples that are not always sticking to the FD. We implement AFD for data error detection by counting the violations of AFD and returning top-$k$ records by the number of violations.  To the best of our knowledge, we are the first to implement AFD for data error detection. DC is a constraint-based error detection approach. We counted the DC violations of each record, and returned the top-$k$ records that involve the most number of violations. DBoost is an outlier detection approach. We used an implementation available online\footnote{\scriptsize https://github.com/cpitclaudel/dBoost}. We applied DBoost with three models: GMM, Gaussian and Histogram. For categorical data, we employed the bin width that achieves the best f-score results. For numeric data, we employed Gaussian and GMM with the mixture parameter n$\_$subpops threshold set at $3, 0.001$, and the statistical epsilon to be~$0$. 

%Table~\ref{tbl:constraints} summarized the constraints used by CODED and DC on different datasets.     
\vspace{.25em}

\noindent \textbf{Quality Measurement.} We considered two user scenarios.  (i) The user wants to manually examine a small number of records (e.g., $k = 50$) in order to reason about data errors. For this scenario, since $k$ is fixed, we need to maximize \emph{Precision@K}, which is defined as the ratio of the number of correctly detected records to the number k. (ii) The user wants to detect all errors in order to repair them. For this scenario, as $k$ increases, recall increases while precision potentially decreases. We report \emph{Precision@K}, \emph{Recall@K}, and \emph{F-score@K} by varying. \emph{Precision@K} is the same as above. \emph{Recall@K} is the ratio of the number of correctly detected records among the returned $K$ records to the number of total erroneous records, and \emph{F-score@K} is their harmonic mean.
% \na{Change the description}

%we need to guarantee the precision of the top-$k$ returned records can cover as many erroneous records as possible. Thus, we chose \emph{Precision@K} for this scenario.  (ii) The user wants to detect   The first scenario is that the user . We used \emph{Precision@K}, \emph{Recall@K}, and \emph{F-score} to measure the quality of error-detection approaches. \emph{Precision} is the ratio of the number of correctly detected records to the number of total returned records.  \emph{Recall} is the ratio of the number of correctly detected records to the number of total erroneous records. \emph{F-score} entails the balance between \emph{Precision} and \emph{Recall}, and it can be calculated as $\frac{2(P+R)}{P\times F}$. \emph{Precision@K} and \emph{Recall@K} are the precision, and recall among the \emph{k} returned records respectively. 

\vspace{-0.75em}
\subsection{Experimental Results}
% \oliver{order of figures doesn't match discussion - is that deliberate?}
% \na{It can be fixed later easily.}
\noindent \textbf{\underline{{Exp-1: Efficiency of Consistency Checking.}}}
We examined the efficiency of the consistency-checking algorithm. Note that consistency checking is {\em not} related to data. Therefore, randomly generating \SCs is appropriate to test the scalability of our %efficient 
consistency-checking algorithm. We randomly generated $|\mathcal{I}|$ independence \SCs and $|\mathcal{D}|$ dependence \SCs containing up to 3 out of $|\mathcal{D}|/2$ variables, where \SCs are elementary CIs (See Table \ref{table:ci-types}).  We measured the runtime of consistency checking by varying $|\mathcal{I}|$ and $|\mathcal{D}|$, respectively. The results are shown in Figure~\ref{fig:scs_check}. Each figure has three lines, which represent different ratios of $|\mathcal{I}|$ to $|\mathcal{D}|$.

We make two interesting observations. First, {\em our consistency-checking algorithm scales well} by varying either $|\mathcal{I}|$ or $|\mathcal{D}|$. Even with 500 independence (dependence) \SCs, the algorithm can terminate within 0.5 ms. Second, {\em the computation time is more related to $|\mathcal{I}|$ than $|\mathcal{D}|$}. This is because the consistency-checking algorithm proposed comprises two parts: implication and checking, where the implication algorithm needs $\mathcal{O}(|\mathcal{I}^2|)$ time, which typically dominates the whole process. 
% \oliver{what about accuracy? I.e. are all and only inconsistent inputs recognized as inconsistent?}\na{Yes, all are recognized.}

\begin{figure}[t]\vspace{-1em}
   \centering 
   \includegraphics[width=1\linewidth]{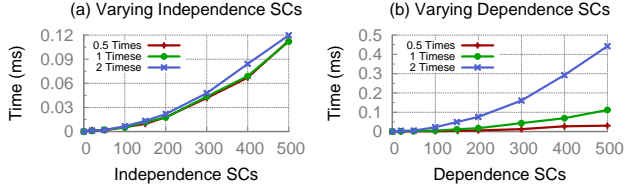} 
   \vspace{-2em}
   \caption{Time for consistency checking of \SCs 
%   \oliver{should that be divided by 10?}\na{it will be congested}
   }\label{fig:scs_check}%\vspace{-1em}
\end{figure}

\vspace{.25em}

\begin{figure*}[t]
   \centering 
   \includegraphics[width=0.85\linewidth, height=2.7cm]{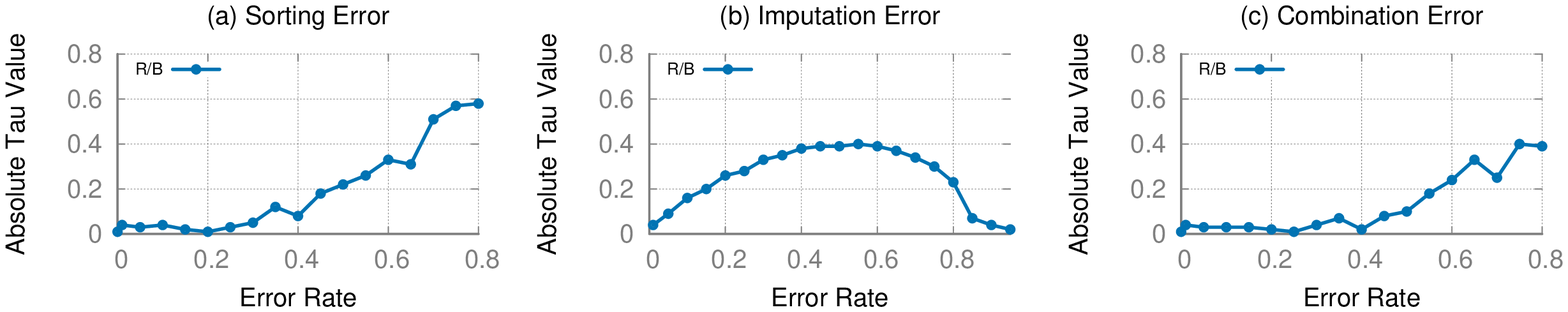} 
   \vspace{-1em}
   \caption{Different error impact on independence \SCs (Boston dataset) 
%   \jn{Change (b) to imputation error; Change (c) to combination error.  }
   }\label{fig:imp_indep}
 \vspace{.5em}
   \centering
   \includegraphics[width=0.85\linewidth, height=2.7cm]{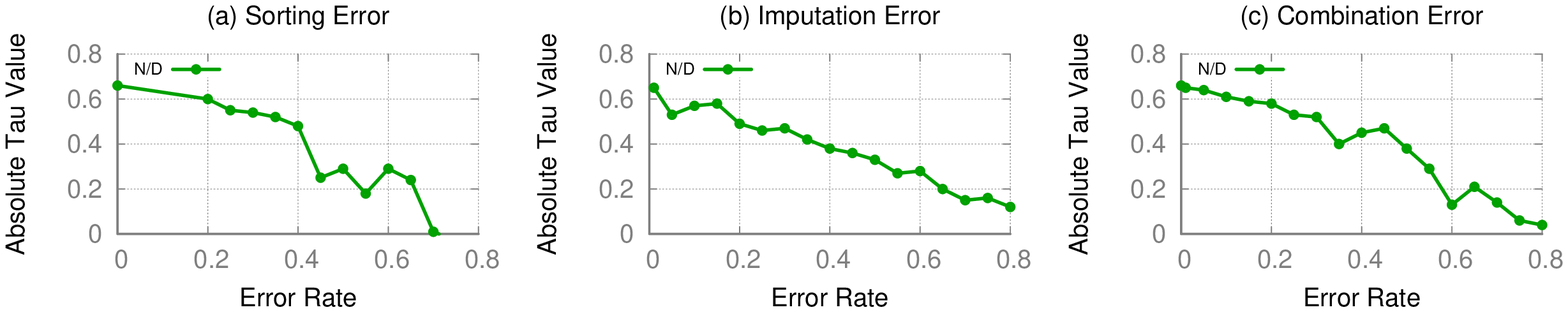} 
   \vspace{-1em}
   \caption{Different error impacts on dependence \SCs (Boston dataset) 
%   \jn{Change (b) to imputation error; Change (c) to combination error.  }
   }\label{fig:imp_dep}\vspace{-.em}
\end{figure*}

\begin{figure*}[t]
   \centering 
   \includegraphics[width=0.85\linewidth, height=2.7cm]{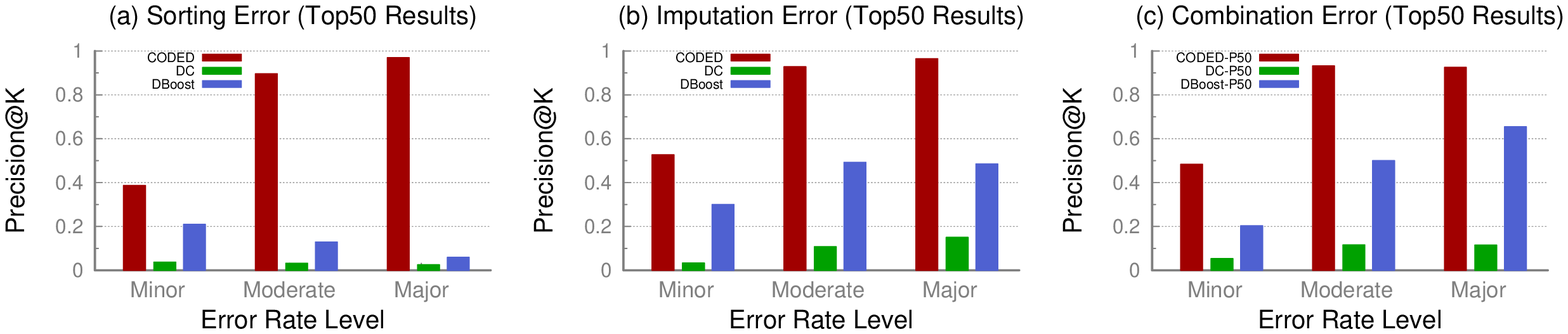} \vspace{-1em}
   \caption{Effectiveness of error detection methods for dependence \SCs (Boston dataset, $K=50$)}\vspace{1em}\label{fig:dep_bar_chart}
%\vspace{-1em}
   \centering 
   \includegraphics[width=0.85\linewidth, height=2.7cm]{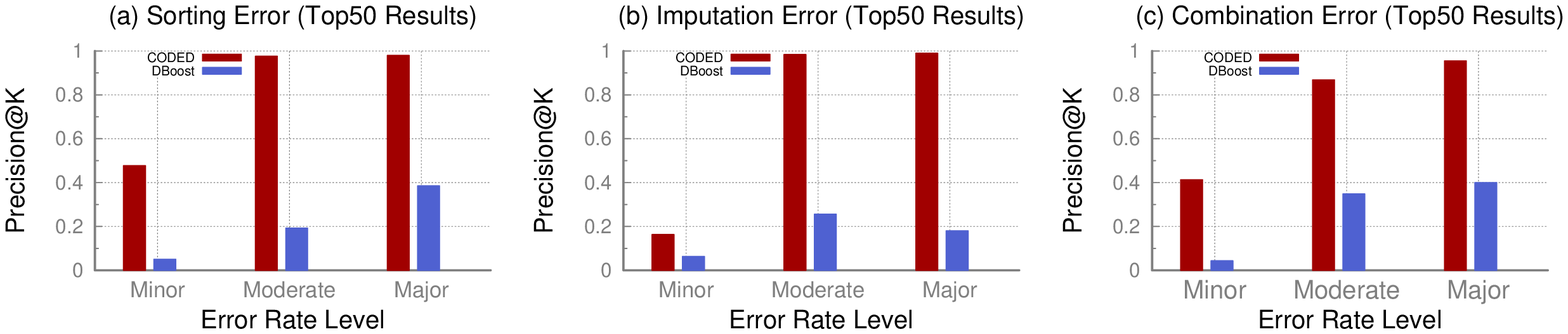}\vspace{-1em}
   \caption{Effectiveness of error detection methods for independence \SCs (Boston dataset, $K=50$)}\label{fig:indep_bar_chart}\vspace{-1em}
\end{figure*}

\begin{figure}[t]
  \centering 
  \includegraphics[width=1\linewidth]{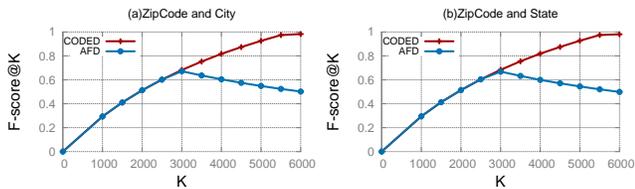} 
  \vspace{-2em}
  \caption{Comparison of CODED and AFD (HOSP)
%   \jn{Add the label of 200 to the end of x-axis.  }
  }\label{fig:cmp-afd}%\vspace{-.5em}
\end{figure}

\noindent \textbf{\underline{Exp-2: Impact of Errors.}}
To illustrate the impact of data errors, we varied the error rates of sorting error, imputation error, and combination error, %respectively, 
and tested how the absolute tau value (i.e., $|\tau|$) changed accordingly. Recall that a larger $|\tau|$ implies a more dependent relationship. We selected two \SCs,  $R \indep B$ (R/B in Figure~\ref{fig:imp_indep} ) and $N\nindep D$ (N/D in Figure~\ref{fig:imp_dep}).
%The results are shown in Figure~\ref{fig:imp_indep} and Figure~\ref{fig:imp_dep}.

Figure~\ref{fig:imp_indep} shows the results w.r.t. an independence \SC. All errors tend to raise $|\tau|$, indicating a violation of the independence \SC. The sorting error has a consistent effect on the tau value (Figure(a)). Imputation error has a strong negative impact at lower error rates, then the tau value reverses (Figure(b)). This can be explained as follows. Starting with two independent columns (i.e., a very small $|\tau|$), imputing a constant value induces more correlation between the two columns (i.e., a larger $|\tau|$). Once many records have been imputed with a constant value, the correlation decreases until one column contains the same constant everywhere, implying zero correlation between them (i.e., $|\tau| = 0$). 

Figure~\ref{fig:imp_dep} shows the results w.r.t. a dependence \SC. As more errors were generated, $|\tau|$ decreased, so the columns were evaluated as more independent. With the error rate at 0.8, $|\tau|$ approached zero, indicating an independence relationship.

\vspace{.25em}

\noindent \textbf{\underline{Exp-3: Evaluation of Error-Detection Approaches.}} We compared CODED with existing error-detection methods on the Boston dataset. For dependence \SCs, we compared with DC and DBoost; for independence \SCs, since DC cannot express independence relationships (cf. Section~\ref{sec:scvsic}), we compared only with DBoost. We considered both marginal and conditional \SCs, as summarized in Table~\ref{tbl:constraints}. %four common forms of \SCs: (1) $R \indep B$; (2)$N \nindep D$; (3) $N \indep B~ | ~T$; (4) $T \nindep B ~|~ C$, as summarized in Table~\ref{tbl:constraints}. %Furthermore, the dependence and independence relationships among columns are from~\cite{Harvard}. 

\noindent \underline{Marginal \SCs: $N \nindep D$ and $R \indep B$.} Figure~\ref{fig:dep_bar_chart} and Figure~\ref{fig:indep_bar_chart} show the results for $N \nindep D$ and $R \indep B$, respectively. We consider three error levels, depending on the average error rate for the $N$ column: \emph{minor error} =  $1\%-20\%$, \emph{moderate error} = $20\%-45\%$, and \emph{major error} = $50\%-80\%$. We reported precision@50. %, and reported the corresponding precision and recall in these figures.

We first examine CODED's performance.  CODED's precision increases with the error level. 
The reason is that when there is a larger portion of errors, the degree of dependence/independence changes more, thus it is easier for CODED to detect violations. %In terms of recall,  CODED's decreases with the error level. This is because that we set $k = 50$ for all error levels. In terms of error types, CODED performed averagely good across all error types. 

%For the error type, three methods perform better under sorting error and combination error, however, they are not that good in detecting error when the error is imputation error. This phenomena is more obvious when it come to CODED with conditional independent \SCs. In fig.~\ref{fig:indep_bar_chart} (b), we can find that all methods didn't perform well compared to other settings. \na{Do we need to explain why figure 9 (b) all methods doesn't perform well? It is like a dramatically down}. One reason is that influence aroused by imputation error was not always 

%When there are minor errors, all three methods fail to detect them above precision$=0.5$ averagely. This implies when there are minor errors, the relationship defined by \SCs will not be changed so much. This means, the influence of entailed by these errors didn't dominate the original one. It is reasonable that all three methods didn't perform well under this setting. With the increase of errors, the precision of all three methods increases as well. 

We next compared the performance of CODED, DC and DBoost. As shown in Figure~\ref{fig:dep_bar_chart} and Figure~\ref{fig:indep_bar_chart}, CODED outperformed the other two approaches. DC did not perform well because the specified denial constraint (i.e., if $r_1[N] > r_2[N]$, then $r_1[D] < r_2[D]$) did not always hold, which led to many false positives. CODED outperformed DBoost due to two reasons. First, DBoost derived correlations from dirty data, and then leveraged the derived correlations to detect errors. However, since data is dirty, the derived correlations might be wrong. Second, DBoost is designed to detect outliers but the dataset has erroneous values (e.g., imputed mean values) that look like a normal value. Thus, DBoost failed to detect these errors. % erroneous values that look like a normal value in the domain.   

%CODED outperforms both DC and DBoost at three different error levels, different error types and different \SCs. \na{Do we need to explain why other methods didn't perform well?}
We compared the F-score of CODED, DC and DBoost using different $K$ values (moderate error level). 
% \oliver{this is okay but the most principled way to vary K is a PR or ROC curve. Unless I misunderstood something}
%To compare CODED, DC and DBoost succinctly, we reported their F-score (the harmonic mean of precision and recall) under different $K$ values (moderate error level).
Results are shown in Figure~\ref{fig:dep_score} and Figure~\ref{fig:indep_score} for $N \nindep D$ and $R \indep B$, respectively. We can see that CODED achieved significant higher F-score than DC and DBoost for all settings. 
%However, we also notice that 
% CODED's performance is not as stable as the other two approaches across different error types.
CODED's performance depends on the error type: It performed better for sorting error and combination error, where the average F-score is $0.6$ and the max F-score is around $0.8$. But for imputation error, the average F-score and the max F-score decrease to $0.5$ and $0.6$, respectively. 
As we explained above, if errors have a small impact on \SCs, the power of using \SCs to detect the errors decreases.
%As observed in exp-2, the influence of imputation error will not 

%With the increase of $K$, f-score results of all three methods increases as well for three different error types. The f-score results of CODED may reach $0.85$ when combination errors are inserted which is much larger than that of DBoost and DC, which are $0.42$ and $0.16$ separately. Generally CODED outperforms other two methods in all settings. This demonstrates that CODED is better to capture the errors in such setting. Notably, CODED employs $NK$ strategy to independence \SCs, which is averagely better than employing $K$ strategy to dependence \SCs. \na{Shall we discuss a little bit why NK is better than K?}One reason is that the errors are systematically generated, which will have equal influence over the entire dataset. This is more observed when coming to independence \SCs. Removing $N-K$ records according to the chosen hypothesis testing method, the rest $K$ records are more suspicious to be erroneous. 

\vspace{.25em}

\noindent \underline{Conditional \SCs:  $T \nindep B ~|~ C$ and $N \indep B~ | ~T$.} We also examined the effectiveness of CODED for {\em conditional} dependence and independence \SCs, with moderate error level.  The results are very similar to non-conditional cases (see Appendix~\ref{app:cond-exp} for more detail). %Though tendency is similar, the level is a little bit lower compared to marginal ones. The best f-score of CODED, DBoost and DC are $0.72$, $0.49$, and $0.26$ separately. 

\vspace{.25em}

% \begin{figure}[t]
%   \centering 
%   \includegraphics[width=1\linewidth]{} 
%   \vspace{-2em}
%   \caption{Effectiveness of error detection methods for categorical data (Car dataset)
% %   \jn{Add the label of 200 to the end of x-axis.  }
%   }\label{fig:score_cat}\vspace{-2em}
% \end{figure}

\noindent\textbf{\underline{Exp-4: Effectiveness on Categorical Data.}} So far, we have only focused on numerical data using the $\tau$ test. Next, we use the $\chi^2$ test and evaluate the effectiveness of CODED on categorical data. The conclusion is that CODED outperformed DBoost in terms of F-score for both independence and dependence \SCs. Due to the space limit, please refer to the Appendix \ref{app:categorical} for more details. 

% We selected two \SCs ($BP\nindep Cl$ and $SA\indep DR$) on the Car dataset. DC is not applicable here because there are too many violations for the feasible DCs that we constructed.
% % \jn{describe how to convert numerical columns to categorical columns.} 
% We compared the performance of CODED and DBoost at the moderate error level.
% % \jn{Explain why we didn't compare with DC.} 
% Figure~\ref{fig:score_cat} shows the results. Due to the space limit, we just focus on the imputation errors in this experiment. The average F-score of CODED and DBoost are $0.45$ and $0.24$, respectively. Similar to Exp-3, CODED outperformed DBoost in terms of F-score for both independence and dependence \SCs.

% \vspace{.25em}

\noindent \textbf{\underline{Exp-5: Effectiveness compared with AFD.}} We compared the effectiveness of \system and AFD on the HOSP dataset. %We selected \textsf{\small Zipcode -> City} and \textsf{\small Zipcode -> State} for AFD, and \textsf{\small Zipcode $\nindep$ City} and \textsf{\small Zipcode $\nindep$ State} for CODED.  Figure~\ref{fig:cmp-afd} shows the result. 
Figure~\ref{fig:cmp-afd}(a) shows the result for  \textsf{\small Zipcode -> City} vs. \textsf{\small Zipcode $\nindep$ City}; Figure~\ref{fig:cmp-afd}(b) shows the result for \textsf{\small Zipcode -> State} vs. \textsf{\small Zipcode $\nindep$ State}. 

%We selected \textsf{\small Zipcode -> City} and \textsf{\small Zipcode -> State} for AFD, and \textsf{\small Zipcode $\nindep$ City} and \textsf{\small Zipcode $\nindep$ State} for CODED. 

We have a number of interesting observations. First, CODED and AFD got the same F-score for $K \leq 3000$. This is because both of them achieved 100\% precision and the same recall when $K \leq 3000$.
%, thus they had the same F-score. 
Second, CODED's F-score continued to grow when $K > 3000$ but AFD's F-score started to decrease. This is because that AFD can only detect the errors on the right-hand side (i.e., \textsf{\small City} or \textsf{\small State}) of an AFD. If an error occurs on the left-hand side (i.e.,  \textsf{\small Zipcode}), AFD cannot detect the errors.
In contrast,  CODED considers the statistical relationship between two columns, therefore it can detect the errors in both columns. 

\begin{figure*}[t]
   \centering 
   \includegraphics[width=0.85\linewidth, height=2.7cm]{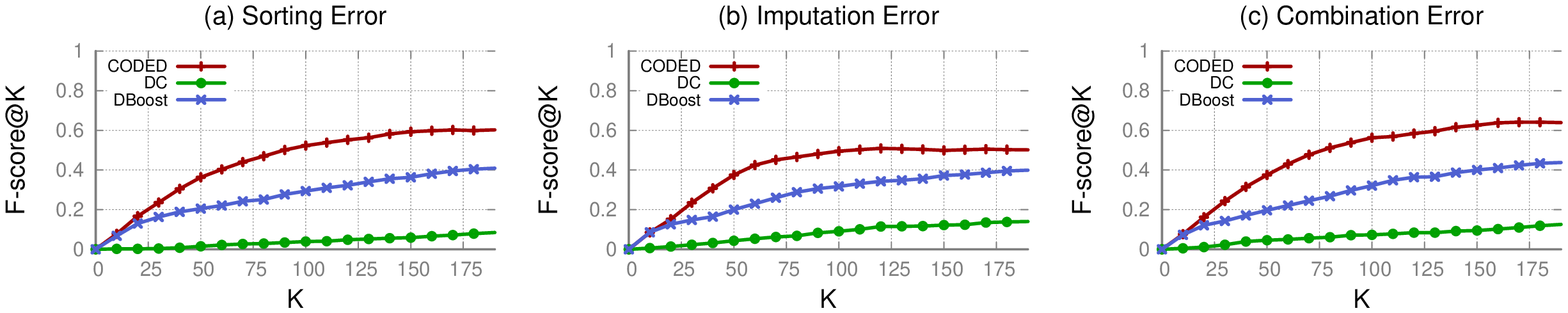} 
   \vspace{-1em}
   \caption{Effectiveness of error detection methods for dependence \SCs by varying $k$ (Boston dataset)}\label{fig:dep_score}
   \centering 
   \vspace{.25em}
   \includegraphics[width=0.85\linewidth, height=2.7cm]{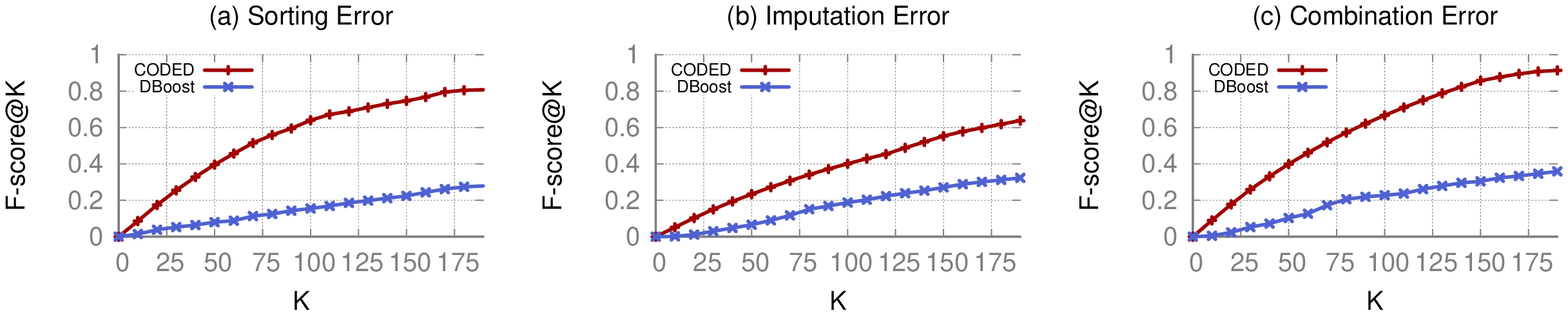} 
   \vspace{-1.5em}
   \caption{Effectiveness of error detection methods for independence \SCs by varying $k$ (Boston dataset)}\label{fig:indep_score}\vspace{-.5em}
\end{figure*}

\begin{figure*}[t]
   \centering 
   \includegraphics[width=0.85\linewidth, height=2.7cm]{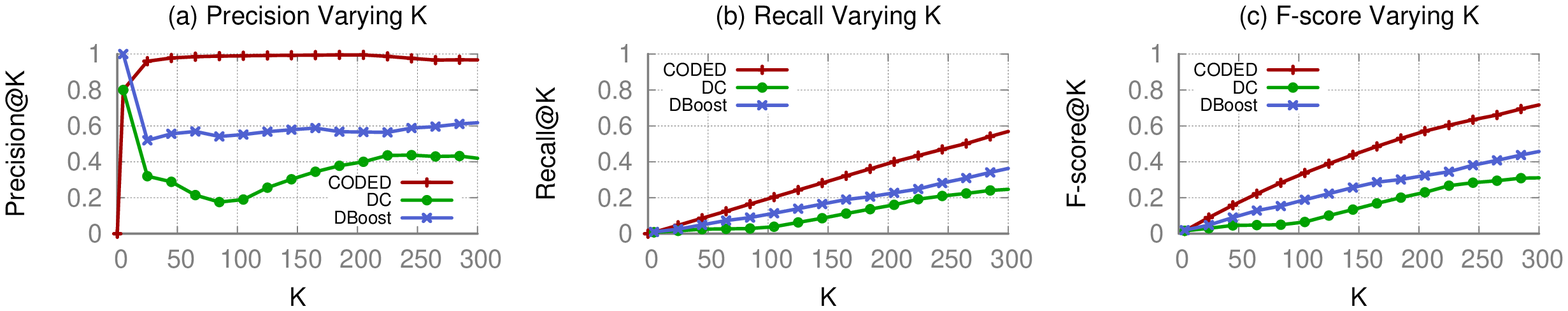}
   \vspace{-1.5em}
   \caption{Detecting real-world errors on the Sensor dataset}\label{fig:sensor_exp}
    \centering 
     \vspace{.25em}
   \includegraphics[width=0.85\linewidth, height=2.7cm]{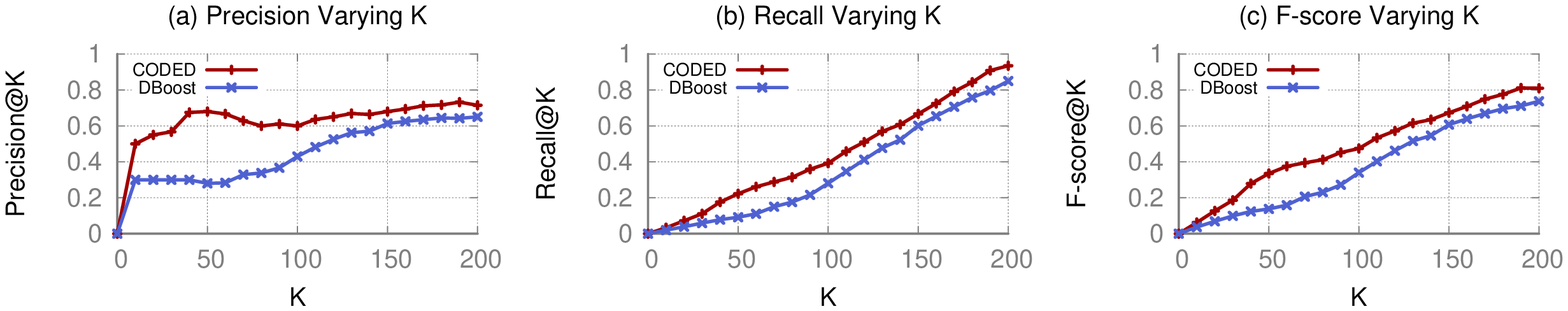}
   \vspace{-1em}
   \caption{Detecting real-world errors on the Hockey dataset}\label{fig:hockey_exp}\vspace{-1em}
\end{figure*}

% \begin{figure}[t]
%   \centering 
%   \includegraphics[width=1\linewidth]{fig/exp/sync_files/var_k_AFD.eps} 
%   \vspace{-2em}
%   \caption{Comparison of CODED and AFD by varying k (HOSP dataset)
% %   \jn{Add the label of 200 to the end of x-axis.  }
%   }\label{fig:cmp-afd}\vspace{-2em}
% \end{figure}

%The reason is that AFD cannot further detect more errors but CODED can still do.  These are the errors which were injected into the \textsf{\small Zipcode} column. 

%For example, suppose clean values are $\textsf{\small Zipcode} \rightarrow \textsf{\small City}_1$ and $\textsf{\small Zipcode}_1 \rightarrow \textsf{\small City}_1$.

%AFD and CDED both achieve good f-score at the beginning. However with the increase of $k$, CODED tends to achieve a much better f-score than AFD did. As HOSPHOLO has erroneous values injected in all columns evenly, however, AFD only is able to capture one column susceptible records. The power of AFD is limited by \emph{Zipcode} column. That's said, AFD can not detect errors of columns beyond the left-hand side of the AFDs. CODED considers the statistical relationship between two columns, therefore it can detect errors in both columns. 

\vspace{.25em}

% in the one used in \cite{chu2013holistic} and the simulated missing value imputation done . %effectiveness on real-world datasets

\vspace{.25em}

\noindent \textbf{\underline{Exp-6: Effectiveness with Real-life Errors.}} We evaluated \system on the datasets with real-life errors. %effectiveness on real-world datasets. 

% \begin{figure*}[t]\vspace{-3.9em}
%   \centering 
%   \includegraphics[width=0.85\linewidth, height=2.7cm]{fig/exp/real_world/real-world.eps}
%   \vspace{-1.5em}
%   \caption{Detecting real-world errors on the Sensor dataset}\label{fig:sensor_exp}
%     \centering 
%     % \vspace{.5em}
%   \includegraphics[width=0.85\linewidth, height=2.7cm]{fig/exp/real_world/real-world-hockey.eps}
%   \vspace{-1.5em}
%   \caption{Detecting real-world errors on the Hockey dataset}\label{fig:hockey_exp}\vspace{-1.5em}
% \end{figure*}

\vspace{.25em}

\noindent\underline{Sensor Dataset ($T_8 \nindep T_9$).} 
% Since the location of sensors will determined by the surrounding temperature, 
Neighboring sensors tend to report similar temperatures, which means that their readings of the temperature should be dependent. %Reversely, those sensors that are far from each other should report independent temperatures. 
We specified a dependence \SC between the readings of Sensor 8 and Sensor 9.   Figure~\ref{fig:sensor_exp} shows the experimental result. The average Precision of CODED, DC, and DBoost was $0.93$, $0.65$ and $0.47$, respectively. Unlike the other two methods, the precision of CODED never decreased.
and maintained a high precision around $0.95$. For Recall and F-score, all three approaches increased continuously. CODED outperformed the other two methods with a higher increasing speed. This experiment validated the effectiveness of CODED when using it to detect real-world errors w.r.t. dependence \SCs.   

%Recall that the dataset has two types of errors. We conclude that CODED has the potential ability to detect diverse error types.

%, that are more complex than that of single type.
%As sensor dataset contains more than one type of error. 

 %We leave the study of employing CODED to multiple types of errors as the future work.

\vspace{.25em}

%the Goal plus-minus (GPM) column contains many missing values from 1998 to 2004. The dataset provider imputed $0$ to replace the missing entries in order to build machine learning models. From domain knowledge, we know that 

\noindent\underline{Hockey Dataset ($G \indep GPM \mid Y$).}  For the Hockey dataset, the columns \emph{Games(G)} and  \emph{Goal Plus-Minus(GPM)} should be independent given \emph{Draft Year(Y)}, because hockey draft studies have shown that the total number of professional games played by a player is independent of their Plus-Minus before they joined the professional league. We specified an independence \SC:  $G \indep GPM \mid Y$.  Figure~\ref{fig:hockey_exp} shows the experimental result. We only compared CODED with DBoost since DC cannot express independencies.  We can see that CODED outperformed DBoost in terms of Precision, Recall, and F-score. In particular, when k = 50, DBoost only got a precision of 0.28, but the precision of CODED was 0.68 (around 2.5 $\times$) higher.  This experiment validated the effectiveness of CODED when using it to detect real-world errors w.r.t. independence \SCs.

% one \emph{Draft Year(Y)} shall be independent, which can be expressed as $G \nindep GD |Y$. 

%However, the imputations made these two columns more dependent. We used CODED and DBoost to detect these errors. The results are shown in Figure~\ref{fig:hockey_exp}. 

\vspace{.25em}

\noindent \textbf{\underline{Exp-7: Scalability.}} Our evaluation provided evidence that error drill down scales well in terms of both $k$ and $n$. 
%We evaluate the scalability of CODED in terms of error drill down. 
Please refer to Appendix~\ref{app:scalibility} for more details.

\vspace{.25em}

\section{Conclusion and Future Work}\label{con}

A statistical constraint (\SC) represents a probabilistic association, or its absence, among columns in a data table. \SCs provide a powerful expressive formalism for capturing a user's domain knowledge. 
This paper explored how to exploit \SCs in data cleaning, by identifying the data errors that lead to their violation. We 
compared %the differences between 
\SCs and traditional integrity constraints, and identified types of situations where \SCs add expressive power: % to \ICs: 
when the user wishes to assert the irrelevance of one set of columns to another, and when the user expects an inferential  relationship between columns to hold not precisely, but only approximately to a certain degree, with exceptions. Our CODED system leverages \SCs for error detection by addressing three challenges: For \SC violation detection, we 
% ound that 
% on a given dataset, an \SC does not hold absolutely, but only to a certain degree.  We 
showed how well-established statistical metrics ($\chi^2$ and Kental's $\tau$) can be used to quantify the degree to which an \SC is violated. To explain violations, 
%For error drill down, 
we proposed an error-drill-down framework, and devised efficient algorithms to identify the top-$k$ records that contribute the most to the violation of an SC. For checking the consistency of a set of input \SCs, we %introduced an efficient consistency-checking algorithm, 
described an inference-based consistency-checking algorithm, which is pseudo-polynomial in terms of the number of input \SCs and the largest number of variables that occurs in any of the input \SCs. We conducted extensive experiments on real-world datasets with both synthetic and real-life errors, as well as a range of constraint types. The results showed that SCs were effective in detecting data errors that violate them, compared to state-of-the-art approaches. %; (2) the consistency algorithm was efficient; (3) the top-k algorithm was scalable w.r.t. k and data size.  

CODED represents a novel approach to leverage powerful statistical methods for error detection. It has great potential to be useful in practice, and opens a new set of research directions in the intersection of statistics and data management. In the future, we plan to extend CODED from two aspects. (1) {\em Human-in-the-Loop.} Integrate the discovery and validation of \SCs to help the user to discover and validate them % \SCs %more 
efficiently. (2) {\em Data Repairing.} Extend CODED  to the error-repairing stage, to automatically repair errors so that the cleaned data satisfies a set of given \SCs.   %conclusion

%\cleardoublepage
%\clearpage
\linespread{1}
\begin{appendix}\label{sec:appendix}

\section{Efficient Implementation for $K$ Strategy}\label{app:efficient-k}

We discuss the efficient Implementation of the $K$ strategy detailed here. As illustrated in Section~\ref{subsec:top-k-algo}, we choose another data structure \emph{segment tree} in the initialization phase so as to greatly reduce the time complexity. Before applying this idea, some pre-processing is needed. We first sort dataset $D=<x_1,y_1>,...<x_n,y_n>$ by $X$ column. Then we scan the new data set to obtain the concordant pairs, and non-concordant pairs. Later, we  insert a segement tree $[y_i,y_i]$ to the segment tree. After the initialization phase, we continue do the iterations as shown in the Algorithm~\ref{alg:eff_tau}. Algorithm~\ref{alg:eff_tau} illustrates the pseudo-code. This efficient implementation is also applicable to $K^c$ strategy.

\section{Scalability.}\label{app:scalibility}
We replicated the Boston dataset to enlarge its data size, and chose a dependence: \SC $N \nindep D$. We examined the execution time of CODED by varying $k$ and $n$ (\# Records), respectively. The results are shown in Figure~\ref{fig:scalability}. 

Recall that the time complexity of CODED (the $k$ strategy) is $\mathcal{O}(n \log n)$ for initialization, and is $\mathcal{O}(kn\log n)$ for selecting $k$ records. The results are consistent with the complexity analysis, and demonstrate the good scalability of CODED w.r.t. $k$ and $n$. 
\begin{figure}[h]
   \centering 
   \includegraphics[width=1\linewidth]{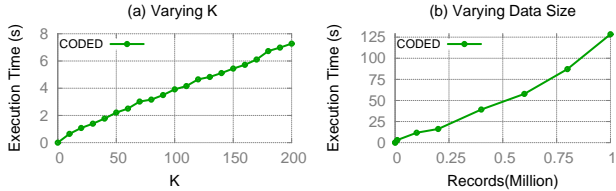} 
   \vspace{-2em}
   \caption{Scalability of CODED (Boston dataset)
%   \jn{Use CODED rather than FastCODED. The x-axis of (b) looks strange. I would suggest to set it to (0.2, 0.4, 0.6, 0.8, 1), and then change the x-label to \# of Records (Million).} 
%   \oliver{is one quantity fixed in the other plot? e.g. K = ? for number of records?}\na{Yes, we gonna wirte these parts in the efficiency section}
   }\label{fig:scalability}
\end{figure}

\begin{algorithm}[h]%\scriptsize
    \caption{\small Efficient $\tau$-test-based error detection algorithm}\label{alg:eff_tau}
    \KwIn{An $\SC=X\indep(\nindep) Y$, Dataset $D=\{<x_1,y_1>,...<x_n,y_n>\}$, $k$}
    \KwOut{$k$ records}
    $T \leftarrow \emptyset$; //{\small  Segment Tree} \\
    $Q \leftarrow \emptyset$; // {\small Priority Queue} \\
    $R \leftarrow \emptyset$; // {\small  Returned List} \\
    % $W \leftarrow \emptyset$; // {\small A hash table that maps a pair to its weight} \\
    $\textsf{benefit}(<x_i,y_i>) = 0$ for $<x_i,y_i> \in D$ \;
    Sort $D$ by $X$ column value \;
    // Initialization \\
    \For{$<x_i,y_i> \in D$} 
    {
    $n_c = T.query([-\infty,y_i])$\;
    $n_d = T.query([y_i,+\infty])$\;
    $\textsf{benefit}(<x_i,y_i>) = 2n_c + (|D|-n_c-n_d)$\;
    $T.insert([y_i,y_i])$\;
    $Q.push(<x_i,y_i>,\textsf{benefit}(<x_i,y_i>)$\;
    }
    % \For{$r_i \in D$}
    % {
    % \For{$r_{i+1} \in D$}
    % {
    %  \If{$r_i, r_j$ \emph{are concordant}}
    %  {
    %   $\textsf{benefit}(r_i)$ += 2 \;
    %   $\textsf{benefit}(r_{i+1})$ += 2 \;
    %   $W(r_i,r_{i+1}) = 2$ \;
    %  }
    %  \ElseIf{$r_i, r_j$ \emph{are tied}}
    %  {
    %   $\textsf{benefit}(r_i)$ += 1 \;
    %   $\textsf{benefit}(r_{i+1})$ += 1 \;
    %   $W(r_i,r_{i+1}) = 1$ \;
    %  }
    %  $Q.push(\langle r_i, \textsf{benefit}(r_i) \rangle)$; 
    % }
    % }
    
    // Iteration \\
    \For{$i = 1$ to $k$}{
        Add $Q.top()$ to $R$\;
        Update $Q;$ // {\small Update the weight of each record $\textsf{benefit}(<x_i,y_i>) \in Q$ by querying the segment tree $T$} \\
    
    %// Removing \\
    %$rs = Q.top()$ \;
    %$D.remove(rs)$ \;
    %$Relation.remove(rs)$ \;
    %$R.add(rs)$ \;
    %$k + = 1$
    }
   {\bf return} R\;
\end{algorithm}

\begin{figure*}[t]%\vspace{-3.9em}
   \centering 
   \includegraphics[width=0.9\linewidth]{fig/exp/sync_files/dep_var_k_condition.eps} \vspace{-1em}
   \caption{Effectiveness of error detection methods for conditional dependence \SCs (Boston dataset, $K=50$)}\label{fig:dep_score_con}
\vspace{-.5em}
   \centering 
   \includegraphics[width=0.9\linewidth]{fig/exp/sync_files/indep_var_k_condition.eps}\vspace{-1em}
   \caption{Effectiveness of error detection methods for conditional independence \SCs (Boston dataset, $K=50$)}
   \label{fig:indep_score_con}
\end{figure*}

\section{Effectiveness on Conditional Independence \SCs}\label{app:cond-exp}
\noindent \underline{$T \nindep B ~|~ C$ and $N \indep B~ | ~T$.} We also examined the effectiveness of CODED for {\em conditional} dependence and independence \SCs, with moderate error level. Constraints used for CODED, DC and DBoost are summarized in Table~\ref{tab:con-summaries}. To accommodate the variables of constraints to CODED framework, we further bucket the value of \emph{Tax Rate} into categorical values. We categorize the numerical value into $6$ buckets following its distribution. We reported their F-score under different $K$ values at moderate error level, which is similar to Exp-3. Results are shown in Figure~\ref{fig:dep_score_con} and Figure~\ref{fig:indep_score_con} respectively. CODED obtained a higher F-score than another two state-of-art methods, which is even more significant under independence  \SCs. Also, similar to Exp-3 and Exp-4, the detection effectiveness of imputation error is not as stable as that of another two types errors. The average F-score of CODED under independence \SCs is $0.55$ and with max F-score $0.69$, however the two results will decrease to $0.43$ and $0.51$ with imputation error inserted. This again demonstrates that the CODED can detect the power of errors as if the power is not as significant, CODED is not that powerful.
% \oliver{the first reference has no author. Ref 5 has multiple years and conferences}

% As we explained above, if the errors have a small impact on \SCs, the power of using \SCs to detect the error decreases.

\begin{figure}[t]
  \centering 
  \includegraphics[width=1\linewidth]{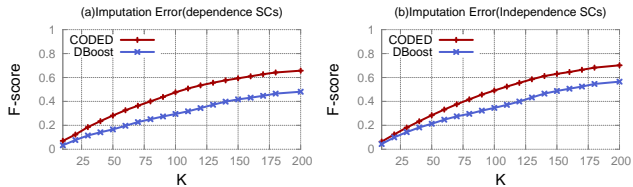} 
  \vspace{-2em}
  \caption{Effectiveness of error detection methods for categorical data (Car dataset)
%   \jn{Add the label of 200 to the end of x-axis.  }
  }\label{fig:score_cat}\vspace{-.5em}
\end{figure}

\section{Effectiveness on Categorical Data.}\label{app:categorical}
We use the $\chi^2$ test and evaluate the effectiveness of CODED on categorical data. We selected two \SCs ($BP\nindep Cl$ and $SA\indep DR$) on the Car dataset. DC is not applicable here because there are too many violations for the feasible DCs that we constructed.
% \jn{describe how to convert numerical columns to categorical columns.} 
We compared the performance of CODED and DBoost at the moderate error level.
% \jn{Explain why we didn't compare with DC.} 
Figure~\ref{fig:score_cat} shows the results. Due to the space limit, we just focus on the imputation errors in this experiment. The average F-score of CODED and DBoost are $0.45$ and $0.24$, respectively. Similar to Exp-3, CODED outperformed DBoost in terms of F-score for both independence and dependence \SCs.

\vspace{.25em}

\end{appendix}   %appendix

{\bibliographystyle{abbrv}
\bibliography{ref.bib}

\begin{thebibliography}{10}

\bibitem{dirty-data-cost}
{The Four V's of Big Data}.
\newblock Accessed: 2018-02-28.

\bibitem{abedjan2016detecting}
Z.~Abedjan, X.~Chu, D.~Deng, R.~C. Fernandez, I.~F. Ilyas, M.~Ouzzani,
  P.~Papotti, M.~Stonebraker, and N.~Tang.
\newblock Detecting data errors: Where are we and what needs to be done?
\newblock {\em Proceedings of the VLDB Endowment}, 9(12):993--1004, 2016.

\bibitem{abiteboul1995foundations}
S.~Abiteboul, R.~Hull, and V.~Vianu.
\newblock {\em Foundations of databases: the logical level}.
\newblock Addison-Wesley Longman Publishing Co., Inc., 1995.

\bibitem{DBLP:conf/sigmod/BergmanMNT15}
M.~Bergman, T.~Milo, S.~Novgorodov, and W.~C. Tan.
\newblock Query-oriented data cleaning with oracles.
\newblock In {\em {ACM} {SIGMOD}}, pages 1199--1214, 2015.

\bibitem{bohannon2005cost}
P.~Bohannon, W.~Fan, M.~Flaster, and R.~Rastogi.
\newblock A cost-based model and effective heuristic for repairing constraints
  by value modification.
\newblock In {\em Proceedings of the 2005 ACM SIGMOD international conference
  on Management of data}, pages 143--154, 2005.

\bibitem{bohannon2007conditional}
P.~Bohannon, W.~Fan, F.~Geerts, X.~Jia, and A.~Kementsietsidis.
\newblock Conditional functional dependencies for data cleaning.
\newblock In {\em Data Engineering, 2007. ICDE 2007. IEEE 23rd International
  Conference on}, pages 746--755. IEEE, 2007.

\bibitem{DBLP:conf/sigmod/BohannonFFR05}
P.~Bohannon, M.~Flaster, W.~Fan, and R.~Rastogi.
\newblock A cost-based model and effective heuristic for repairing constraints
  by value modification.
\newblock In {\em SIGMOD}, pages 143--154, 2005.

\bibitem{chiang2008discovering}
F.~Chiang and R.~J. Miller.
\newblock Discovering data quality rules.
\newblock {\em Proceedings of the VLDB Endowment}, 1(1):1166--1177, 2008.

\bibitem{chiang2011unified}
F.~Chiang and R.~J. Miller.
\newblock A unified model for data and constraint repair.
\newblock In {\em Data Engineering (ICDE), 2011 IEEE 27th International
  Conference on}, pages 446--457. IEEE, 2011.

\bibitem{Chickering2002}
D.~M. Chickering and C.~Meek.
\newblock Finding optimal {B}ayesian networks.
\newblock In {\em UAI}, pages 94--102, 2002.

\bibitem{chu2016data}
X.~Chu, I.~F. Ilyas, S.~Krishnan, and J.~Wang.
\newblock Data cleaning: Overview and emerging challenges.
\newblock In {\em Proceedings of the 2016 International Conference on
  Management of Data}, pages 2201--2206. ACM, 2016.

\bibitem{chu2013discovering}
X.~Chu, I.~F. Ilyas, and P.~Papotti.
\newblock Discovering denial constraints.
\newblock {\em Proceedings of the VLDB Endowment}, 6(13):1498--1509, 2013.

\bibitem{chu2013holistic}
X.~Chu, I.~F. Ilyas, and P.~Papotti.
\newblock Holistic data cleaning: Putting violations into context.
\newblock In {\em Data Engineering (ICDE), 2013 IEEE 29th International
  Conference on}, pages 458--469. IEEE, 2013.

\bibitem{chu2015katara}
X.~Chu, J.~Morcos, I.~F. Ilyas, M.~Ouzzani, P.~Papotti, N.~Tang, and Y.~Ye.
\newblock Katara: A data cleaning system powered by knowledge bases and
  crowdsourcing.
\newblock In {\em Proceedings of the 2015 ACM SIGMOD International Conference
  on Management of Data}, pages 1247--1261. ACM, 2015.

\bibitem{Cowell2006}
R.~G. Cowell, P.~Dawid, S.~L. Lauritzen, and D.~J. Spiegelhalter.
\newblock {\em Probabilistic networks and expert systems: Exact computational
  methods for Bayesian networks}.
\newblock Springer Science \& Business Media, 2006.

\bibitem{croux2010influence}
C.~Croux and C.~Dehon.
\newblock Influence functions of the spearman and kendall correlation measures.
\newblock {\em Statistical methods \& applications}, 19(4):497--515, 2010.

\bibitem{das2008anomaly}
K.~Das, J.~Schneider, and D.~B. Neill.
\newblock Anomaly pattern detection in categorical datasets.
\newblock In {\em Proceedings of the 14th ACM SIGKDD international conference
  on Knowledge discovery and data mining}, pages 169--176. ACM, 2008.

\bibitem{DBLP:conf/kdd/DasS07}
K.~Das and J.~G. Schneider.
\newblock Detecting anomalous records in categorical datasets.
\newblock In {\em Proceedings of the 13th {ACM} {SIGKDD} International
  Conference on Knowledge Discovery and Data Mining, San Jose, California, USA,
  August 12-15, 2007}, pages 220--229, 2007.

\bibitem{dasu2012statistical}
T.~Dasu and J.~M. Loh.
\newblock Statistical distortion: Consequences of data cleaning.
\newblock {\em Proceedings of the VLDB Endowment}, 5(11):1674--1683, 2012.

\bibitem{dawid1979conditional}
A.~P. Dawid.
\newblock Conditional independence in statistical theory.
\newblock {\em Journal of the Royal Statistical Society. Series B
  (Methodological)}, pages 1--31, 1979.

\bibitem{fagin1977multivalued}
R.~Fagin.
\newblock Multivalued dependencies and a new normal form for relational
  databases.
\newblock {\em ACM Transactions on Database Systems (TODS)}, 2(3):262--278,
  1977.

\bibitem{fredricks2007relationship}
G.~A. Fredricks and R.~B. Nelsen.
\newblock On the relationship between spearman's rho and kendall's tau for
  pairs of continuous random variables.
\newblock {\em Journal of statistical planning and inference},
  137(7):2143--2150, 2007.

\bibitem{Garey1979}
M.~R. Garey and D.~S. Johnson.
\newblock {\em Computers and Intractability: A Guide to the Theory of
  NP-Completeness}.
\newblock W. H. Freeman Co., 1979.

\bibitem{pearl2}
D.~Geiger and J.~Pearl.
\newblock Logical and algorithmic properties of conditional independence and
  graphical models.
\newblock {\em The Annals of Statistics}, pages 2001--2021, 1993.

\bibitem{Hall2009}
M.~Hall, E.~Frank, G.~Holmes, B.~Pfahringer, P.~Reutemann, and I.~H. Witten.
\newblock The weka data mining software: an update.
\newblock {\em SIGKDD Explorations}, 11(1):10--18, 2009.

\bibitem{Harvard}
D.~Harrison and D.~L. Rubinfeld.
\newblock Hedonic housing prices and the demand for clean air.
\newblock {\em Journal of environmental economics and management},
  5(1):81--102, 1978.

\bibitem{Heckerman2000}
D.~Heckerman, D.~M. Chickering, C.~Meek, R.~Rounthwaite, C.~Kadie, and
  P.~Kaelbling.
\newblock Dependency networks for inference, collaborative filtering, and data
  visualization.
\newblock {\em Journal of Machine Learning Research}, 1:49--75, 2000.

\bibitem{hellerstein2008quantitative}
J.~M. Hellerstein.
\newblock Quantitative data cleaning for large databases.
\newblock {\em United Nations Economic Commission for Europe (UNECE)}, 2008.

\bibitem{howell2009statistical}
D.~C. Howell.
\newblock {\em Statistical methods for psychology}.
\newblock Cengage Learning, 2009.

\bibitem{yeye18auto}
Z.~Huang and Y.~He.
\newblock Auto-detect: Data-driven error detection in tables.
\newblock In {\em Proceedings of the 2018 ACM SIGMOD International Conference
  on Management of data}. ACM, 2018.

\bibitem{DBLP:journals/ftdb/IlyasC15}
I.~F. Ilyas and X.~Chu.
\newblock Trends in cleaning relational data: Consistency and deduplication.
\newblock {\em Foundations and Trends in Databases}, 5(4):281--393, 2015.

\bibitem{ilyas2004cords}
I.~F. Ilyas, V.~Markl, P.~Haas, P.~Brown, and A.~Aboulnaga.
\newblock Cords: automatic discovery of correlations and soft functional
  dependencies.
\newblock In {\em Proceedings of the 2004 ACM SIGMOD international conference
  on Management of data}, pages 647--658. ACM, 2004.

\bibitem{jeffery2006declarative}
S.~R. Jeffery, G.~Alonso, M.~J. Franklin, W.~Hong, and J.~Widom.
\newblock Declarative support for sensor data cleaning.
\newblock In {\em International Conference on Pervasive Computing}, pages
  83--100. Springer, 2006.

\bibitem{knight1966computer}
W.~R. Knight.
\newblock A computer method for calculating kendall's tau with ungrouped data.
\newblock {\em Journal of the American Statistical Association},
  61(314):436--439, 1966.

\bibitem{DBLP:conf/icdt/KolahiL09}
S.~Kolahi and L.~V.~S. Lakshmanan.
\newblock On approximating optimum repairs for functional dependency
  violations.
\newblock In {\em Database Theory - {ICDT} 2009, 12th International Conference,
  St. Petersburg, Russia, March 23-25, 2009, Proceedings}, pages 53--62, 2009.

\bibitem{krishnan2016activeclean}
S.~Krishnan, M.~J. Franklin, K.~Goldberg, J.~Wang, and E.~Wu.
\newblock Activeclean: An interactive data cleaning framework for modern
  machine learning.
\newblock In {\em Proceedings of the 2016 International Conference on
  Management of Data}, pages 2117--2120. ACM, 2016.

\bibitem{krishnan2015sampleclean}
S.~Krishnan, J.~Wang, M.~J. Franklin, K.~Goldberg, T.~Kraska, T.~Milo, and
  E.~Wu.
\newblock Sampleclean: Fast and reliable analytics on dirty data.
\newblock {\em IEEE Data Eng. Bull.}, 38(3):59--75, 2015.

\bibitem{lopes2000efficient}
S.~Lopes, J.-M. Petit, and L.~Lakhal.
\newblock Efficient discovery of functional dependencies and armstrong
  relations.
\newblock In {\em International Conference on Extending Database Technology},
  pages 350--364. Springer, 2000.

\bibitem{mandros2017discovering}
P.~Mandros, M.~Boley, and J.~Vreeken.
\newblock Discovering reliable approximate functional dependencies.
\newblock In {\em Proceedings of the 23rd ACM SIGKDD International Conference
  on Knowledge Discovery and Data Mining}, pages 355--363. ACM, 2017.

\bibitem{mariet2016outlier}
Z.~Mariet, R.~Harding, S.~Madden, et~al.
\newblock Outlier detection in heterogeneous datasets using automatic tuple
  expansion.
\newblock 2016.

\bibitem{mayfield2010eracer}
C.~Mayfield, J.~Neville, and S.~Prabhakar.
\newblock Eracer: a database approach for statistical inference and data
  cleaning.
\newblock In {\em Proceedings of the 2010 ACM SIGMOD International Conference
  on Management of data}, pages 75--86. ACM, 2010.

\bibitem{Niepert2013}
M.~Niepert, M.~Gyssens, B.~Sayrafi, and D.~Van~Gucht.
\newblock On the conditional independence implication problem: A
  lattice-theoretic approach.
\newblock {\em Artificial Intelligence}, 202:29--51, 2013.

\bibitem{Pearl2000}
J.~Pearl.
\newblock {\em Causality: Models, Reasoning, and Inference}.
\newblock Cambridge university press, 2000.

\bibitem{pearson1900x}
K.~Pearson.
\newblock X. on the criterion that a given system of deviations from the
  probable in the case of a correlated system of variables is such that it can
  be reasonably supposed to have arisen from random sampling.
\newblock {\em The London, Edinburgh, and Dublin Philosophical Magazine and
  Journal of Science}, 50(302):157--175, 1900.

\bibitem{DBLP:journals/pvldb/ProkoshynaSCMS15}
N.~Prokoshyna, J.~Szlichta, F.~Chiang, R.~J. Miller, and D.~Srivastava.
\newblock Combining quantitative and logical data cleaning.
\newblock {\em {PVLDB}}, 9(4):300--311, 2015.

\bibitem{DBLP:conf/vldb/RamanH01}
V.~Raman and J.~M. Hellerstein.
\newblock Potter's wheel: An interactive data cleaning system.
\newblock In {\em VLDB}, pages 381--390, 2001.

\bibitem{DBLP:journals/pvldb/RekatsinasCIR17}
T.~Rekatsinas, X.~Chu, I.~F. Ilyas, and C.~R{\'{e}}.
\newblock Holoclean: Holistic data repairs with probabilistic inference.
\newblock {\em {PVLDB}}, 10(11):1190--1201, 2017.

\bibitem{riahi2015model}
F.~Riahi and O.~Schulte.
\newblock Model-based outlier detection for object-relational data.
\newblock In {\em Computational Intelligence, 2015 IEEE Symposium Series on},
  pages 1590--1598. IEEE, 2015.

\bibitem{rosset2010medical}
S.~Rosset, C.~Perlich, G.~{\'S}wirszcz, P.~Melville, and Y.~Liu.
\newblock Medical data mining: insights from winning two competitions.
\newblock {\em Data Mining and Knowledge Discovery}, 20(3):439--468, 2010.

\bibitem{DBLP:conf/sigmod/RoyS14}
S.~Roy and D.~Suciu.
\newblock A formal approach to finding explanations for database queries.
\newblock In {\em {SIGMOD}}, pages 1579--1590, 2014.

\bibitem{salimi2018hypdb}
B.~Salimi, C.~Cole, P.~Li, J.~Gehrke, and D.~Suciu.
\newblock Hypdb: a demonstration of detecting, explaining and resolving bias in
  olap queries.
\newblock {\em Proceedings of the VLDB Endowment}, 11(12):2062--2065, 2018.

\bibitem{DBLP:conf/sigmod/SalimiGS18}
B.~Salimi, J.~Gehrke, and D.~Suciu.
\newblock Bias in {OLAP} queries: Detection, explanation, and removal.
\newblock In {\em {ACM SIGMOD}}, pages 1021--1035, 2018.

\bibitem{scott2015multivariate}
D.~W. Scott.
\newblock {\em Multivariate density estimation: theory, practice, and
  visualization}.
\newblock John Wiley \& Sons, 2015.

\bibitem{Spirtes2000}
P.~Spirtes, C.~Glymour, and R.~Scheines.
\newblock {\em Causation, Prediction, and Search}.
\newblock MIT Press, 2000.

\bibitem{studeny1990conditional}
M.~Studeny.
\newblock Conditional independence relations have no finite complete
  characterization.
\newblock 1990.

\bibitem{uther1998tree}
W.~T. Uther and M.~M. Veloso.
\newblock Tree based discretization for continuous state space reinforcement
  learning.
\newblock In {\em Aaai/iaai}, pages 769--774, 1998.

\bibitem{DBLP:conf/sigmod/WangDM15}
X.~Wang, X.~L. Dong, and A.~Meliou.
\newblock Data x-ray: {A} diagnostic tool for data errors.
\newblock In {\em Proceedings of the 2015 {ACM} {SIGMOD} International
  Conference on Management of Data, Melbourne, Victoria, Australia, May 31 -
  June 4, 2015}, pages 1231--1245, 2015.

\bibitem{wang2017qfix}
X.~Wang, A.~Meliou, and E.~Wu.
\newblock Qfix: Diagnosing errors through query histories.
\newblock In {\em Proceedings of the 2017 ACM International Conference on
  Management of Data}, pages 1369--1384. ACM, 2017.

\bibitem{wasserman2013all}
L.~Wasserman.
\newblock {\em All of statistics: a concise course in statistical inference}.
\newblock Springer Science \& Business Media, 2013.

\bibitem{wu2013scorpion}
E.~Wu and S.~Madden.
\newblock Scorpion: Explaining away outliers in aggregate queries.
\newblock {\em Proceedings of the VLDB Endowment}, 6(8):553--564, 2013.

\bibitem{wyss2001fastfds}
C.~Wyss, C.~Giannella, and E.~Robertson.
\newblock Fastfds: A heuristic-driven, depth-first algorithm for mining
  functional dependencies from relation instances extended abstract.
\newblock In {\em International Conference on Data Warehousing and Knowledge
  Discovery}, pages 101--110. Springer, 2001.

\bibitem{xu2013comparative}
W.~Xu, Y.~Hou, Y.~Hung, and Y.~Zou.
\newblock A comparative analysis of spearman's rho and kendall's tau in normal
  and contaminated normal models.
\newblock {\em Signal Processing}, 93(1):261--276, 2013.

\bibitem{yakout2013don}
M.~Yakout, L.~Berti-{\'E}quille, and A.~K. Elmagarmid.
\newblock Don't be scared: use scalable automatic repairing with maximal
  likelihood and bounded changes.
\newblock In {\em Proceedings of the 2013 ACM SIGMOD International Conference
  on Management of Data}, pages 553--564. ACM, 2013.

\bibitem{yeye18sync}
C.~Yan and Y.~He.
\newblock Synthesizing type-detection logic for rich semantic data types using
  open-source code.
\newblock In {\em Proceedings of the 2018 ACM SIGMOD International Conference
  on Management of data}. ACM, 2018.

\end{thebibliography}
}

\end{document}